\title{Approximation Algorithms for \(\ell_p\)-Shortest Path and \(\ell_p\)-Group Steiner Tree}
\author{Yury Makarychev%
\thanks{Toyota Technological Institute at Chicago. Email: \texttt{yury@ttic.edu}. Supported by NSF Awards CCF-1955173, CCF-1934843, and ECCS-2216899.}
\and Max Ovsiankin%
\thanks{Toyota Technological Institute at Chicago. Email: \texttt{maxov@ttic.edu}. Partially supported by the Institute for Data, Econometrics, Algorithms, and Learning (IDEAL) with NSF Grant ECCS-2216899.}
\and Erasmo Tani%
\thanks{University of Chicago. Email: \texttt{etani@uchicago.edu}. Partially supported by the Institute for Data, Econometrics, Algorithms, and Learning (IDEAL) with NSF Grant ECCS-2216912.}}
\date{\today}
\begin{document}

\maketitle
\begin{abstract}
We present polylogarithmic approximation algorithms for variants of the Shortest Path, Group Steiner Tree, and Group ATSP problems with vector costs. In these problems, each edge $e$ has a vector cost $c_e\in\bbR_{\geq 0}^\ell$. For a feasible solution -- a path, subtree, or tour (respectively) -- we find the total vector cost of all the edges in the solution and then compute the $\ell_p$-norm of the obtained cost vector (we assume that $p\geq 1$ is an integer). Our algorithms for series-parallel graphs run in polynomial time and those for arbitrary graphs run in quasi-polynomial time.

To obtain our results, we introduce and use new flow-based Sum-of-Squares relaxations. 
We also obtain a number of hardness results.
\end{abstract}

\newpage
\tableofcontents
\newpage

\section{Introduction}
In this work, we study robust versions of network design problems. In the $\ell_p$-Shortest Path problem, we are given \(p \ge 1\), a graph $G=(V,E)$ with vector-valued edge costs  $c_e\in  \R_{\geq 0}^{\ell}$, and two vertices $s$ and $t$; the goal is to find a path $P$ from $s$ to $t$ in $G$ that minimizes the following cost:
\[
    \cost_{\ell_p}(P) = \norm{\sum_{e\in P}c_e}_p
\]

This problem is a natural generalization of the classical shortest path problem, but surprisingly has not received much attention till recently. 
The problem has been studied for $p=\infty$ under the name Robust Shortest Path. 
Aissi, Bazgan and Vanderpooten~\cite{aissi2007approximation} used dynamic programming to obtain a fully polynomial-time approximation scheme for the case when the number of coordinates $\ell$ is a constant and $p=\infty$ (this result generalizes to other $p$). Kasperski and Zieli\'nski~\cite{kasperski2009approximability} proved that $\ell_\infty$-Shortest Path is hard to approximate within $\log^{1-\eps} \ell$ for all \(\eps > 0\) unless $\mathrm{NP} \subseteq \mathrm{DTIME}(n^{\operatorname{polylog} n})$.
More recently, the same authors~\cite{kasperski2018approximating} gave an $O(\sqrt{n \log \ell / \log\log \ell})$-approximation to the problem by rounding a flow-based linear programming relaxation and proved that their LP has the integrality gap of $\Omega(\sqrt{n})$. In a recent breakthrough, Li, Xu, and Zhang \cite{li2023polylogarithmic} gave an $O(\log n \log \ell)$-approximation algorithm for $\ell_{\infty}$-Shortest Path with running time quasi-polynomial in the size of the input instance. In particular, their algorithm is the first polylogarithmic approximation known to date. They also show that the same approximation guarantees can be obtained in polynomial time for graphs of bounded treewidth, and they give a polynomial time $O(d \log \ell)$ approximation algorithm for series-parallel graphs, where $d$ is the depth (order) of the series-parallel decomposition of the input graph. No results were known for any other exponents $p\in(1,\infty)$. However, it is trivial to get an $\ell^{1-1/p}$-approximation by solving standard shortest path with edge costs $\|c_e\|_1$.

Introducing vector-valued costs to the graph's edges allows this model to capture a number of different applications. First, it allows us to describe a situation in which different parties, each corresponding to a different coordinate of the cost vectors, incur different cost when an edge is added to the solution. 
In this interpretation, as $p \to \infty$, the problem will increasingly favor paths in which every party simultaneously incurs small cost. Alternatively, each coordinate of the cost vectors may represent the cost incurred in terms of a different resource. This model would then allow one to balance minimizing the total amount of resources spent and ensuring that no single resource is depleted. Furthermore, one can think of this problem as providing an avenue for modeling robustness of a solution in the presence of uncertainty. Each coordinate would then represent the cost incurred by adding an edge in a distinct possible scenario, and the value of the $\ell_p$-Shortest Path problem would amount to a trade-off between average and worst-case cost among all scenarios. Finally, this problem generalizes congestion minimization in directed graphs (a fact that we prove in \Cref{sec:hardness-from-congestion-minimization}).

\paragraph{Our results for the $\ell_p$-Shortest Path problem}
In this paper, we introduce a natural flow-based sum-of-squares (SoS) relaxation for $\ell_p$-Shortest Path (\Cref{sec:sos-relaxation}) and present approximation algorithms for all integer $p\geq 1$. 

First, we give an $O(pd^{1-1/p})$-approximation algorithm for the problem running in $n^{O(p)}$ time for series-parallel graphs of depth/order $d$ (\Cref{sec:rounding}). We do this by considering a natural rounding algorithm for the SoS relaxation.
We prove the following theorem:
\begin{theorem}\label{thm:intro-series-parallel-guarantees}
There exists an approximation algorithm for the $\ell_p$-Shortest Path problem in series-parallel graphs that, given a series-parallel graph $G$ of order/depth $d$ and parameters $p\in \bbZ_{\geq 1}$ and $\varepsilon \in(0, 1)$,  finds a
$(1+\varepsilon)\bell_d(p)^{1/p}= O(pd^{1-1/p})$  approximation in time $m^{O(p)}/\varepsilon^{O(1)}$ (which is polynomial time when $p$ and $\varepsilon$ are fixed). Here, $\bell_d(p)$ is the $p$-{th} $d$-dimensional Bell number.\footnote{We provide a review of Bell numbers in the preliminaries.}
\end{theorem}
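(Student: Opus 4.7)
The plan is to formulate a degree-$\Theta(p)$ sum-of-squares relaxation whose decision variables are Boolean edge indicators $x_e$ constrained to form an $s$-$t$ flow, with objective $\sum_{i=1}^{\ell} \tilde{\mathbb{E}}\bigl[\bigl(\sum_{e} c_e(i)\, x_e\bigr)^p\bigr]$. Since the objective is linear in the degree-$p$ pseudo-moments, the SDP has size $m^{O(p)}$ and can be solved to additive accuracy $\varepsilon$ in time $m^{O(p)}/\varepsilon^{O(1)}$, which contributes the $(1+\varepsilon)$ factor in the stated bound.

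For rounding, I would walk down the series-parallel decomposition tree of $G$. At a leaf I output the single edge; at a series node I recurse on each child and concatenate the two paths; at a parallel node with branches $B_1, \dots, B_k$ I sample a branch $b$ with probability equal to its fractional flow $f_b$ read off from the SoS solution, and then recurse on $B_b$ using the pseudo-distribution obtained by conditioning the SoS solution on the event ``the flow passes through $B_b$''. This conditioning is the standard SoS operation that sends $\tilde{\mathbb{E}}[\,\cdot\,]$ to $\tilde{\mathbb{E}}[\,\cdot\,\mathbf{1}_{B_b}]/\tilde{\mathbb{E}}[\mathbf{1}_{B_b}]$.

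The analysis I have in mind is an induction on the depth $d$ showing $\mathbb{E}\bigl[\|\cost_{\ell_p}(P)\|_p^p\bigr] \leq \bell_d(p) \cdot \mathrm{SoS}^\star$, where $\mathrm{SoS}^\star$ is the relaxation optimum. Expanding $\bigl(\sum_{e \in P} c_e(i)\bigr)^p$ gives a sum over ordered $p$-tuples $(e_1,\dots,e_p)$ of path edges; at each of the $d$ decomposition levels, such a tuple induces a partition of $[p]$ grouping indices whose edges lie in the same sub-branch, so descending the decomposition produces a chain of successively refining ordered partitions of $[p]$, and the count of such chains is precisely what the $d$-dimensional Bell number is designed to track. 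Taking the $p$-th root at the end converts this combinatorial bound into the claimed $O(pd^{1-1/p})$ approximation via a short asymptotic estimate $\bell_d(p) = O(p^p d^{p-1})$.

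The main obstacle and combinatorial heart of the proof is the parallel step: committing to a single branch $B_b$ chosen with probability $f_b$ inflates the contribution of $p$-tuples confined to $B_b$ by a factor $f_b^{-(p-1)}$ relative to the unconditioned distribution, and this blowup must be absorbed by invoking positivity of the pseudo-expectation on appropriate sum-of-squares polynomials (ensuring the conditioned pseudo-distribution still satisfies the flow constraints and has controlled $p$-th moment) together with the defining recurrence of the $d$-dimensional Bell numbers relating $\bell_d(p)$ to the values $\bell_{d-1}(k)$ for $k \leq p$. Lining these factors up exactly with the inductive hypothesis, and checking that the series case contributes no extra multiplicative overhead so that the recursion picks up Bell mass only at parallel levels, is where I expect the technical work to concentrate.
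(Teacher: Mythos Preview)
Your rounding procedure is equivalent to the paper's (the paper walks edge-by-edge from $s$ sampling proportionally to $\psE[x_e]$, which in a series-parallel graph induces exactly the branch-sampling you describe), and your SoS relaxation is the right one. However, your analysis has the two cases reversed, and this is a genuine gap, not a cosmetic one.

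The parallel step is the \emph{easy} case. The flow constraints of the degree-$2p$ SoS relaxation already force $\psE[x_{e_1}x_{e_2}\cdot h]=0$ whenever $e_1,e_2$ lie in distinct parallel branches (Claim~\ref{claim:flow-basics}, item~2). Consequently $\psE_B[f_B^r]=\sum_b \psE_B[f_{B_b}^r]=\sum_b \frac{p_{B_b}}{p_B}\,\psE_{B_b}[f_{B_b}^r]$: the SoS objective itself decomposes as the same convex combination over branches that your algorithm samples from, so the $f_b^{-(p-1)}$ inflation you worry about never appears, and the approximation factor passes through parallel nodes unchanged. This is also why the paper's definition of order $d$ counts only \emph{series} compositions.

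The series step is where all the Bell mass accumulates. When $B$ is the series composition of $B_1,\dots,B_t$, the multinomial expansion of $\cost(P_B)^r=(\sum_i\cost(P_{B_i}))^r$ produces cross-terms $\prod_i \cost(P_{B_i})^{\alpha_i}$; independence of the $P_{B_i}$ together with the inductive hypothesis turns this into a sum over partitions $\lambda\vdash r$ of $\binom{r}{\lambda}\prod_i\bell_{h-1}(\lambda_i)\,\psE_B[f_{B_{j_i}}^{\lambda_i}]$. The step you are missing---and the technical heart of the proof---is the bound $\prod_i \psE_B[f_B^{\lambda_i}]\le\psE_B[f_B^r]$. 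This is not a generic SoS positivity fact; it is a \emph{majorization inequality for pseudo-expectations} (Lemma~\ref{lemma:ps_major}), proved by iterating the pseudo-Lyapunov inequality $\psE[f^q]^2\le\psE[f^{q+1}]\psE[f^{q-1}]$ along a chain of T-transforms from $(r,0,\dots,0)$ down to $\lambda$. Only after this bound does the Bell recurrence of Lemma~\ref{claim:bell-recurrence} close the induction at $\bell_h(r)$. Your plan ``check that the series case contributes no extra multiplicative overhead'' would simply fail at these cross-terms.
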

For graphs of small series-parallel order/depth $d\leq \log^* p$, the approximation factor is $\bell_d^{1/p} \leq O(p/\log^{(d)}p)$.  
Remarkably, in a complementary analysis given in \Cref{sec:tightness-of-analysis}, we show that the approximation factor $\bell_d(p)^{1/p}$ is tight for our rounding scheme. 
In all algorithms, we assume that $\ell$ is at most polynomial in $n$ (if not, the running times will also depend on $\ell$).

Then, we give a $O(p\log^{1-1/d} n)$-approximation for arbitrary graphs (\Cref{sec:general}), obtaining the following theorem:
\begin{theorem}\label{thm:intro-general-graphs-guarantees}
    There exists an approximation algorithm for the $\ell_p$-Shortest Path problem in arbitrary graphs that, given a graph $G$ and parameters $p\in\bbZ_{\geq 1}$ and $c\in(0,1/2)$, finds a $cp\log^{1-1/p}n$ approximation in time $m^{c e^{O(1/c)} \log n} = m^{O_c(\log n)}$. 
\end{theorem}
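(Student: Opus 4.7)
The plan is to lift the series-parallel algorithm of \Cref{thm:intro-series-parallel-guarantees} to general graphs via a recursive enumeration of intermediate vertices on the optimal $s$-$t$ path, producing a \emph{virtual} series-parallel decomposition of depth $D = \Theta(c \log n)$. Each of the $D$ recursion levels will pay an enumeration cost of $m^{O(1/c)}$; the resulting virtual depth $D$ is what plugs into the $\bell_d(p)^{1/p}$-type approximation bound of \Cref{thm:intro-series-parallel-guarantees} to give the claimed approximation $O(cp\log^{1-1/p}n)$.

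Concretely, I would fix a branching parameter $k = 2^{\Theta(1/c)}$ and, at the top level, enumerate all $(k-1)$-tuples $(v_1,\dots,v_{k-1})$ of vertices that could serve as evenly spaced breakpoints on the optimum path $P$, i.e.\ chosen so that each of the $k$ subpaths between consecutive $v_i$'s contains at most $\lceil |P|/k\rceil$ edges. There are at most $m^{O(k)} = m^{e^{O(1/c)}}$ such tuples. For each tuple, the problem decomposes into $k$ independent $\ell_p$-Shortest Path subproblems sharing the same cost vectors, which I would recurse on, halting after $D = O(\log_k n) = O(c\log n)$ levels. The entire enumeration tree describes a candidate series-parallel decomposition of the eventual $s$-$t$ path of depth $D$.

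On the SoS side, I would use the flow-based relaxation of \Cref{sec:sos-relaxation} but associate moments with the virtual series-parallel structure, and round layer-by-layer following \Cref{sec:rounding}. Applying the analysis of \Cref{thm:intro-series-parallel-guarantees} at $d = D$ yields an approximation of $O(p D^{1-1/p}) = O(p(c\log n)^{1-1/p})$, which simplifies to the claimed $O(cp\log^{1-1/p}n)$ after absorbing constants depending on $c\in(0,1/2)$. The running time is $(m^{O(k)})^D = m^{O(c\,e^{O(1/c)}\log n)}$, as claimed.

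The main obstacle I anticipate is making the virtual decomposition interact correctly with the Bell-number-based moment bounds: the rounding analysis for honest series-parallel graphs uses the fact that series-parallel blocks are edge-disjoint, whereas our recursion may reuse vertices and edges across the $k$ subproblems at a given level. One must therefore argue either that the SoS relaxation can be set up to forbid such reuse (e.g.\ by splitting breakpoint vertices or adding flow constraints at them) or, alternatively, that any repeated visit can be shortcut without increasing the $\ell_p$-cost, exploiting the nonnegativity of the vector costs. A secondary subtlety is to ensure that the breakpoint enumeration produces balanced subproblems for the \emph{optimum} path without knowing it in advance -- this is routine by enumerating all $m^{O(k)}$ tuples -- and that the pseudoexpectation guarantees of the SoS solution are preserved after this conditioning, so that the per-level approximation factor does not compound beyond the depth-$D$ Bell-number bound.
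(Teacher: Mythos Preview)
Your high-level picture is right and is essentially the black-box route the paper itself sketches at the start of \Cref{sec:general}: blow the graph up into a series-parallel graph of depth $D=\Theta(c\log n)$ by enumerating $k-1=\Theta(e^{1/c})$ evenly spaced breakpoints at each level, then invoke \Cref{thm:intro-series-parallel-guarantees}. This does yield the approximation factor $O(p\,D^{1-1/p})=O(cp\log^{1-1/p}n)$ you state. But your running-time accounting is incomplete and does not deliver the bound in the theorem. The quantity $(m^{O(k)})^{D}$ you compute is only the \emph{size} of the virtual series-parallel graph (equivalently, the number of nodes in your enumeration tree); it is not the time to solve the problem on it. Running the degree-$2p$ SoS relaxation of \Cref{sec:sos-relaxation} on a graph with $M=m^{\Theta(ce^{1/c}\log n)}$ edges costs $M^{O(p)}=m^{O(p\cdot ce^{1/c}\log n)}$, which for $p$ as large as $\Theta(\log n)$ is $m^{O_c(\log^2 n)}$ rather than the $m^{O_c(\log n)}$ the theorem asserts. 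The paper explicitly notes that the black-box reduction gives $n^{O(p\log n)}$ and then sets out to improve it.

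The paper avoids this multiplicative blow-up by a genuinely different mechanism: it never enumerates breakpoint tuples. It first reduces to a layered graph with $\Delta=n-1$ layers, solves a \emph{single} SoS relaxation on that $O(nm)$-edge graph of degree $2\bigl(p+(a+1)\lceil\log_{a+1}\Delta\rceil\bigr)$ with $a=\lceil e^{1/c}\rceil$, and then \emph{samples} the $a$ intermediate edges at each recursion level from the conditional pseudo-expectation $\psE[\,\cdot\mid h_A]$, conditioning on the edges already chosen (\Cref{alg:sampling,alg:rounding-general}). The recursion tree plays exactly the role of your virtual series-parallel decomposition in the analysis (\Cref{lem:cost-analysis-layered} mirrors \Cref{lem:cost-analysis}), but because the SoS lives on the original layered graph, its degree is $O(p)+O(ce^{1/c}\log n)$ \emph{additively}, giving time $m^{O(p+ce^{1/c}\log n)}=m^{O_c(\log n)}$. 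That additive-versus-multiplicative distinction in the exponent is precisely the idea your proposal is missing. As a side remark, your edge-reuse worry is resolved automatically by the layering step (edges in different layers are distinct copies), not by shortcutting; the substantive gap is the sampling-via-conditioning idea, not disjointness.
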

Note that when $p = \log \ell$, this result yields an $O(c\log n \log \ell)$ approximation. This is very similar to the approximation guarantee of $O(\log n \log k)$ by Li, Xu, and Zhang, except that in our algorithm $c$ can be an arbitrarily small constant.

\begin{remark} As previously discussed, an \(\ell^{1-1/p}\)-approximation is trivial to achieve in polynomial time by solving standard shortest path with edge costs \(\|c_e\|_1\). On the other hand, for each fixed \(\ell'\) there is a polynomial-time approximation scheme (PTAS) for \(\ell_p\)-Shortest Path in \(\ell'\) dimensions that runs in time \(n^{O(\ell')}\). We now explain how to combine these two results. Fix $\delta \in (1/\ell, 1/2)$. Divide the coordinates of the cost vectors \(c_e \in \R^\ell\) into \(\ell'= \lceil 1/\delta\rceil\) groups, each of size at most \(k = \lceil \delta\ell \rceil\)
and then add up the coordinates in each group. For each cost vector $c_e \in \mathbb R^\ell$, we obtain a new vector $c_e' \in \mathbb R^{\ell'}$. Costs $c'_e$ approximate costs $c_e$ within a factor of $k^{1-1/p}$ in the following sense: for every path $P$, 
\begin{equation}
    \Bigl\|\sum_{e\in P} c_e\Bigr\|_p\leq \Bigl\|\sum_{e\in P} c_e'\Bigr\|_p \leq k^{1-1/p} \Bigl\|\sum_{e\in P} c_e\Bigr\|_p .\label{eq:remark}
\end{equation}
Using the PTAS, we solve the problem with costs $c'_e$ and by~(\ref{eq:remark}) get a $(1+\varepsilon) k^{1-1/p}$ approximation to the original problem. 
We conclude that there exists an approximation algorithm that finds an $O((\delta\ell)^{1-1/p})$ approximation in time $n^{O(1/\delta)}$ (for every $\delta\in (1/\ell, 1/2)$).
\end{remark}

In the course of analyzing our algorithms, we prove a new majorization inequality for pseudo-expectations (see \Cref{sec:majorization-inequalities}) generalizing pseudo-expectation Lyapunov's  and H\"older inequality (for the latter see~\cite[arXiv version]{BKS14}). We believe this result to be of independent interest.

\paragraph{Hardness results} We also complement the analysis above with several hardness results for $\ell_p$-Shortest Path. First, in \Cref{sec:hardness-from-congestion-minimization}, we give a reduction showing that the problem of congestion minimization can be reduced to the $\ell_\infty$-Shortest Path problem. This simultaneously speaks to the broad applicability of $\ell_p$-Shortest Path and implies hardness for the $\ell_\infty$-version of the problem, following a result of Chuzhoy and Khanna~\cite{chuzhoy2006hardness}. 
\begin{theorem}\label{thm:intro-hardness-from-congestion-minimization}
    The $\ell_\infty$-Shortest Path problem is hard to approximate within an $\Omega(\log n/\log\log n)$-factor unless $\mathrm{NP} \subseteq \mathrm{ZPTIME}(n^{\log \log n})$.
\end{theorem}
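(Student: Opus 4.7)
The plan is to exhibit a polynomial-time, approximation-preserving reduction from congestion minimization in directed graphs to $\ell_\infty$-Shortest Path, and then invoke the inapproximability result of Chuzhoy and Khanna~\cite{chuzhoy2006hardness}. Given a congestion-minimization instance $(G, \{(s_i, t_i)\}_{i=1}^{k})$ with $|E(G)|=m$, I would build a graph $G'$ by taking $k$ vertex-disjoint copies $G^{1}, \ldots, G^{k}$ of $G$ and, for each $i=1, \ldots, k-1$, adding a single zero-cost ``bridge'' edge from the copy of $t_i$ in $G^{i}$ to the copy of $s_{i+1}$ in $G^{i+1}$. Set the source $s'$ to be the copy of $s_1$ in $G^{1}$ and the sink $t'$ to be the copy of $t_k$ in $G^{k}$. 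Cost vectors lie in $\mathbb{R}^{m}$, indexed by the edges of $G$: every copy of an original edge $\bar e \in E(G)$ receives the standard basis vector $\mathbf{1}_{\bar e}$, while bridge edges receive the zero vector.

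The correctness is a direct structural observation. Because the bridges form the only inter-copy edges and are oriented forward, any simple $s'$--$t'$ path $P$ in $G'$ decomposes uniquely as $P_1 \cdot b_1 \cdot P_2 \cdots b_{k-1} \cdot P_k$, where $b_i$ is the $i$-th bridge edge and each $P_i$ is a simple $s_i$--$t_i$ path in $G$. The $\bar e$-coordinate of $\sum_{e \in P} c_e$ is exactly $|\{i : \bar e \in P_i\}|$, i.e., the load induced on $\bar e$ by the routing $\{P_i\}_{i=1}^{k}$. Consequently $\cost_{\ell_\infty}(P)$ equals the congestion of $\{P_i\}$, so any $\alpha$-approximation for $\ell_\infty$-Shortest Path on $(G', s', t', c)$ yields an $\alpha$-approximation for the original instance, and vice versa (given an optimal routing, concatenate the chosen paths through the bridges).

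Chuzhoy and Khanna~\cite{chuzhoy2006hardness} show that congestion minimization on an $N$-vertex directed graph cannot be approximated within $\Omega(\log N / \log\log N)$ unless $\mathrm{NP} \subseteq \mathrm{ZPTIME}(N^{\log\log N})$. My reduction produces a graph with $N = O(nk)$ vertices, and one may assume $k \le n^{O(1)}$, so $\log N = \Theta(\log n)$ and $\log\log N = \Theta(\log\log n)$; the $\Omega(\log n/\log\log n)$ inapproximability therefore transfers under the same complexity assumption, yielding the theorem.

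I expect the only real care to be in bookkeeping: making sure Chuzhoy--Khanna's hardness applies to the precise formulation I reduce to (many source-sink pairs, simple-path routings), confirming that the cost-vector dimension $\ell = m$ remains polynomial in the new graph's size so that the running time of any hypothetical approximation algorithm stays under the assumed bound, and verifying that the same assumption $\mathrm{NP} \subseteq \mathrm{ZPTIME}(N^{\log\log N})$ is what drives their lower bound. All of these are standard, and I do not foresee any genuine obstacle.
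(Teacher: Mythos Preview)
Your proposal is correct and essentially identical to the paper's own argument: the paper also takes $k$ disjoint copies of $G$, stitches $t_i$ in copy $i$ to $s_{i+1}$ in copy $i{+}1$ by a zero-cost edge, assigns each copy of an original edge $e$ the standard basis vector in $\mathbb{R}^{|E(G)|}$, and observes the bijection between $s'$--$t'$ paths of $\ell_\infty$-cost $c$ and routings of congestion $c$, then invokes Chuzhoy--Khanna.
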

This theorem slightly strengthens the $\Omega(\log^{1-\eps} \ell)$-hardness of approximation result by
 Kasperski and Zieli\'nski~\cite{kasperski2009approximability}.
Finally, in \Cref{sec:hardness-from-CVP}, we show that allowing the entries of the cost vectors to be negative makes the problem substantially harder. We do this by giving a reduction from the Closest Vector problem in lattices to this (potentially negative costs) version of the $\ell_p$-Shortest Path problem.
Below, \(\Omega_p\) hides a constant depending on \(p\).
\begin{theorem}\label{thm:intro-hardness-from-CVP}
     For every $p \in [1,\infty]$, it is \(\mathrm{NP}\)-Hard to approximate the $\ell_p$-Shortest Path problem allowing negative edge costs within a factor of $n^{\Omega_p(1 / \log \log n)}$.
\end{theorem}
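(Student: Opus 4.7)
The plan is to reduce from an appropriate bounded-coefficient variant of the Closest Vector Problem (CVP) in the $\ell_p$ norm, which is NP-hard to approximate within a factor of $n^{\Omega(1/\log\log n)}$ by the results of Dinur, Kindler, Raz, and Safra (and extensions to all $\ell_p$). I would work with a CVP instance $(B, u)$ where $B = [b_1, \ldots, b_m] \in \mathbb{Z}^{n \times m}$, $u \in \mathbb{Z}^n$, and the optimal coefficient vector $x^\star$ is guaranteed to satisfy $\|x^\star\|_\infty \le N$ for some $N$ polynomial in the input size.

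Given such an instance, I would construct a directed graph $G$ on vertices $v_0, \ldots, v_{m+1}$ with source $s = v_0$ and sink $t = v_{m+1}$. For each $i \in [m]$ and each integer $k \in \{-N, -N+1, \ldots, N\}$, include a parallel edge from $v_{i-1}$ to $v_i$ of vector cost $k \cdot b_i \in \mathbb{R}^n$; this gives an ``edge-$i$ gadget'' of $2N+1$ parallel edges. Finally, add one last edge from $v_m$ to $t$ with vector cost $-u$. Any $s$-$t$ path selects exactly one edge from each gadget together with the final edge; if the chosen coefficients are $x_1, \ldots, x_m$, then its edge-cost vectors sum to $Bx - u$, so the $\ell_p$-cost of the path is exactly $\|Bx - u\|_p$. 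Conversely, every $x \in \{-N, \ldots, N\}^m$ arises from some $s$-$t$ path of exactly that cost, so the reduction is an exact cost-preserving bijection between $s$-$t$ paths and bounded integer combinations. Any $\rho$-approximation for negative-cost $\ell_p$-Shortest Path therefore yields a $\rho$-approximation for the bounded-coefficient CVP instance.

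The resulting graph has $O(mN) = \mathrm{poly}(\text{input size})$ vertices and edges, and the vector dimension $\ell = n$ is also polynomial, so the CVP hardness factor $n^{\Omega(1/\log\log n)}$ transfers to an $(n')^{\Omega_p(1/\log\log n')}$ factor for $\ell_p$-Shortest Path in the graph size $n'$, yielding the theorem. The construction is dimension-agnostic and works verbatim for $p = \infty$ using $\|\cdot\|_\infty$ throughout.

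The main obstacle is justifying the use of a CVP hardness result with polynomially bounded coefficients. I would address this either by (i) citing the fact that the standard NP-hardness reductions for CVP (DKRS and subsequent strengthenings) already produce instances whose optimal solutions have polynomially-bounded integer coefficients, or by (ii) binary-encoding the coefficients: replace each edge-$i$ gadget with $O(\log N)$ sub-gadgets in series, each a three-edge parallel gadget of costs $0$, $+2^j b_i$, and $-2^j b_i$, so that the signed sum over sub-gadgets realizes every integer coefficient in $[-N, N]$. This blows up the graph by only a $\log N$ factor while preserving the exact cost equality, so the hardness bound is not affected.
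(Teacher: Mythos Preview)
Your reduction is essentially the paper's: a series graph whose parallel-edge gadgets encode integer coefficients of the basis vectors, together with a single edge carrying $-\mathbf{u}$, so that $s$--$t$ paths correspond exactly to vectors $B\mathbf{x}-\mathbf{u}$. The one substantive point is how the coefficient bound $N$ is obtained. Your option~(ii) does not by itself remove the need for a bound---binary encoding only shrinks the graph from $O(N)$ to $O(\log N)$, so you still need $\log N$ polynomial in the input---and your option~(i) outsources this to an inspection of the DKRS construction. The paper instead proves the bound intrinsically for \emph{any} full-rank integer CVP instance: since $\|B\mathbf{x}^\star-\mathbf{u}\|\le\|\mathbf{u}\|$ and $\mathbf{x}^\star=(B^\top B)^{-1}B^\top(B\mathbf{x}^\star)$, a Cramer-type estimate on the entries of $(B^\top B)^{-1}$ (using $\det(B^\top B)\ge 1$) yields $\log\|\mathbf{x}^\star\|_\infty=O(d\log(nd)+d\log M)$. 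With that in hand, the paper uses exactly your binary encoding idea (each block offers edges $\pm 2^j\mathbf{v}_k$ and a zero edge, with $W=dT$ blocks in series). So your plan is correct and on the same track; the one missing ingredient is this self-contained pseudoinverse bound, which is where most of the work in the paper's proof actually lies.
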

\begin{remark}
The requirement that all the coordinates of cost vectors $c_e$ are non-negative can be slightly relaxed when $p=2$. For our algorithms to work, it is sufficient that all pairwise inner products of cost vectors are non-negative. That is, instead of requiring that the Gram matrix of cost vectors is completely positive, we can only require that it is doubly non-negative. 
\end{remark}

\paragraph{From shortest path to network design} In network design problems, one is
given a graph \(G = (V, E)\) with non-negative edge costs \(c_e \geq 0\), and wishes to find a subgraph \(F =(V_F, E_F)\) that minimizes the total cost \(\sum_{e \in E_F} c_e \) subject to some connectivity constraints.
By varying the set of allowed subgraphs \(F\), this paradigm encapsulates many central and well-studied network design problems, including the Survivable Network Design, Steiner Forest, Steiner Tree, and Minimum Spanning Tree.
In this paper we explore two network design problems, Group Steiner Tree and Asymetric Traveling Salesperson (ATSP). We first recall the Group Steiner Tree problem.

\begin{problem}[Group Steiner Tree] Given a weighted undirected graph $G = (V,E,c)$, as well as $k$ subsets $R_1, \dots, R_k$ of $V$, find a minimum-cost subtree $T$ of $G$ containing at least one vertex from each $R_i$.
\end{problem}

We then introduce an analogous group variant of ATSP:

\begin{problem}[Group ATSP] Given a weighted directed graph $G=(V,E,c)$ and a collection of subsets $R_1, \dots, R_k$ of \(V\), find a minimum-cost tour that visits at least one vertex in each $R_i$. 
\end{problem}

As in the case of Shortest Path, it is natural to ask whether we can approximately solve $\ell_p$ versions of other network design problems efficiently. Prior work has been done in this area. Hamacher and Ruhe~\cite{hamacher1994spanning} studied $\ell_\infty$-Minimum Spanning Tree, and proved that it is NP-complete. Following that, the complexity of the problem has been nearly completely settled: Chekuri, Vondrák, and Zenklusen~\cite{chekuri2010dependent} presented an $O(\log \ell/ \log\log \ell)$-approximation algorithm, while Kasperski and Zielinski~\cite{kasperski2011approximability} (also see \cite{kasperski2016robust}, Table 1) proved an $\Omega(\log^{1-\varepsilon} \ell)$-hardness of approximation for every $\varepsilon >0$, unless all problems in NP can be solved in quasi-polynomial time. Laddha, Singh and Vempala~\cite{laddha2022socially} studied the $\ell_\infty$-version of a subclass of network design problems which encompasses the Generalized Steiner Network problem, and gave a polynomial-time $\ell$-approximation algorithm for it.




\paragraph{Our results for $\ell_p$-Group Steiner Tree and $\ell_p$-Group ATSP} We consider the $\ell_p$-version of the Group Steiner Tree and the Group ATSP problems.

In \Cref{sec:ATSP-and-Steiner-Tree}, we refine the SoS relaxation for $\ell_p$-Shortest Path to obtain approximation algorithms for $\ell_p$-Group ATSP and $\ell_p$-Group Steiner Tree, and thus obtain approximation algorithms for these problems as well. In particular, we prove the following two results:

\begin{theorem}\label{thm:intro-atsp-guarantees}  
There exists an approximation algorithm for $\ell_p$-Group ATSP that given graph $G$, groups $R_i$, and parameters $p\in\bbZ_{\geq 0}$ and $c\in(0,1/2)$ finds a $c^2 p  \log^{2-1/p} n \log k$ approximation in time $m^{O(p)+ce^{O(1/c)} \log n}= m^{O_c(\log n)}$. We assume that $k$ is at most polynomial in $n$. 
\end{theorem}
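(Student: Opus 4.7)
}
The plan is to lift the $\ell_p$-Shortest Path algorithm of \Cref{thm:intro-general-graphs-guarantees} to Group ATSP by augmenting the SoS relaxation with group-covering constraints and combining the $\ell_p$-SoS rounding with a Garg--Konjevod--Ravi (GKR) style randomized rounding. The target approximation decomposes multiplicatively as $(cp\log^{1-1/p} n)\cdot (c\log n\log k)$: the first factor is the cost of $\ell_p$-SoS rounding on a height-reduced instance, and the second factor accounts for the Li--Xu--Zhang-style height reduction together with the GKR group-covering rounding.

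First I would write a flow-based SoS relaxation for $\ell_p$-Group ATSP analogous to the one used for $\ell_p$-Shortest Path, but with the variables encoding a pseudo-distribution over closed walks rather than $s$-$t$ paths. The constraints are (i) flow conservation at every vertex, (ii) for each group $R_j$, the pseudo-expected flow entering $R_j$ is at least $1$, and (iii) per-coordinate cost budgets $B_i^p$ whose sum upper-bounds the $p$-th power of the optimum $\ell_p$-cost. A useful simplification is that, since edge costs are non-negative, any connected subgraph covering all groups can be doubled and Eulerian-traversed into a tour of $\ell_p$-cost at most twice the subgraph's cost (doubling scales every coordinate by $2$); this reduces $\ell_p$-Group ATSP to an $\ell_p$-Group Steiner Tree-like connectivity problem with only a constant-factor loss.

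Next I would apply the layered height-reduction from the proof of \Cref{thm:intro-general-graphs-guarantees} to obtain a tree-like instance of depth $O((1/c)\log n)$; by the series-parallel analysis of \Cref{thm:intro-series-parallel-guarantees}, $\ell_p$-SoS rounding on such an instance loses a factor $O(cp\log^{1-1/p} n)$, and the construction of the decomposition contributes the runtime term $m^{ce^{O(1/c)}\log n}$. I would interleave this rounding with a GKR-style recursive randomized rounding that descends through the layers and selects children so as to satisfy the group-covering pseudo-expectations; a standard analysis gives an $O(\log k)$ factor for group covering, and the $O(\log n)$ depth of the decomposition contributes another factor. Multiplying all three pieces yields $O(c^2 p \log^{2-1/p} n \log k)$ as claimed, and the SoS solver at degree $O(p)$ contributes the remaining $m^{O(p)}$ to the runtime.

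The main obstacle I anticipate is cleanly merging the GKR-style group rounding with the $\ell_p$-SoS rounding while preserving the pseudo-expectation majorization inequalities from \Cref{sec:majorization-inequalities} that are used to charge $\ell_p$-cost layer by layer: since the GKR randomized selection alters the conditional pseudo-distribution on edges, one must verify that the resulting pseudo-expectations still obey the relevant H\"older/Lyapunov-type bounds invoked in the rounding analysis. A secondary technical point is the asymmetric/directed structure of ATSP, which is absorbed by the doubling reduction described above and by working with a symmetric relaxation on the undirected support of the tour.
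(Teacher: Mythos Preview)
Your plan has two genuine gaps relative to the paper's argument.

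First, the doubling/symmetrization step is in the wrong direction and does not work for ATSP. Group ATSP is a \emph{directed} problem; doubling the edges of a connected subgraph does not produce an Eulerian directed multigraph, so you cannot extract a tour that way. The paper does not reduce ATSP to a Steiner-tree-like problem; it works directly with closed walks by constructing a layered graph with $s=t$ and $\Delta=nk-1$ layers, together with special ``check-in'' edges that encode group visits. (The reduction you describe is used in the opposite direction, to derive the \emph{undirected} Group Steiner Tree result from Group ATSP.)

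Second, and more importantly, the paper does not interleave the SoS rounding with any GKR-style group-covering rounding. It runs the \emph{same} layered-graph rounding as for $\ell_p$-Shortest Path, unchanged, after adding to the SoS relaxation the equality constraints $h_{R_i}=1$ and $h_{R_i}^2=1$ for each group. The group-coverage guarantee then falls out of the cost analysis you already have: applying Lemma~\ref{lem:cost-analysis-layered} with $a_e=\mathbb{1}[e\in R_i]$ and $r=2$ gives $\E{\xi^2}\le O(c\log\Delta)\cdot\psE[h_{R_i}^2]=O(c\log\Delta)$, where $\xi$ is the number of $R_i$-edges on the sampled path; since $\E{\xi}=\psE[h_{R_i}]=1$, the Paley--Zygmund inequality yields $\Pr[\xi>0]\ge\Omega(1/(c\log\Delta))$. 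One then samples $T=\Theta(c\log\Delta\log k)$ independent tours and concatenates them; a union bound over the $k$ groups gives constant success probability, and the $\ell_p$-cost of the concatenation picks up the factor $T$, producing the claimed $c^2p\log^{2-1/p}n\log k$. In particular, the ``main obstacle'' you anticipate---preserving the majorization inequalities under a GKR-style modification of the conditional pseudo-distribution---never arises, because the rounding is not modified at all; the second-moment constraint $h_{R_i}^2=1$ is what makes the existing machinery do the work.
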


\begin{theorem}\label{thm:intro-steiner-tree-guarantees}
There exists an approximation algorithm for
    the $\ell_p$-Group Steiner Tree problem in undirected graphs that given a graph $G$, groups $R_i$, and parameters $p\in\bbZ_{\geq 1}$ and $c\in(0,1/2)$ finds a $c^2 p  \log^{2-1/p} n \log k$ approximation in time $m^{c e^{O(1/c)} \log n}=m^{O_c(\log n)}$. We assume that  $k$ is at most polynomial in $n$. 
\end{theorem}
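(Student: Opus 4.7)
The plan is to combine the hierarchical/embedding machinery of \Cref{thm:intro-general-graphs-guarantees} with a Garg-Konjevod-Ravi (GKR) style recursive rounding of the Group Steiner Tree SoS relaxation. The approximation factor $c^{2}p\log^{2-1/p}n\log k$ splits naturally as a product: a factor $cp\log^{1-1/p}n$ arises from reducing the general-graph $\ell_p$ instance to a hierarchical one (inherited from \Cref{thm:intro-general-graphs-guarantees}), and a factor $c\log n\log k$ comes from the GKR rounding on the hierarchy, mirroring the classical $\log n\log k$ GKR guarantee with an extra $c$ reflecting the depth of the decomposition. The running time $m^{ce^{O(1/c)}\log n}$ is inherited from the hierarchical reduction; no extra $m^{O(p)}$ factor is needed since the undirected Steiner Tree does not require the series-parallel subroutine used for ATSP.

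First I would refine the flow-based SoS relaxation of \Cref{sec:sos-relaxation} for the Group Steiner Tree setting: introduce edge-indicator pseudo-variables $x_e$, and for each group $R_i$ multicommodity flow pseudo-variables $f^{(i)}_e$ supporting one unit of flow from a designated root $r$ into $R_i$ with $f^{(i)}_e\leq x_e$. The SoS objective is the pseudo-expectation of $\lVert\sum_e c_e x_e\rVert_p^p$. Every valid Steiner tree yields an integer feasible point, so the SoS optimum lower-bounds the true optimum. The choice of the root can be handled by enumeration over $V$ or by adding an auxiliary super-root connected to all vertices by zero-cost edges.

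Next, as in \Cref{thm:intro-general-graphs-guarantees}, I would embed $G$ probabilistically into a hierarchical structure (a tree or a low-depth series-parallel graph whose depth is controlled by $c$) that approximates the $\ell_p$-cost metric up to an expected factor $cp\log^{1-1/p}n$. The SoS pseudo-expectation transports through this embedding while preserving its lower-bound property. On the hierarchical instance I would then apply a GKR-style recursive rounding: traversing the hierarchy top-down, each child edge is retained independently with probability equal to the conditional pseudo-marginal flow, and $\Theta(\log k)$ independent copies are taken in union so that every group is reached with high probability by a union bound over the $k$ groups. Together with the depth-of-hierarchy factor governed by $c$, this yields the $c\log n\log k$ multiplicative loss.

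The main obstacle is the $\ell_p$-cost analysis of the rounded solution. The standard GKR argument bounds $\ell_1$-cost via linearity of expectation, but here we must control $\lVert\sum_e c_e X_e\rVert_p$ with $X_e$ the (correlated) rounding indicators; moreover, the $\Theta(\log k)$ oversampling makes the $\ell_p$-cost of the union interact nontrivially with the per-copy guarantees. This is precisely where the pseudo-expectation majorization inequalities from \Cref{sec:majorization-inequalities} will be used: they should allow us to bound the $p$-th moment of the rounded $\ell_p$-cost by the corresponding pseudo-moment of the SoS solution, which when combined with the embedding loss yields the stated approximation guarantee.
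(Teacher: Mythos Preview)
Your approach is quite different from the paper's, and it has real gaps.

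The paper does not do anything GKR-like and does not use any tree or hierarchical embedding. Its proof of this theorem is a two-line reduction to $\ell_p$-Group ATSP (\Cref{thm:intro-atsp-guarantees}): replace each undirected edge by two antiparallel directed edges, run the Group ATSP algorithm, then drop directions and cycle edges to get a tree; conversely an Euler tour of a tree gives a tour of at most twice the $\ell_p$-cost, so optima match within a factor $2$. All the work happens inside the ATSP theorem, whose rounding is the layered divide-and-conquer algorithm of \Cref{sec:general} together with a Paley--Zygmund argument showing each group is hit with probability $\Omega(1/(c\log\Delta))$, then $\Theta(c\log\Delta\log k)$ independent tours are concatenated.

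Two concrete gaps in your plan:
\begin{itemize}
\item The ``embedding'' step you invoke does not exist in this paper. The $cp\log^{1-1/p}n$ factor in \Cref{thm:intro-general-graphs-guarantees} does not come from embedding into a tree or hierarchical structure; it comes from analyzing a recursive conditioning/rounding algorithm directly on a layered version of $G$. There is no statement here (or elsewhere that I know of) that vector-cost $\ell_p$ instances embed into low-depth hierarchies with that distortion, and pseudo-expectations do not ``transport through an embedding'' in any sense the paper establishes.
\item The hard step you flag --- bounding $\E\big[\|\sum_e c_e X_e\|_p^p\big]$ for the GKR indicators $X_e$ in terms of the SoS objective --- is genuinely the crux, and the majorization inequality of \Cref{sec:majorization-inequalities} does not address it. That lemma compares $\prod_i\psE[f^{a_i}]$ for different exponent sequences of a \emph{single} SoS polynomial $f$; it is used when paths in sibling blocks are sampled independently. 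GKR rounding on a tree produces correlated indicators whose $p$-th moment does not decompose into such products, so you would need a new argument, not the one in the paper.
\end{itemize}
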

Note that for $p=\roundup{\log \ell}$, we get an approximation algorithm for the $\ell_\infty$ norm.

\subsection{Related Works}

\paragraph{Previous results on the scalar-cost group Steiner tree problem.}
Group Steiner Tree with scalar costs was introduced by Reich and Widmayer~\cite{reich1989beyond}. Garg, Konjevod, and Ravi~\cite{garg2000polylogarithmic} gave an $O(\log^2 n  \log k)$-approximation to the problem, the first polylogarithmic approximation. Charikar, Chekuri, Goel, and Guha~\cite{charikar1998rounding} gave the same $O(\log^2 n \log k)$-approximation with a deterministic algorithm.
Then Charikar, Chekuri, Cheung, Dai, Goel, Guha, and Li~\cite{charikar1999approximation} gave an \(O(\log^3 k )\) approximation that works even with directed graphs; however, it required quasipolynomial time. Finally, Chekuri and P\'al~\cite{chekuri2005recursive} gave an \(O(\log^2 k )\) approximation for undirected graphs also in quasipolynomial time. The approximation guarantees of \cite{garg2000polylogarithmic} and~\cite{charikar1998rounding} are presented above with the improvement resulting from using metric embedding by Fakcharoenphol, Rao and Talwar~\cite{fakcharoenphol2003tight}. 
In terms of the approximation guarantee and running time, out algorithm is most similar to that by Chekuri and P\'al~\cite{chekuri2005recursive}: when $k = \Theta(n)$, the approximation guarantees match. In terms of techniques used, our algorithm uses some ideas from that by Garg, Konjevod, and Ravi~\cite{garg2000polylogarithmic}.

\paragraph{Dijkstra-style Algorithm for $\ell_p$-Shortest Path} The authors of \cite{bilo2017simple} describe a Dijkstra-style algorithm for the $\ell_p$-Shortest Path problem and claim that it achieves an $O(\min\{p,\log \ell\})$-approximation. However, we show that this claim is incorrect and, in fact, the approximation factor of their algorithm is at least $\Omega(n^{1-1/p})$. We discuss this algorithm in \Cref{sec:dijkstra}.

\paragraph{Multi-objective combinatorial optimization for shortest path and network design} The work in this paper is closely related to multi-objective combinatorial optimization (MOCO). This area studies combinatorial optimization problems in the presence of multiple competing objective functions. Much of the literature on MOCO is concerned with finding all or some Pareto efficient solutions, that is, solutions that are not dominated in every objective by any other solution, a problem which is often intractable due to the exponential number of these points. In particular, there is prior MOCO work on both shortest path~\cite{hansen1980bicriterion,martins1984multicriteria,breugem2017analysis} and network design problems \cite{ravi1993many}. We refer the reader to the paper of Ruzika and Hamacher~\cite{ruzika2009survey} for a survey on multi-objective spanning tree problems, and the book of Ehrgott~\cite{ehrgott2005multicriteria} for an overview of multi-criteria optimization area as a whole.

\subsection{Technical Overview}
Let us first discuss the $\ell_p$-Shortest Path problem. The most basic variant of this problem is when $G$ is a series-parallel graph (see Section~\ref{sec:prelim} for definitions) and $p=2$. The first idea is to write an LP flow relaxation with a convex objective: minimize $\|\sum_{e\in E} c_e x_e\|^2$ subject to the constraint that $(x_e)_{e\in E}$ define a unit flow from $s$ to $t$. However, this LP has an integrality gap of $\sqrt{\ell}$. To deal with this problem, Li, Xu, and Zhang~\cite{li2023polylogarithmic} introduced a new constraint for $\ell_{\infty}$-Shortest Path: the constraint loosely speaking says that the LP cost of every subgraph/block $B$ in the series-parallel decomposition of $G$ is at most $\opt$ times the probability (according to the LP) that the path visits $B$. Unfortunately, this new constraint does not help when $p=2$. 

Instead, we consider a sum-of-squares (SoS) strengthening of this LP.\footnote{It is sufficient to use a vector-flow SDP with one vector variable per edge in order to  approximate the $\ell_2$-cost in series-parallel graphs. However, we need higher degree SoS relaxations when $p >2$ and in general graphs.}   The SoS relaxation gives valuations not only to individual edges but also to tuples of edges. Using the standard notation of pseudo-expectations (see Section~\ref{sec:prelim}), the SoS relaxation for $p=2$ gives $\psE[x_e]$ for every edge $e$ and $\psE[x_{e_1}x_{e_2}]$ for every pair of edges $e_1$ and $e_2$. The former could be interpreted as the probability that $e\in P$ and the latter as the probability as both $e_1,e_2\in P$ according to the relaxation.
To the best of our knowledge, this is the first flow-based SoS or SDP relaxation studied in the literature.

Our algorithm is very straightforward. We start at $u_0=s$, then choose one of the edges outgoing from $u_1$ with probability of choosing $e$ being equal to $\psE[x_e]$. We get to a vertex $u_1$ and then again sample one of the edges leaving $u_1$ with probability of choosing $e$ proportional to $\psE[x_e]$. We repeat this step over and over until we reach $t$. It is clear that the algorithm finds an $s$-$t$ path $P$. 

Now we need to upper bound the cost of $P$. We do that recursively using the series-parallel decomposition of $G$.\footnote{Interestingly, neither the relaxation nor the algorithm uses the series-parallel decomposition of $G$.} Assume that $G$ is composed of subgraphs/blocks $B_1,\dots,B_t$ 
and our algorithm achieves an $\alpha$ approximation for the squared $\ell_2$-cost in each of them. For simplicity, assume that $t=2$ for now. There are two cases: $G$ is a (i) parallel and (ii) series composition of $B_1$ and $B_2$. Consider the first case.
The SoS relaxation ensures that $\psE[x_{e_1}x_{e_2}] = 0$ for all $e_1\in B_1$ and $e_2\in B_2$; this means that the SoS solution is simply a convex combination of solutions for $B_1$ and $B_2$ with some weights $p_1$ and $p_2$.
Also, with probability $p_1$, the first edge of $P$ will be in $B_1$ and then the entire path will be in $B_1$; similarly, with probability $p_2$, the entire path will be in $B_2$. Thus running the algorithm reduces to randomly choosing a block $B_i$ with probability $p_i$ and then running the algorithm in $B_i$. Since the algorithm gets an $\alpha$ approximation in each $B_i$, it also gets an $\alpha$ approximation in the entire graph. Interestingly, this step would already fail if we used the basic LP relaxation; however, a Sherali--Adams or configuration LP would work in \textit{this} case.

The second case -- when $G$ is a series composition of $B_1$ and $B_2$ is more challenging and requires the power of an SDP relaxation. Let $P_i = P\cap B_i$. Write the squared objective as follows:
\begin{equation}\label{eq:overview:cost}\Bigl\|\sum_{e\in P} c_e\Bigr\|_2^2 = \underbrace{\Bigl\|\sum_{e\in P_1} c_e\Bigr\|_2^2 + \Bigl\|\sum_{e\in P_2} c_e\Bigr\|_2^2}_{\leq \alpha \opt^2 \text{ (in expectation)}} + 2\sum_{\substack{e_1\in P_1\\e_2\in P_2}} \langle c_{e_1},c_{e_2}\rangle.
\end{equation}
The first two terms are squared $\ell_2$-costs of paths $P_1$ and $P_2$. As we assumed, they are at most $\alpha$ times their SoS costs, and thus their sum is at most $\alpha \opt^2$ (in expectation). We now analyze the third term. It is not hard to see that our algorithm samples edges in $P_1$ and $P_2$ independently (because the last vertex of $P_1$ and the first vertex of $P_2$ are fixed). Therefore, 
\begin{equation}\label{eq:overview:term-3}
\Exp\Bigl[\sum_{\substack{e_1\in P_1\\e_2\in P_2}} \langle c_{e_1},c_{e_2}\rangle\Bigr] = \sum_{\substack{e_1\in B_1\\e_2\in B_2}} \langle c_{e_1},c_{e_2}\rangle \cdot \Prob{e_1,e_2\in P}=
\sum_{\substack{e_1\in B_1\\e_2\in B_2}} \langle c_{e_1},c_{e_2}\rangle \cdot \Prob{e_1\in P} \cdot \Prob{e_2\in P}.
\end{equation}
It would be natural to upper bound this expression by the corresponding expression in the SoS objective (appropriately scaled):
\[
 \sum_{\substack{e_1\in B_1\\e_2\in B_2}} \langle c_{e_1},c_{e_2}\rangle \cdot \psE[x_{e_1}x_{e_2}].\]
However, this is not possible, since it may happen that  
$\Prob{e_1\in P}\cdot \Prob{e_2\in P} > 0 $ but $\psE[x_{e_1}x_{e_2}] = 0$. Instead, observing that for every edge $e$, $\Prob{e\in P} = \psE[x_e]$, we rewrite and upper bound \eqref{eq:overview:term-3}:
\begin{align*}
\sum_{\substack{e_1\in B_1\\e_2\in B_2}} \langle c_{e_1},c_{e_2}\rangle  \psE[x_{e_1}] \psE[x_{e_2}]
&\leq 
\sum_{e_1,e_2\in E} \langle c_{e_1},c_{e_2}\rangle \cdot \psE[x_{e_1}] \cdot \psE[x_{e_2}]
\\
&=\Bigl\|\psE\Bigl[\sum_{e\in E}c_{e}x_{e}\Bigr]\Bigr\|_2^2
\leq \psE\Bigl[\Bigl\|\sum_{e\in E} c_e x_e\Bigr\|^2\Bigr]\leq \opt^2.
\end{align*}
Here, we first expanded the summation, then used the pseudo-expectation Lyapunov's inequality $\|\psE[f]\|_2^2 \leq \psE[\|f\|_2^2]$ (see Fact~\ref{claim:Lyapunov}), and finally observed that the last pseudo-expectation is the SoS objective for $G$.
We conclude that the expected squared cost of $P$ is at most $(\alpha+2) \opt$.
Applying this argument recursively, we get an $O(d)$-approximation for the squared cost and an $O(\sqrt{d})$-approximation for the cost itself in series-parallel graphs of order/depth $d$.

When $p > 2$ and blocks in the series-parallel composition of $G$ are formed by $t > 2$ lower-order blocks, the proof becomes more technical. In particular, we need to use a new majorization inequality for pseudo-expectation, which we present in Section~\ref{sec:majorization-inequalities}. 

The SoS relaxation for arbitrary graphs is the same as that for series-parallel graphs (except that its degree is higher). However, the rounding algorithm is quite different. Very informally, the algorithm in its simplest form resembles Savitch's algorithm for $s$-$t$ connectivity in $O(\log^2 n)$ space~\cite{savitch1970} (see also~\cite{chekuri2005recursive}): (i) we sample the middle edge $e = (u,v)$ of the path using probabilities provided by $\psE[\cdot]$, (ii) condition $\psE[\cdot]$ on $e$ being the middle edge, (iii) then recursively find paths $P_1$ from $s$ to $u$ and (independently) $P_2$ from $v$ to $t$, using the \textit{conditional} pseudo-expectation. To upper bound the cost, as in the analysis of the algorithm for series-parallel graphs, we first use \eqref{eq:overview:cost} , then bound the third term using a variant of \eqref{eq:overview:term-3},  and finally use the majorization inequality for pseudo-expectations.

To solve Group ATSP, we loosely speaking add SoS constraints that require that the tour $P$ visits every group (for technical reasons, we need to require that $P$ visits each group exactly once).  Then we run the rounding algorithm for $\ell_p$-Shortest Path in arbitrary graphs. It is not guaranteed that $P$ indeed visits every group; however, using the machinery we developed for bounding the cost of $P$, we show that $P$ visits every group with probability at least $\Omega(1/\log n)$. By sampling sufficiently many tours and concatenating them, we obtain the desired solution with high probability.  The Group Steiner Tree problem easily reduces to Group ATSP.

\subsection{Paper Organization}
The rest of the paper is organized as follows.
In \Cref{sec:prelim} we define series-parallel graphs, relevant combinatorial quantities and notation used in the rest of the paper, and give some basic facts on Sum-of-Squares relaxations.
In \Cref{sec:sos-relaxation}, we describe our Sum-of-Squares relaxation for \(\ell_p\)-Shortest Path in directed acyclic graphs.
In \Cref{sec:majorization-inequalities} we show a majorization inequality for pseudo-expectations used in the analysis of our algorithms.
In \Cref{sec:rounding} we describe and analyze our rounding algorithm for \(\ell_p\)-Shortest Path in series-parallel graphs.
In \Cref{sec:general} we describe our approximation algorithm for \(\ell_p\)-Shortest Path in arbitrary graphs.
In \Cref{sec:ATSP-and-Steiner-Tree}, we present our algorithms for \(\ell_p\)-Group ATSP and \(\ell_p\)-Group Steiner Tree.
In \Cref{sec:all-hardness-results}, we give our hardness results: hardness of approximation results for \(\ell_p\)-Shortest Path with potentially negative edge costs and for \(\ell_\infty\)-Shortest Path. We also show that our analysis of the \(\ell_p\)-Shortest Path algorithm in series-parallel graphs is tight.
In \Cref{sec:missing-proofs}, we prove a recurrence formula and upper bound on multidimensional Bell numbers, which are used in the analyses of our algorithms for \(\ell_p\)-Shortest Path.

\section{Preliminaries and Notation} \label{sec:prelim}
In this paper, all \(\log\)s are base \(2\).
In this paper, we consider Shortest Path and Group ATSP in directed graphs and Group Steiner Tree in undirected graphs. We assume that graphs may have parallel edges. 
Let $G=(V, E)$ be a directed graph. We denote $n=|V|$ and $m=|E|$.
For \(v \in V\), denote the sets of its outgoing and incoming edges by \(\delta^+(v)\) and \(\delta^-(v)\), respectively. Similarly, define \(\delta^+(A)\) and \(\delta^-(A)\) for subsets of vertices $A$. Finally,  denote the set of edges from $A$ to $B$ by $\delta(A,B)$. We denote the $i$-th coordinate of vector edge cost $c_e$ by $c_e(i)$.
\subsection{Series-Parallel Graphs}

We start with providing a recursive definition of directed series-parallel graphs with source $s$ and sink $t$. 
A graph on two vertices $s$, $t$ and one or more edges from $s$ to $t$ is a series-parallel graph of order (depth) $0$. We denote the order of $G$ as $\sheight(G)$.
\begin{itemize}
\item \textbf{Parallel Composition.} Let $B_1$,\dots, $B_t$ be series-parallel graphs that share only vertices $s$ and $t$. Then their union $G$ is a series-parallel graph. Define $\sheight(G) = \max_j \sheight(B_j)$.
\item \textbf{Series  Composition.}
Let $B_1$,\dots, $B_t$ be series-parallel graphs. Denote the source and sink of $B_i$ by $s_i$ and $t_i$ (respectively). Assume that $t_i = s_{i+1}$ for all $i\in\{1,\dots, t-1\}$ and that graphs $B_i$ do not share any other vertices. Then the union $G$ of graphs $B_i$ is a series-parallel graph. Define $\sheight(G) = \max_j \sheight(B_j) + 1$.
\end{itemize}
In this definition, we only count \textit{series}  compositions when we compute the order of a series-parallel graph. We  call vertices $s$ and $t$ \emph{terminals}. We call intermediate graphs that we obtain while constructing $G$ \emph{blocks}. We denote the source and sink of a block $B$ by $s_B$ and $t_B$, respectively. 


\subsection{Combinatorics}
\label{sec:prelim:comb}
\paragraph{Unlabeled Partitions} We say that a tuple of integers $\lambda= (\lambda_1,\dots,\lambda_k)$ is an unlabeled partition of an integer $n \geq 1$ if $n = \sum_{i=1}^k \lambda_i$ and $\lambda_1\geq \lambda_2 \geq \cdots \ge \lambda_k \geq 1$. We will denote this by $\lambda \prt n$. We will denote the length of $\lambda = (\lambda_1,\dots, \lambda_k)$ by $|\lambda| = k$.

Given an \(n\) and some tuple of non-negative integers \(\alpha\) with \(\alpha_1 + \ldots + \alpha_k = n\), we use standard notation
for the multinomial coefficient
\[\binom{n}{\alpha} \defeq \frac{n!}{\prod_{i=1}^k \alpha_i!}\]




\paragraph{Multidimensional Bell Numbers}
Recall that the \(n\)th Bell number $\bell_n$ equals the number of labeled partitions of a set of size $n$. In this paper, we will need a generalization of Bell numbers, known as multidimensional Bell numbers (see~\cite[Example 5.2.4]{stanley2023enumerative} and \cite{de2011set}).
\begin{definition}
We say that a collection of subsets $P$ is a partition of a set $S$ if all subsets in $P$ are disjoint and their union is $S$.
Consider two partitions $P$ and $P'$ of $S$. We say that $P'$ is a refinement of $P$ if every $A\in P'$ is a subset of some $B\in P$. 

A $d$-dimensional partition of $p$ is a tuple $(P_1,\dots, P_d)$ where all $P_i$ are partitions of $[p] \eqdef \{1,\dots, p\}$ and each $P_{i+1}$ is a refinement of $P_i$. 
The $d$-dimensional Bell number $\bell_{d}(p)$ is the number of $d$-dimensional partitions of $p$. If $d=0$ or $p=0$, we let  $\bell_d(p) \eqdef 1$.
\end{definition}
Note that 1-dimensional Bell numbers are simply the standard Bell numbers: $\bell_1(i) = \bell(i)$. We can also restate the definition of $\bell_d(n)$ as follows. $\bell_d(n)$ is the number of $(d+2)$-level rooted trees with $n$ labeled leaves: the root must be in level \(0\), all leaves must be in level $d+1$, and all leaves are labeled with numbers from $1$ to $n$ with each number being used exactly once.

We will need the following recurrence formula for $d$-dimensional Bell numbers, which is proved in \Cref{sec:missing-proofs}.

\begin{restatable}{lemma}{bellRecurrence}
\label{claim:bell-recurrence}
For every $d\geq 1$ and $p \geq 1$, we have
$$
\bell_d(p) = \sum_{\lambda \prt p} \binom{p}{\lambda} \prod_{i=1}^{|\lambda|} \bell_{d-1}(\lambda_i)\left/ \prod_{j=1}^p \cnt(j, \lambda)!\right.
$$
where $\cnt(j, \lambda)$ is the number of times $j$ appears in $\lambda$.

\end{restatable}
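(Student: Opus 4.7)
The plan is to condition on the outermost (coarsest) partition $P_1$ of a $d$-dimensional partition $(P_1,\dots,P_d)$ of $[p]$, group these tuples by the multiset of block sizes of $P_1$, and then exploit the fact that, once $P_1$ is fixed, the remaining data decouples across blocks. Since each $P_{i+1}$ refines $P_i$, the restrictions $(P_2|_B,\dots,P_d|_B)$ to each block $B$ of $P_1$ independently form a $(d-1)$-dimensional partition of $B$. Summing over block-size profiles will reproduce the claimed sum-over-$\lambda\prt p$ of products.

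First I would count set partitions of $[p]$ with a prescribed multiset of block sizes $\lambda=(\lambda_1,\dots,\lambda_{|\lambda|})\prt p$. The multinomial coefficient $\binom{p}{\lambda}$ counts assignments of the elements of $[p]$ to an \emph{ordered} tuple of blocks of sizes $\lambda_1,\dots,\lambda_{|\lambda|}$; dividing by $\prod_{j=1}^{p}\cnt(j,\lambda)!$ quotients out permutations among blocks of equal size, which yield the same unordered set partition. So the number of $P_1$ with block-size profile $\lambda$ is exactly $\binom{p}{\lambda}/\prod_{j=1}^p \cnt(j,\lambda)!$.

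Next I would fix one such $P_1=\{B_1,\dots,B_{|\lambda|}\}$ and count extensions to full tuples $(P_1,P_2,\dots,P_d)$. The key observation is that $P_{i+1}$ refines $P_i$ for all $i\geq 1$, and $P_2$ refines $P_1$, so every block of every $P_i$ ($i\geq 2$) is contained in some $B_j$. Consequently, specifying $(P_2,\dots,P_d)$ consistent with $P_1$ is the same as independently specifying, for each $B_j$, a chain of partitions $(P_2|_{B_j},\dots,P_d|_{B_j})$ of $B_j$ in which each refines the previous — that is, a $(d-1)$-dimensional partition of $B_j$. The count depends only on $|B_j|=\lambda_j$ and equals $\bell_{d-1}(\lambda_j)$ after relabeling. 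Hence the number of extensions of $P_1$ is $\prod_{i=1}^{|\lambda|}\bell_{d-1}(\lambda_i)$.

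Summing the product of these two counts over all $\lambda\prt p$ enumerates each $d$-dimensional partition of $[p]$ exactly once, giving the claimed recurrence. I do not expect a substantive obstacle: the only subtleties are (i) correctly dividing by $\prod_j \cnt(j,\lambda)!$ to avoid overcounting partitions with repeated block sizes, and (ii) checking that the refinement conditions $P_{i+1}\preceq P_i$ for $i\geq 2$ are entirely internal to each block of $P_1$, so that the count over $(P_2,\dots,P_d)$ truly factors as a product over the blocks $B_1,\dots,B_{|\lambda|}$.
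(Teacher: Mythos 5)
Your proposal is correct and follows essentially the same argument as the paper: condition on the coarsest partition $P_1$, count the partitions with block-size profile $\lambda$ by $\binom{p}{\lambda}\bigl/\prod_{j=1}^p \cnt(j,\lambda)!\bigr.$, and observe that the remaining refinements decouple into independent $(d-1)$-dimensional partitions of the blocks, contributing $\prod_i \bell_{d-1}(\lambda_i)$. Your write-up simply spells out the overcounting and block-independence details that the paper leaves implicit.
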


Now, we describe the exponential generating function for sequence $(\bell_d(i))_i$ when $d$ is fixed. 
\begin{fact}\label{fact:gen}\cite[Example 5.2.4]{stanley2023enumerative}
Let $f_0(x) = \exp(x)$ and $f_{i+1}(x) = \exp(f_i(x) -1)$. Then the exponential generating function for sequence $(\bell_d(i))_{i=0}^\infty$ is given by:
\begin{equation}\label{eq:bell:recurrence}
\sum_{i=0}^\infty \frac{\bell_{d}(i) x^i}{i!} = f_d(x).
\end{equation}
\end{fact}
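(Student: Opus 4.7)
The plan is to prove the identity by induction on $d$. For the base case $d = 0$, the definition gives $\bell_0(i) = 1$ for all $i \geq 0$, so the claimed generating function identity reduces to the well-known series $\sum_{i \geq 0} x^i/i! = \exp(x) = f_0(x)$.

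For the inductive step, suppose $\sum_{i \geq 0} \bell_{d-1}(i) x^i / i! = f_{d-1}(x)$. The key tool will be the exponential (compositional) formula for exponential generating functions, which I will derive on the fly to keep the proof self-contained and to match the recurrence of \Cref{claim:bell-recurrence} exactly. First I would restate \Cref{claim:bell-recurrence} in its \emph{labeled} form: a set partition $\pi$ of $[p]$ with block sizes $(\lambda_1,\ldots,\lambda_{|\lambda|})$ in decreasing order corresponds to the multinomial coefficient $\binom{p}{\lambda}$ arrangements, divided by $\prod_j \cnt(j,\lambda)!$ to account for permutations of equal-sized blocks. This rewrites the recurrence as
\[
\bell_d(p) \;=\; \sum_{\pi \,\in\, \Pi([p])} \prod_{B \in \pi} \bell_{d-1}(|B|),
\]
where $\Pi([p])$ is the set of (labeled) set partitions of $\{1,\ldots,p\}$. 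I would remark that this identity also holds for $p = 0$ (with the empty product giving $1$), matching our convention $\bell_d(0) = 1$.

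Next I would introduce $g(x) \defeq f_{d-1}(x) - 1 = \sum_{k \geq 1} \bell_{d-1}(k) x^k/k!$, which has zero constant term. Expanding
\[
\exp(g(x)) \;=\; \sum_{m \geq 0} \frac{g(x)^m}{m!} \;=\; \sum_{m \geq 0} \frac{1}{m!} \sum_{k_1,\ldots,k_m \geq 1} \prod_{j=1}^m \frac{\bell_{d-1}(k_j)\, x^{k_j}}{k_j!},
\]
and collecting the coefficient of $x^p/p!$, each tuple $(k_1,\ldots,k_m)$ with $\sum_j k_j = p$ contributes $\binom{p}{k_1,\ldots,k_m} \prod_j \bell_{d-1}(k_j)$, and summing over tuples with an extra $1/m!$ precisely enumerates ordered $m$-tuples of blocks modulo reordering, i.e., set partitions of $[p]$ into $m$ nonempty blocks. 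Thus the coefficient of $x^p/p!$ in $\exp(g(x))$ equals $\sum_{\pi \in \Pi([p])} \prod_{B \in \pi} \bell_{d-1}(|B|) = \bell_d(p)$ by the rewritten recurrence. Since $\exp(g(x)) = \exp(f_{d-1}(x) - 1) = f_d(x)$ by definition, this completes the induction.

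The main obstacle I anticipate is the bookkeeping in translating between the unlabeled-partition form of \Cref{claim:bell-recurrence} and the labeled-partition form needed for the exponential formula. In particular, I must be careful that the factor $1/\prod_j \cnt(j,\lambda)!$ in the recurrence is exactly what is needed to convert a sum over ordered compositions (arising from expanding $g(x)^m$) into a sum over unordered set partitions, and that subtracting the constant term $\bell_{d-1}(0) = 1$ from $f_{d-1}(x)$ before exponentiating is precisely what forces all blocks to be nonempty, matching our combinatorial sum.
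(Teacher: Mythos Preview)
Your proof is correct. Note, however, that the paper does not actually prove this statement: it is recorded as a \emph{Fact} with a citation to Stanley's \emph{Enumerative Combinatorics}, so there is no ``paper's own proof'' to compare against. What you have written is the standard derivation via the exponential (compositional) formula, and your translation of the unlabeled-partition recurrence of \Cref{claim:bell-recurrence} into the labeled form $\bell_d(p) = \sum_{\pi \in \Pi([p])} \prod_{B \in \pi} \bell_{d-1}(|B|)$ is exactly the right intermediate step. The bookkeeping you flag as a potential obstacle is handled correctly: dividing the sum over ordered $m$-tuples by $m!$ works cleanly because nonempty blocks of a set partition are always distinct as sets, so each unordered partition into $m$ blocks is counted exactly $m!$ times among ordered sequences.
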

In this paper, we will present an approximation algorithm for $\ell_p$-Shortest Path in depth-$d$ series-parallel graphs with approximation factor $\abell_d(p) \defeq \bell_d(p)^{1/p}$. From Fact~\ref{fact:gen}, we obtain the following upper bound on $\abell_d(p)$, proved in Appendix~\ref{sec:missing-proofs}.

\begin{restatable}{claim}{bellNumberAsymptotics}\label{claim:bellNumberAsymptotics}
 For all $p \geq 1$ and $d\geq 1$, we have
$$\abell_d(p) = \bell_d(p)^{1/p} = O(pd^{1-1/p}).$$
Let $\log^{(j)} p = \underbrace{\log \cdots \log}_{j\text{ times}} p$ and $\log^* p $ be the largest value of $j$ such that $\log^{(j)} p \geq 1$. Then, the following upper bound on $\abell_d(p)$ holds for $d\leq \log^{*} p:$
$$\abell_d(p) = \bell_d(p)^{1/p} \leq O\Bigl(\frac{p}{\log^{(d)} p}\Bigr).
$$
\end{restatable}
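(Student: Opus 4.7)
The plan is to establish the two bounds by separate arguments: the general one combinatorial, the refined one via generating functions.

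For $\bell_d(p)^{1/p} = O(pd^{1-1/p})$, I will identify $d$-dimensional partitions of $[p]$ with functions $\ell : \binom{[p]}{2} \to \{1, \ldots, d+1\}$ satisfying the (min-form) ultrametric inequality $\ell(i,k) \geq \min(\ell(i,j), \ell(j,k))$; here $\ell(i,j)$ records the smallest index $k$ at which $i$ and $j$ lie in different blocks of $P_k$ (or $d+1$ if they are never separated), and the inequality is immediate from the refinement property. I then bound the number of such ultrametrics by constructing a surjection onto them from pairs $(T, w)$, where $T$ is a labeled spanning tree on $[p]$ and $w : E(T) \to \{1, \ldots, d+1\}$ is an edge labeling. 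The surjection is given by $\hat\ell(i,j) \defeq \min_{e \text{ on the }T\text{-path from }i\text{ to }j} w(e)$. The reconstruction $\hat\ell$ is always a min-ultrametric because the $T$-path from $i$ to $k$ is contained in the edge-union of the $T$-paths $i \to j$ and $j \to k$. Surjectivity follows by taking $T$ to be a maximum-weight spanning tree with $w = \ell$: the cycle-exchange property of max-spanning trees gives $\hat\ell \geq \ell$, and the ultrametric inequality applied along the $T$-path gives $\hat\ell \leq \ell$. Cayley's formula then yields
\[\bell_d(p) \leq p^{p-2} \cdot (d+1)^{p-1},\]
and taking $p$-th roots (using $(d+1)^{1-1/p} \leq 2 d^{1-1/p}$ for $d \geq 1$) gives the desired bound.

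For $\bell_d(p)^{1/p} = O(p/\log^{(d)} p)$ when $d \leq \log^* p$, I use the EGF identity of Fact~\ref{fact:gen}. Since $f_d$ has non-negative Taylor coefficients, we have $\bell_d(p) \leq p! \cdot f_d(r)/r^p$ for any $r > 0$. I choose $r$ to be the largest value with $f_d(r) \leq e^p$. Using $f_d = \exp(f_{d-1} - 1)$, this is equivalent to $f_{d-1}(r) \leq p+1$; iterating $d$ times leads to a condition of the form $r \leq \log(1 + \log(1 + \cdots + \log(p+1) \cdots))$ with $d$ nested logarithms. A short induction shows the right-hand side is $\Omega(\log^{(d)} p)$, which is positive precisely when $d \leq \log^* p$. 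Combined with $(p!)^{1/p} = O(p)$ from Stirling, this gives
\[\bell_d(p)^{1/p} \leq (p!)^{1/p} \cdot (e^p)^{1/p}/r = O(p/\log^{(d)} p).\]

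The main obstacle will be the two claims in the first paragraph: that min-reconstruction always produces a valid ultrametric, and that every ultrametric arises from a maximum-weight spanning tree with its own values as edge weights. Both follow from standard exchange and path arguments, but each requires careful verification to make sure the surjection is well-defined and covers all ultrametrics.
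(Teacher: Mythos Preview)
Your argument is correct. For the second bound you and the paper do essentially the same thing: extract the coefficient bound $\bell_d(p)\le p!\,f_d(r)/r^p$ from the EGF in Fact~\ref{fact:gen}, choose $r$ so that $f_d(r)\le e^{O(p)}$, and verify that such an $r$ can be taken of order $\log^{(d)}p$ when $d\le\log^* p$.

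For the first bound, however, your route is genuinely different. The paper also uses the EGF here: it picks $x=\log\bigl(1+\tfrac{1}{2d+1}\bigr)\asymp 1/d$, shows by a short induction that $f_d(x)-1\le O(1/d)$, and reads off $\abell_d(p)=O(p)\cdot(f_d(x)-1)^{1/p}/x=O(pd^{1-1/p})$. Your argument is purely combinatorial: you encode a $d$-dimensional partition as a $\{1,\dots,d+1\}$-valued min-ultrametric on $[p]$, and then surject onto these from pairs (labeled spanning tree on $[p]$, edge labeling in $\{1,\dots,d+1\}$) via the path-minimum construction, with surjectivity certified by the max-spanning-tree cycle property. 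Cayley's formula then gives the explicit inequality $\bell_d(p)\le p^{p-2}(d+1)^{p-1}$, which after taking $p$-th roots yields $\abell_d(p)\le p^{1-2/p}(d+1)^{1-1/p}=O(pd^{1-1/p})$. Your bound is in fact slightly sharper (by the harmless factor $p^{2/p}$) and comes with a clean closed form; the paper's approach has the advantage of handling both regimes with one tool and making the transition between them transparent.
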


\paragraph{Poisson branching process}
Let $\Pois(t)$ be the Poisson distribution with rate $t$. We define the Poisson branching process $Z_0, Z_1,\dots, Z_d, \dots$ as follows. Let $Z_0 = 1$ (always). Let $Z_1$ be sampled from $\Pois(Z_0) = \Pois(1)$. Then, let $Z_2$ be sampled from $\Pois(Z_1)$ and so on; let $Z_{i+1}$ be sampled from $\Pois(Z_{i})$. We refer to the distribution of $Z_d$ (for a fixed $d$) as an iterated Poisson distribution. We will use the following fact.
\begin{fact}\label{fact:iterated-poisson-moments} (See~\cite{de2011set}) Let $d \geq 0$ and $p\geq 1$, then
$$\E{Z_d^p} = \bell_d(p).$$
\end{fact}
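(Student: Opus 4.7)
The plan is to induct on $d$, with the base case $d=0$ being immediate since $Z_0\equiv 1$ gives $\E{Z_0^p}=1=\bell_0(p)$. For the inductive step, I would reformulate the integer power $Z_d^p$ via falling factorials. Writing $z^{\underline{k}}=z(z-1)\cdots(z-k+1)$, we have the combinatorial identity $z^p=\sum_{P} z^{\underline{|P|}}$ with the sum over set partitions $P$ of $[p]$; indeed, a function $[p]\to\{1,\dots,z\}$ is classified by its kernel partition together with an injection of its blocks into $\{1,\dots,z\}$. The key property of Poisson that makes it match Bell numbers is that its falling-factorial moments are particularly clean: if $Y\sim\Pois(\lambda)$ then $\E{Y^{\underline{k}}}=\lambda^k$, verified by a short computation with the Poisson pmf or by differentiating the probability generating function $e^{\lambda(t-1)}$ at $t=1$.

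Combining these two ingredients and conditioning on $Z_{d-1}$, I would compute
\[
\E{Z_d^p} \;=\; \sum_P \E{Z_d^{\underline{|P|}}}\;=\;\sum_P \E{\E{Z_d^{\underline{|P|}}\mid Z_{d-1}}} \;=\; \sum_P \E{Z_{d-1}^{|P|}},
\]
where the sum is over set partitions $P$ of $[p]$. The inductive hypothesis then yields $\E{Z_d^p}=\sum_P \bell_{d-1}(|P|)$. To match this with $\bell_d(p)$, I would use the recursive structure of $d$-dimensional partitions: given $(P_1,\dots,P_d)$, expose the finest piece $P_d$ first; the remaining chain $P_1\supseteq\cdots\supseteq P_{d-1}$ consists of coarsenings of $P_d$, which is the same data as an arbitrary $(d-1)$-dimensional partition of the set of blocks of $P_d$. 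Grouping by $P_d$ gives $\sum_P \bell_{d-1}(|P|)$, closing the induction.

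I expect the main obstacle to be purely bookkeeping — aligning the indices in the falling-factorial expansion, the Poisson factorial-moment formula, and the block-counting bijection so that the index $|P|$ appears consistently in all three places. Everything else reduces either to definitions or to a one-line calculation, and the final decomposition of $d$-dimensional partitions by their finest component is essentially a rephrasing of the tree description of $\bell_d(p)$ given earlier in the preliminaries. An alternative route, which I would note but not pursue, is to manipulate the exponential generating functions $f_d$ from Fact~\ref{fact:gen} directly, using that $\phi_d(t)\defeq \E{t^{Z_d}}$ satisfies $\phi_d(t)=\phi_{d-1}(e^{t-1})$ and extracting coefficients; the induction above is cleaner for a proof sketch.
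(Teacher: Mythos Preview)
The paper does not actually prove this statement: Fact~\ref{fact:iterated-poisson-moments} is stated with a citation to~\cite{de2011set} and no argument is given. So there is no ``paper's own proof'' to compare against.

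That said, your argument is correct and self-contained. The three ingredients --- the Stirling-number expansion $z^p=\sum_{P}z^{\underline{|P|}}$, the Poisson factorial-moment identity $\E{Y^{\underline{k}}}=\lambda^k$, and the tower-of-conditional-expectations step --- combine exactly as you describe to give $\E{Z_d^p}=\sum_{P}\bell_{d-1}(|P|)$. Your closing bijection, which peels off the \emph{finest} partition $P_d$ and identifies the remaining chain with a $(d-1)$-dimensional partition of the blocks of $P_d$, is a valid recurrence for $\bell_d(p)$. It is worth noting that this is \emph{not} the recurrence the paper records in Lemma~\ref{claim:bell-recurrence}: there one peels off the \emph{coarsest} partition $P_1$ and obtains the product form $\bell_d(p)=\sum_{P_1}\prod_{S\in P_1}\bell_{d-1}(|S|)$. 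Both are correct decompositions of the same set, but yours is the one that matches the falling-factorial computation directly; the paper's version is the one needed in the analysis of Lemma~\ref{lem:cost-analysis}. The only bookkeeping point to make explicit is that the induction on $d$ must carry the statement for \emph{all} exponents $p'\ge 1$ simultaneously, since the sum $\sum_P\bell_{d-1}(|P|)$ invokes the hypothesis at every $|P|\in\{1,\ldots,p\}$; your phrasing ``induct on $d$'' already implies this, but it is worth saying.
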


\subsection{Sum-of-Squares Relaxations}
We recall some basics about the sum-of-squares relaxations. Sum-of-squares relaxations can be thought of in terms of moment matrices, pseudo-distributions, and pseudo-expectations. As is common, we will use the pseudo-expectation framework in this paper. We refer the reader to \cite{FKP19} for a detailed description of the sum-of-squares framework.

Consider a set of variables $x_i$ where $i$ belongs to some set of indices $\calI$.  We denote the entire collection of all variables $(x_i)_{i\in \calI}$ by $\x$. Consider the set of multivariate polynomials $\bbR_{\leq d} [\x] \eqdef \bbR_{\leq d} [\{x_i:i\in \calI\}]$ in variables $x_i$ of degree at most $d$. We say that $f\in\bbR_{\leq d} [\x]$ is a sum of squares (SoS) if \(f = \sum_{i=1}^m f_i^2\) for some polynomials $f_1,\dots, f_m$. Note that the product of SoS polynomials is a SoS, and so is any linear combination of SoS polynomials with positive coefficients.

In this paper, we consider SoS relaxations for the Boolean hypercube; that is, all variables $x_i$ take values $0$ and $1$ in the intended solution. Therefore, we work with the quotient ring $\R_{\leq d}/\langle x_i^2 -x_i\rangle_i$, where $\langle x_i^2 -x_i\rangle_i$ is the ideal generated by polynomials $x_i^2 - x_i$. In other words, we identify monomials $x_{i_1}^{a_1}, \dots, x_{i_t}^{a_t}$ and $x_{i_1},\dots, x_{i_t}$ for all $i_1,\dots, i_t$ and $a_1,\dots, a_t \geq 1$ such that $\sum_{i=1}^t a_i \leq d$. In particular, we will write $f=g$ if $f-g\in \langle x_i^2 -x_i\rangle_i$.

\begin{definition}
A linear map $\psE: \bbR_{\leq d} [\x]/\langle x_i^2 -x_i\rangle_i \to \bbR$ is a pseudo-expectation of degree $d$ if it satisfies the following properties.
\begin{itemize}
\item $\psE[1] = 1$,
\item $\psE[f^2] \geq 0$ for every polynomial $f$ of degree at most $d/2$,
\end{itemize}
We say that a pseudo-expectation $\psE$ satisfies an equality constraint $f = 0$ if $\psE[fg] = 0$ whenever $\deg fg \leq d$.
\end{definition}

Given an objective function $f$ and sets of equality and inequality constraints, we can find a pseudo-expectation $\psE$ that maximizes $\psE[f]$ and satisfies all the constraints in time polynomial in $N^{O(d)}$, where $N$ is the number variables, as long as it satisfies certain regularity conditions~\cite{RW17}. When we prove any statements about pseudo-expectations $\psE[f]$ below, we will \textit{always} implicitly assume that $d \geq \Omega(\deg f)$ so that all the inequalities appearing in the proofs have degree at most $d$.

\begin{definition}
Let $\psE$ be a pseudo-expectation of degree $d$.
Assume that $g$ is a sum of squares and $\psE[g] > 0$. Then the conditional pseudo-expectation $\psE[\cdot \given g]$ operator is defined as follows: $\psE[f \given g] \defeq \psE[fg]/\psE[g]$.
\end{definition}

\begin{fact}\label{fact:cond-pseudo-is-pseudo}
A conditional pseudo-expectation $\psE[\cdot \given g]$ is a pseudo-expectation of degree $d' = d - \deg g$. If $\psE$ satisfies an equality or inequality constraint of degree at most $d'$, then so does $\psE$.
\end{fact}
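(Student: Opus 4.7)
The plan is to verify the two defining axioms of a pseudo-expectation for the operator $\psE[\cdot \given g]$ directly from its definition $\psE[f \given g] = \psE[fg]/\psE[g]$, and then to check inheritance of constraints. Linearity is immediate from linearity of $\psE$ and the fact that $\psE[g]$ is a fixed positive scalar, so the map is a well-defined linear functional on $\bbR_{\leq d'}[\x]/\langle x_i^2 - x_i\rangle_i$ with $d' = d - \deg g$. Normalization is a one-line computation: $\psE[1 \given g] = \psE[g]/\psE[g] = 1$.

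The substantive step is SoS positivity. Given a polynomial $f$ with $\deg f \leq d'/2$, I need $\psE[f^2 \given g] \geq 0$. Since $g$ is itself a sum of squares, I can write $g = \sum_i h_i^2$, so that $f^2 g = \sum_i (f h_i)^2$ is again a sum of squares. Each summand $(f h_i)^2$ has degree at most $d' + \deg g = d$, which lies within the domain of $\psE$, so $\psE[(f h_i)^2] \geq 0$ by assumption on $\psE$. Summing and dividing by the positive scalar $\psE[g]$ preserves non-negativity.

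For constraint preservation, suppose $\psE$ satisfies an equality constraint $q = 0$ with $\deg q \leq d'$. To show that $\psE[\cdot \given g]$ inherits this constraint at degree $d'$, I must check that $\psE[q h \given g] = 0$ whenever $\deg(q h) \leq d'$. Rewriting gives $\psE[q h \given g] = \psE[q \cdot (h g)]/\psE[g]$, and $\deg(q h g) \leq d' + \deg g = d$, so the numerator vanishes by the hypothesis on $\psE$. For an inequality constraint $q \geq 0$ of degree at most $d'$ — interpreted in the standard SoS sense that $\psE[q \cdot s] \geq 0$ for every SoS polynomial $s$ with $\deg(q s) \leq d'$ — the same manipulation applies: $\psE[q s \given g] = \psE[q \cdot (s g)]/\psE[g]$, and since $s g$ is an SoS polynomial with $\deg(q \cdot s g) \leq d$, the numerator is non-negative by hypothesis on $\psE$.

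The argument is a routine verification, so there is no genuine obstacle; the only point requiring care is the degree bookkeeping. Every invocation of a hypothesis about $\psE$ must fit within the original degree budget $d$, and the uniform shift by $\deg g$ introduced when passing from an expression in $f$ to the expression in $f g$ is precisely what forces the inherited degree bound on $\psE[\cdot \given g]$ to be $d' = d - \deg g$.
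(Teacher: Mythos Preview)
The paper states this as a \emph{Fact} without proof; it is treated as a standard property of conditional pseudo-expectations from the SoS literature. Your verification is correct and is exactly the routine argument one would give: linearity and normalization are immediate, SoS positivity follows by writing $g = \sum_i h_i^2$ so that $f^2 g = \sum_i (fh_i)^2$ with $\deg(fh_i) \le d'/2 + (\deg g)/2 = d/2$, and constraint inheritance is a direct degree check on $\psE[q\cdot(hg)]$. There is nothing to compare against in the paper itself.

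One small remark on presentation: when you write ``each summand $(fh_i)^2$ has degree at most $d' + \deg g = d$, which lies within the domain of $\psE$,'' the relevant hypothesis to invoke is that $\deg(fh_i) \le d/2$, not merely that $(fh_i)^2$ has degree $\le d$. These are of course equivalent, but the axiom for $\psE$ is phrased in terms of the former, so it is cleaner to state it that way. Also note the statement in the paper has an evident typo (the second ``$\psE$'' should read ``$\psE[\cdot \given g]$''), which you correctly interpreted.
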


We will use Lyapunov's inequality for pseudo-expectations (which is also referred to as Jensen's inequality in the literature).
\begin{claim}
\label{claim:Lyapunov}
Let \(g\) be a sum of squares and f be any polynomial. Assume that $\deg f^2g \leq d$.  Then,
\begin{equation}\label{eq:Lyapunov}
\psE[fg]^2 \le \psE[f^2g]\psE[g].
\end{equation}
If $\psE[g] > 0$, the inequality can be restated as
\begin{equation}\label{eq:condvar_ineq}
\psE[f\given g]^2 \le \psE[f^2\given g].
\end{equation}
\end{claim}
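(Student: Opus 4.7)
The plan is to prove the inequality via a Cauchy--Schwarz style argument, exploiting the fact that multiplying a sum-of-squares polynomial by a square again yields a sum of squares. The key observation is that for every pair of real scalars $t,s$, the polynomial $(tf+s)^2 g$ is a sum of squares: writing $g = \sum_j g_j^2$ (which exists because $g$ is SoS), we have
\[
(tf+s)^2 g = \sum_j \bigl((tf+s)\, g_j\bigr)^2.
\]
Before invoking the SoS axiom of $\psE$, I would first check the degree bookkeeping. Each polynomial $(tf+s) g_j$ has degree at most $\deg f + \deg(g)/2$. Since $2\deg f + \deg g \le \deg(f^2 g) \le d$, this degree is at most $d/2$, which is exactly what the SoS axiom requires. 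Hence $\psE\bigl[(tf+s)^2 g\bigr] \ge 0$ for all $t,s \in \mathbb{R}$.

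Next, I would expand this expression using linearity of $\psE$ to obtain a quadratic form in $(t,s)$:
\[
t^2\,\psE[f^2 g] \;+\; 2ts\,\psE[fg] \;+\; s^2\,\psE[g] \;\ge\; 0 \qquad \text{for all } t,s \in \mathbb{R}.
\]
A non-negative real quadratic form in two variables must have non-positive discriminant, so
\[
\bigl(2\,\psE[fg]\bigr)^2 - 4\,\psE[f^2 g]\,\psE[g] \;\le\; 0,
\]
which rearranges to the desired inequality $\psE[fg]^2 \le \psE[f^2 g]\,\psE[g]$, establishing \eqref{eq:Lyapunov}.

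For the second form \eqref{eq:condvar_ineq}, assuming $\psE[g] > 0$, I would simply divide both sides of \eqref{eq:Lyapunov} by $\psE[g]^2$ and apply the definition $\psE[\,\cdot \mid g\,] = \psE[\,\cdot\, g]/\psE[g]$ to rewrite the ratios as conditional pseudo-expectations. There is no serious obstacle here; the only point requiring care is the degree accounting in the first step, ensuring that every SoS polynomial to which we apply the pseudo-expectation has degree at most $d$. This is guaranteed by the hypothesis $\deg(f^2 g) \le d$, which is precisely the assumption stated in the claim.
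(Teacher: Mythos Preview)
Your proof is correct and is the standard Cauchy--Schwarz argument for pseudo-expectations. The paper states this claim without proof (treating it as a known fact from the SoS literature), so there is nothing to compare against; your argument is exactly the one that is usually given. One tiny quibble: the chain ``$2\deg f + \deg g \le \deg(f^2 g)$'' should really be an equality (in the polynomial ring, before reduction modulo the Boolean ideal), not an inequality in that direction, but this does not affect the argument.
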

\begin{claim}
\label{claim:sum-f-ineq}
Let $f_1,\dots, f_t$ be SoS polynomials. Then, $\psE[\left(\sum_{i=1}^t f_i\right)^p] \geq \sum_{i=1}^t \psE[f_i^p]$.
\end{claim}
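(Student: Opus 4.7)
The plan is to expand $\bigl(\sum_{i=1}^t f_i\bigr)^p$ using the multinomial theorem and show that every cross term is a sum of squares, so that its pseudo-expectation is non-negative, leaving only the pure terms $\psE[f_i^p]$.

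More precisely, first I would write
\[
\Bigl(\sum_{i=1}^t f_i\Bigr)^p \;=\; \sum_{\substack{\alpha\in\Z_{\geq 0}^t\\ |\alpha|=p}} \binom{p}{\alpha}\prod_{i=1}^t f_i^{\alpha_i} \;=\; \sum_{i=1}^t f_i^p \;+\; \sum_{\substack{\alpha\in\Z_{\geq 0}^t\\ |\alpha|=p,\ \alpha\text{ not pure}}} \binom{p}{\alpha}\prod_{i=1}^t f_i^{\alpha_i},
\]
where by a ``pure'' multi-index I mean one of the form $(0,\dots,0,p,0,\dots,0)$. This identity holds in the quotient ring $\R_{\leq d}[\x]/\langle x_i^2 - x_i\rangle_i$ as well, since the multinomial expansion is a purely syntactic identity. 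Then, applying $\psE$ and using linearity, the claim reduces to showing that the pseudo-expectation of each cross term $\binom{p}{\alpha}\prod_{i=1}^t f_i^{\alpha_i}$ is non-negative.

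Next I would verify the well-known fact that products of SoS polynomials are SoS: if $g = \sum_a g_a^2$ and $h = \sum_b h_b^2$, then $gh = \sum_{a,b}(g_a h_b)^2$. Iterating, any monomial $\prod_{i=1}^t f_i^{\alpha_i}$ with each $f_i$ SoS is itself SoS, and multiplying by the positive scalar $\binom{p}{\alpha}$ preserves this. Since any SoS polynomial $g = \sum_a g_a^2$ satisfies $\psE[g] = \sum_a \psE[g_a^2] \geq 0$ by linearity of $\psE$ and the defining non-negativity axiom (provided the degree is at most $d$, which is implicit in our standing convention that $d \geq \Omega(\deg f)$ for every polynomial $f$ on which we evaluate $\psE$), the pseudo-expectation of every cross term is non-negative.

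Combining the two steps,
\[
\psE\Bigl[\Bigl(\sum_{i=1}^t f_i\Bigr)^p\Bigr] \;=\; \sum_{i=1}^t \psE[f_i^p] \;+\; \underbrace{\sum_{\alpha\text{ not pure}} \binom{p}{\alpha}\,\psE\Bigl[\prod_{i=1}^t f_i^{\alpha_i}\Bigr]}_{\geq 0} \;\geq\; \sum_{i=1}^t \psE[f_i^p],
\]
which is the desired inequality. There is no real obstacle here; the only thing to double-check is that every polynomial appearing has degree at most the SoS degree $d$, but this is a standing assumption in the paper's pseudo-expectation framework.
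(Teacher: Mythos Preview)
Your proof is correct and follows essentially the same approach as the paper: expand $\bigl(\sum_i f_i\bigr)^p$ via the multinomial theorem, observe that every term is a product of SoS polynomials and hence SoS with non-negative pseudo-expectation, and retain only the pure terms $f_i^p$. The paper's proof is just a terser version of what you wrote.
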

\begin{proof}
We expand $\left(\sum_{i=1}^t f_i\right)^p$ as $\sum_{\substack{\alpha_1, \ldots, \alpha_t \ge 0 \\ \alpha_1 + \cdots + \alpha_t = p}} \binom{p}{\alpha}f_1^{\alpha_1}\cdots f_t^{\alpha_t}$. All terms in the expansion are SoS polynomials and thus have non-negative pseudo-expectations. The claim follows from the observation that all terms $f_i^p$ are present in the expansion.
\end{proof}

\section{Sum of Squares Relaxation for \texorpdfstring{$\ell_p$}{l-p}-Shortest Path}\label{sec:sos-relaxation}
In this section, we first present our SoS relaxation for $\ell_p$-Shortest Path in directed acyclic graphs (DAGs). In \Cref{sec:rounding}, we will present a rounding algorithm for series-parallel graphs and then, in \Cref{sec:general}, for  layered graphs. The latter result will also yield an algorithm for arbitrary graphs. We will also describe a few basic properties that feasible solutions for this relaxation satisfy.
\paragraph{Relaxation}
We use a degree $2p$ SoS relaxation with variables $\x = \p{x_e}_{e\in E}$ for $\ell_p$-Shortest Path in series-parallel graphs. 
\begin{align*}
    \text{min }\quad& \psE\Bigl[\sum_{i=1}^\ell \Bigl(\sum_e c_e(i) x_e\Bigr)^p\Bigr]\\
    \text{subject to }\quad&
    (x_e)_{e\in E} \text{ is a unit flow from } s \text{ to } t
\end{align*}
The flow constraint says that $\sum_{e\in \delta^+(u)} x_e - \sum_{e\in \delta^-(u)} x_e = 0$ for all $u$ other than $s$ and $t$ (flow conservation) and $\sum_{e\in \delta^+(s)} x_e - 1 = 0$ ($x_e$ sends 1 unit of flow from $s$ to $t$).
It is clear that this is a relaxation for the $\ell_p$-Shortest Path problem: $\psE[\sum_{i=1}^\ell (\sum_e c_e(i) x_e)^p\Bigr] \leq \opt^p$, where $\opt$ is the $\ell_p$-cost of the optimal $s$-$t$ path.

\paragraph{Basic properties of the SoS relaxation}
We say that two edges $e_1$ and $e_2$ are \emph{compatible} if both of them belong to some $s$-$t$ path; otherwise, we say that $e_1$ and $e_2$ are \emph{incompatible}. In a series-parallel graph edges $e_1$ and $e_2$ are incompatible if and only if there exist two parallel blocks \(B_1\) and \(B_2\) such that \(e_1\) lies in $B_1$ and \(e_2\) lies in \(B_2\).
For any set of vertices \(A\), let $\x^{+}_A = \sum_{e\in \delta^+(A)} x_{e}$ and  $\x^{-}_A = \sum_{e\in \delta^-(A)} x_{e}$.
\begin{claim} \label{claim:flow-basics}
Assume that $G$ is a DAG and $\psE$ is a feasible pseudo-expectation for the relaxation. Let $h$ be a multivariate polynomial. Then
\begin{enumerate}
\item If $A \subseteq V$ contains neither of the terminals, then $\psE[(\x^{+}_A - \x^{-}_A)h] = 0$. If $A$ contains $s$ but not $t$, $\psE[(\x^{+}_A - \x^{-}_A)h] = \psE[h]$.
\item If $e_1$ and $e_2$ are not compatible, then $\psE[x_{e_1}x_{e_2}h] = 0$.
\item Assume further that $G$ is a series-parallel graph. Let $(L, R)$ be an $s_B$-$t_B$ cut in a block $B$. Let $f_{LR} = \sum_{e\in \delta(L,R)} x_e$. 
 Then 
$$\psE\Bigl[f_{LR}h\Bigr] = \psE\Bigl[\bigl(\sum_{e\in \delta^+(s_B) \cap B} x_e\bigr)h\Bigr].$$
In particular, $\psE[f_{LR}h]$ does not depend on the cut $(L,R)$ in $B$.
\end{enumerate}
\end{claim}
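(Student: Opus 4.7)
The plan is to derive all three parts from the unit-flow constraints of the relaxation, supplemented by the Boolean identities $x_e^2 = x_e$ and Lyapunov's inequality (Claim~\ref{claim:Lyapunov}) in part 2, and by the structure of series-parallel blocks in part 3.

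Part 1 is a direct summation: add the flow-conservation equalities $\x^+_{\{u\}} - \x^-_{\{u\}} = 0$ over $u \in A \setminus \{s,t\}$ together with the source equality $\x^+_{\{s\}} - \x^-_{\{s\}} = 1$ when $s \in A$. Every edge with both endpoints in $A$ cancels, and the sum reduces to $\x^+_A - \x^-_A = 0$ or $1$ as a polynomial identity; multiplying by $h$ and applying $\psE$ gives the claim.

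For part 2, write $e_i = (u_i, v_i)$ and let $R_i = \{v : v_i \text{ reaches } v \text{ in } G\} \cup \{v_i\}$, $R = R_1 \cup R_2$. Then $R$ is reachability-closed (so $\delta^+(R) = \emptyset$); by acyclicity $u_i \notin R_i$, and incompatibility rules out $u_1 \in R_2$ and $u_2 \in R_1$; hence $u_1, u_2 \notin R$ and $e_1, e_2 \in \delta^-(R)$. Moreover $t \in R$ and $s \notin R$ (the boundary case where $e_1$ or $e_2$ does not lie on any $s$-$t$ path is handled separately by applying part 1 to a suitable forward or backward reachability set and concluding $\psE[x_{e_i}] = 0$, which Lyapunov then lifts). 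Part 1 applied to $R$ yields the polynomial identity $\x^-_R = 1$; multiplying by $x_{e_1}$ and using $x_{e_1}^2 = x_{e_1}$ gives
\[
\sum_{e \in \delta^-(R) \setminus \{e_1\}} x_{e_1} x_e = 0.
\]
Each monomial $x_{e_1} x_e$ equals its own square in the Boolean ring, hence is SoS, so its pseudo-expectation is non-negative. Since these non-negative values sum to zero, each vanishes; in particular $\psE[x_{e_1} x_{e_2}] = 0$. Then Claim~\ref{claim:Lyapunov} with $g = x_{e_1} x_{e_2}$ gives $\psE[x_{e_1} x_{e_2} h]^2 \leq \psE[x_{e_1} x_{e_2}] \cdot \psE[x_{e_1} x_{e_2} h^2] = 0$, completing part 2.

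Part 3 relies on two structural features of series-parallel blocks, both following by induction on the SP construction: (i) within a block $B$, the source $s_B$ has no incoming edges and the sink $t_B$ has no outgoing edges; and (ii) any $s_B$-$t_B$ cut $(L,R)$ arising from the block structure contains no $R$-to-$L$ edges within $B$. Given these, summing $\sum_{e \in \delta^+(u) \cap B} x_e - \sum_{e \in \delta^-(u) \cap B} x_e$ over $u \in L$ telescopes to $f_{LR}$ by (ii), while global flow conservation forces every interior vertex to contribute zero and $s_B$ to contribute precisely $\sum_{e \in \delta^+(s_B) \cap B} x_e$ by (i). This establishes the desired polynomial identity, and the claim follows.

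The hardest step is part 2: promoting the single-value identity $\psE[x_{e_1} x_{e_2}] = 0$ to $\psE[x_{e_1} x_{e_2} h] = 0$ for every polynomial $h$. The interplay of Boolean-ring squaring (making the monomials SoS) and Lyapunov's inequality is what makes this work, and what could not be carried out using a plain LP relaxation.
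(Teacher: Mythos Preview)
Your argument is correct. Parts 1 and 3 match the paper's reasoning (the paper simply says part 3 ``follows immediately from item 1,'' and you spell out the telescoping; your explicit hypothesis (ii) that the cut has no $R\to L$ edges is in fact needed for the statement to hold and is implicit in the paper's later use of only monotone cuts).

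In part 2 you and the paper diverge slightly. The paper takes an abstract monotone cut containing $e_1,e_2$, works directly with a monomial $h$, and reads off $\psE[x_{e_1}x_{e_2}h]\le 0$ from the cut identity $\x^+_A=1$ (then $\ge 0$ since $x_{e_1}x_{e_2}h$ is a Boolean monomial, hence equal to $0$). You instead construct the cut explicitly via forward reachability, first deduce $\psE[x_{e_1}x_{e_2}]=0$, and then promote this to arbitrary $h$ via Lyapunov with $g=x_{e_1}x_{e_2}$. Your route has the virtue of actually exhibiting the monotone cut whose existence the paper merely asserts; the paper's route avoids Lyapunov and hence needs only SoS degree $\deg h+2$ rather than $2\deg h+2$. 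Two small cleanups: (a) since your set $R$ contains $t$ and not $s$, you should invoke part 1 on $\bar R$ rather than on $R$ to obtain the constraint $\x^-_R=1$; (b) the boundary case ``$e_i$ lies on no $s$--$t$ path'' may in general require a short induction over a topological order of the unreachable component rather than a single application of part 1, though in practice one simply preprocesses such edges away.
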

\begin{proof}
1. The SoS relaxation satisfies the flow conservation constraints and the constraint that the amount of flow being routed equals $1$. Therefore, it satisfies any linear combination of them. In particular, it satisfies degree-1 polynomial equations $\x^{+}_A - \x^{-}_A = 0$ when $s,t\notin A$ and $\x^{+}_A - \x^{-}_A = 1$ when $s\in A$ but $t\notin A$. The first item follows.

\noindent 2. 
It is sufficient to prove the statement for all monomials (the claim then follows by the linearity of $\psE$). Thus, we will assume that $h$ is a monomial.
Recall that an $s$-$t$ cut is monotone if it cuts exactly one edge on every $s$-$t$ path. Since $e_1$ and $e_2$ are incompatible, there is a monotone $s$-$t$ cut $(A,\bar A)$ that cuts both of them. We apply item 1 to polynomial $x_{e_1} h$ and get
$\psE[x_{e_1} h] = \psE[x_{e_1} \x^+_A h] = \psE[x_{e_1} (x_{e_1} + x_{e_2} + \dots) h]$. Here, $\dots$ is a sum of some $x_e$ (the coefficient of each of them is 1). Note that $x_e h$ is a monomial, thus $x_e h = (x_e h)^2$ and therefore $\psE[x_eh] \geq 0$. We conclude that 
$$\psE[x_{e_1} h] \geq \psE[x_{e_1} (x_{e_1} + x_{e_2}) h] = 
\psE[(x_{e_1} + x_{e_1}x_{e_2}) h]$$ and simplifying, we get \(\psE[x_{e_1}x_{e_2}h] \leq 0\).
Since $x_{e_1}x_{e_2}h$ is a monomial, $\psE[x_{e_1}x_{e_2}h]\geq 0$, and thus $\psE[x_{e_1}x_{e_2}h] = 0$.

\noindent 3. Item 3 follows immediately from item 1.
\end{proof}

\section{Majorization Inequalities for Pseudo-expectations}\label{sec:majorization-inequalities}
In this section, we will prove a majorization inequality for pseudo-expectations. This inequality generalizes already known pseudo-expectation Lyapunov's (see Claim~\ref{claim:Lyapunov}) and H\"older's (see~\cite{BKS14}) inequalities. 

\begin{definition}
    Consider two integer sequences \(a_1 \ge a_2 \ge \ldots \ge a_k \ge 0\)
    and \(b_1 \ge \ldots \ge b_k \ge 0\).
    We say \(a\) majorizes \(b\) and write \(a \succeq b\), if \(\sum_{i=1}^k a_i = \sum_{i=1}^k b_i\) and for all \(1 \le i \le k\) we have
    \[a_1 + \ldots + a_i \ge b_1 + \ldots + b_i.\]
\end{definition}
Sequence majorization is a powerful tool for proving inequalities; it appears in widely used Muirhead's~\cite{M1902} and Karamata's~\cite{K1932}  majorization inequalities. We now present a majorization inequality for pseudo-expectations. An analogous inequality for true expectations easily follows both from Karamata's and from Muirhead's majorization inequality. 
\begin{lemma}\label{lemma:ps_major}
Consider a degree $d$ pseudo-expectation $\psE$. Let \(a \succeq b\)  and \(f\) be an SoS polynomial of degree \(\deg(f^{a_1}) \le d\). Then
    \begin{equation}\label{eq:ps_major}
    \prod_{i=1}^k \psE[f^{a_i}] \ge \prod_{i=1}^k \psE[f^{b_i}].
    \end{equation}
\end{lemma}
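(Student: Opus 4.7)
The plan is to reduce the general majorization inequality to the elementary transfer inequality
\[
\psE[f^{A+1}]\,\psE[f^{B-1}] \ge \psE[f^A]\,\psE[f^B] \qquad \text{for integers } A \ge B \ge 1,
\]
and to prove this elementary inequality via a pseudo-expectation Lyapunov step combined with telescoping. This mirrors the classical proof of the corresponding true-expectation inequality, which follows from log-convexity of moments.

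For the elementary step, since $f$ is SoS and the SoS cone is closed under multiplication, every power $f^n$ ($n \ge 0$) is itself SoS. Applying Claim~\ref{claim:Lyapunov} with outer polynomial $f$ and SoS polynomial $f^n$ yields
\[
\psE[f^{n+1}]^2 \le \psE[f^{n+2}]\,\psE[f^n],
\]
valid whenever $\deg f^{n+2} \le d$. The largest exponent appearing in the argument is $A+1 \le a_1$, so the hypothesis $\deg f^{a_1} \le d$ is exactly what is needed. When all intermediate moments are positive, this bound says the ratio $\psE[f^{n+1}]/\psE[f^n]$ is non-decreasing in $n$; telescoping from $n = B-1$ up to $n = A-1$ gives $\psE[f^{A+1}]/\psE[f^A] \ge \psE[f^B]/\psE[f^{B-1}]$, which rearranges to the elementary inequality. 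In the degenerate case where some moment $\psE[f^n]$ vanishes, the same Lyapunov bound forces $\psE[f^m] = 0$ for all $m \ge n$, and the elementary inequality either reduces to $0 \ge 0$ or is trivial because its right-hand side vanishes.

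To lift the elementary inequality to the full lemma, I would invoke the standard Robin Hood characterization of majorization: whenever $a \succeq b$ are non-increasing sequences with $a \ne b$, choosing the smallest $i$ with $b_i < a_i$ and the smallest $j > i$ with $b_j > a_j$ produces a valid transfer in which $b_i \ge b_j$ (by non-increasingness) and $b_j \ge 1$ (since $b_j > a_j \ge 0$). Replacing $(b_i, b_j)$ with $(b_i + 1, b_j - 1)$ preserves the relation $a \succeq b'$ and strictly decreases $\sum_r |b_r - a_r|$, so iterating (and re-sorting at each step, which leaves $\prod_r \psE[f^{b_r}]$ invariant) yields a finite chain $b = b^{(0)}, b^{(1)}, \ldots, b^{(T)} = a$. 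At each swap, the elementary inequality applied with $A = b_i^{(t)}$ and $B = b_j^{(t)}$ shows that $\prod_r \psE[f^{b_r^{(t)}}]$ is non-decreasing in $t$, so $\prod_r \psE[f^{b_r}] \le \prod_r \psE[f^{a_r}]$, which is~\eqref{eq:ps_major}. The only subtle point is tracking the degree constraint at each Lyapunov application; it is handled tightly by the hypothesis $\deg f^{a_1} \le d$, since the largest polynomial encountered in the argument is $f^{a_1}$.
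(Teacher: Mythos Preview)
Your proof follows essentially the same route as the paper's: Lyapunov gives log-convexity of the moment sequence, telescoping yields the elementary transfer inequality $\psE[f^{A+1}]\psE[f^{B-1}] \ge \psE[f^A]\psE[f^B]$, and the Robin Hood/T-transform decomposition of majorization lifts this to the full statement. The one difference is in handling vanishing moments: the paper perturbs $f$ to $f+\varepsilon$ and passes to the limit, whereas you argue by direct zero-propagation; for that to be airtight you also need the companion observation (from the \emph{same} Lyapunov bound read in the other direction) that $\psE[f^{A+1}]=0$ forces $\psE[f^A]=0$, so the right-hand side still vanishes in that boundary case.
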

\begin{proof}

First, observe that this inequality for sequences $(r+1,r-1)\succeq (r,r)$ follows from Lyapunov's inequality for pseudo-expectations (Claim~\ref{claim:Lyapunov}). Indeed, let $g = f^{r-1}$. Then, Lyapunov's inequality states that 
\begin{equation}
\label{eq:major_prim}
\psE[f^r]^2 = \psE[fg]^2 \leq \psE[f^2g]\psE[g] = \psE[f^{r+1}]\psE[f^{r-1}],
\end{equation}
as required.
Now  we use this inequality to show the desired inequality \eqref{eq:ps_major} for a more general case \((p + 1, q-1) \succeq (p, q)\).
\begin{claim}\label{claim:ps_major_sub}
    Let \(p \geq q \ge 1\) be integers.
    Then
    \[ \psE[f^{p+1}] \psE[f^{q-1}] \ge \psE[f^p] \psE[f^q].\]
\end{claim}
\begin{proof} We just proved the inequality when $p=q$. So we will assume below that $p > q$.  Since $f$ is an SoS polynomial, $\psE[f^r] \geq  0$ for all integers $0\leq r\leq p$.
Let us assume first that the inequality is strict for all $r$: $\psE[f^r] > 0$. Then, by dividing and multiplying $\psE[f^q] \psE[f^p]$ by $\prod_{r=q}^{p-1} \psE[f^r]$, we obtain the following identity.
\[
            \psE[f^q] \psE[f^p] = \frac{\psE[f^q] \psE[f^{q}] \psE[f^{q+1}] \cdots \psE[f^{p-1}] \psE[f^p]}{\psE[f^q] \psE[f^{q+1}] \cdots \psE[f^{p-1}]} 
\]
Now we upper bound the numerator of this fraction by iteratively applying \eqref{eq:major_prim}.
\begin{align*}
\Bigl(\psE[f^q] \psE[f^{q}]\Bigr) \psE[f^{q+1}] \cdots \psE[f^{p-1}] \psE[f^p] &\leq
\Bigl(\psE[f^{q-1}] \psE[f^{q+1}]\Bigr) \psE[f^{q+1}] \cdots \psE[f^{p-1}] \psE[f^p] \\
\psE[f^{q-1}] \Bigl(\psE[f^{q+1}] \psE[f^{q+1}]\Bigr) \cdots \psE[f^{p-1}] \psE[f^p]  &\leq 
\psE[f^{q-1}] \Bigl(\psE[f^{q}] \psE[f^{q+2}]\Bigr) \cdots \psE[f^{p-1}] \psE[f^p] \\
            &\vdots \\
\psE[f^{q-1}] \psE[f^{q}] \psE[f^{q+1}] \cdots \Bigl(\psE[f^{p}] \psE[f^{p}]\Bigr) &\le 
\psE[f^{q-1}] \psE[f^{q}] \psE[f^{q+1}] \cdots \Bigl(\psE[f^{p-1}] \psE[f^{p+1}]\Bigr)
\end{align*}
We conclude that 
$$\psE[f^q] \psE[f^p]  \leq \psE[f^{q-1}] \psE[f^{p+1}],$$
as desired.
If $\psE[f^r] = 0$ for some $r$, we apply the inequality to $\hat f = f + \varepsilon$ (where $\varepsilon > 0$). Now $\psE[{\bar f}^r] \geq \varepsilon^r > 0$, since $f$ is a SoS polynomial. Therefore,
$\psE[\hat{f}^q] \psE[\hat{f}^p]  \leq \psE[\hat{f}^{q-1}] \psE[\hat{f}^{p+1}]$.
Letting $\varepsilon \to 0$, we obtain the desired inequality in the limit.
\end{proof}

Consider an integer sequence $a_1\geq \dots a_k \geq 0$ and two indices $1\leq i^* < j^*\leq k$ such that $a_{i^*} - a_{j^*} \geq 2$. Define a \textit{transfer} or \textit{T-transform} as follows: we decrease $a_{i^*}$ by 1, increase $a_{j^*}$ by 1, and then sort the obtained sequence in descending order. \Cref{claim:ps_major_sub} implies that \Cref{lemma:ps_major} holds for sequence $a$ and sequence $b$ obtained from $a$ by a T-transform. 

Finally, we use that if $a\succeq b$ then $b$ can be obtained by a sequence of T-transforms~\cite{M1902}: $a = a^{(0)} \mapsto a^{(1)} \mapsto a^{(2)}\mapsto \dots \mapsto a^{(T)} = b$. As we proved, the value of the product $\prod_{i=1}^k \psE[f^{a^{(t)}_i}]$ may only decrease each time we apply a T-transform.
This concludes the proof of the lemma.
\end{proof}

Importantly, conditional pseudo-expectations are pseudo-expectations (see \Cref{fact:cond-pseudo-is-pseudo}) and thus Lemma~\ref{lemma:ps_major} holds for conditional pseudo-expectations as well.

\section{Sum-of-Squares Relaxation Rounding}\label{sec:rounding}
In this section, we describe and analyze a rounding algorithm for  series-parallel graphs. It gives an $(1+\varepsilon) A_d(p) = O(pd)$ approximation for $\ell_p$-Shortest Path in series-parallel graphs of order $d$. 
Later we use a different algorithm with a similar analysis to solve the problem in layered and arbitrary graphs.

\subsection{Algorithm}
Let us denote $p_e = \psE[x_e]$ and $p_u = \sum_{e\in\delta^+(u)} \psE[x_e]$. Note that the SoS relaxation constraints ensure that $p_e$ is an $s$-$t$ flow; $p_u$ equals the amount of flow that leaves vertex $u$. The total amount of flow is 1.

\begin{algorithm}[H]
\caption{Rounding algorithm for SoS}\label{alg:rounding}
\begin{algorithmic}[1]
\State \textbf{Input:} series-parallel graph $G$ with source $s$ and sink $t$, a pseudo-expectation $\psE$ 
\State \textbf{Output:} an $s$-$t$ path in $G$
\State Let $u=s$ and $P$ be an empty path.
\While{\(u \ne t\)}
    \State Sample \(e\in \delta^+(u)\) with probability \(\frac{p_e}{p_u} = \frac{\psE[x_e]}{\sum_{e\in\delta^+(u)} \psE[x_e]}\)
    \State Append $e$ to path $P$
\EndWhile
\State \Return \(P\)\label{alg_line:rounding_sp_return}
\end{algorithmic}
\end{algorithm}

\begin{lemma}\label{lem:basic_prob}
Let $P$ be the path returned by Algorithm~\ref{alg:rounding}.
Then $\Prob{e\in P} = p_e$ for every edge $e$ and $\Prob{u\in P} = p_u$ for every vertex $u$.
\end{lemma}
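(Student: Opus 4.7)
My plan is to prove both claims by induction on a topological ordering of the vertices of $G$ (series-parallel graphs are DAGs, so such an ordering exists). The statement about edges will follow immediately from the statement about vertices and the local sampling rule, so the main work is in handling vertices.

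First, I would establish the base case $\Pr[s \in P] = 1 = p_s$. Here $p_s = \sum_{e \in \delta^+(s)} \psE[x_e] = 1$ by the unit-flow constraint that $\psE$ satisfies, and $s \in P$ deterministically because the algorithm initializes $u = s$. For the inductive step, fix a non-source vertex $u$ and suppose the claim holds for all vertices that precede $u$ in topological order. Condition on reaching the parent $v$ of an incoming edge $e = (v, u)$: the algorithm then picks $e$ with probability $p_e/p_v$, so
\[
\Pr[u \in P] \;=\; \sum_{e = (v, u) \in \delta^-(u)} \Pr[v \in P] \cdot \frac{p_e}{p_v} \;=\; \sum_{e \in \delta^-(u)} p_e,
\]
using the inductive hypothesis $\Pr[v \in P] = p_v$. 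By item~1 of \Cref{claim:flow-basics} applied to $A = \{u\}$ with $h = 1$ (or directly by flow conservation in the SoS relaxation), $\sum_{e \in \delta^-(u)} p_e = \sum_{e \in \delta^+(u)} p_e = p_u$ when $u \ne t$. For $u = t$, the same reasoning combined with $\sum_{e \in \delta^-(t)} p_e = 1$ gives $\Pr[t \in P] = 1 = p_t$, consistent with the termination condition of the while loop.

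Once the vertex statement is established, the edge statement is immediate: for any edge $e = (u, v)$, the event $e \in P$ happens exactly when the algorithm visits $u$ and then selects $e$ among the outgoing edges of $u$, so
\[
\Pr[e \in P] \;=\; \Pr[u \in P] \cdot \frac{p_e}{p_u} \;=\; p_u \cdot \frac{p_e}{p_u} \;=\; p_e.
\]
(In the degenerate case $p_u = 0$, we have $p_e = 0$ for all $e \in \delta^+(u)$ as well, and both sides vanish; one can either exclude such vertices from consideration or interpret $0/0$ as $0$.)

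The only subtlety is ensuring that the recursion for $\Pr[u \in P]$ only references vertices that truly precede $u$; this is where the DAG property is used. No step requires anything beyond flow conservation (item~1 of \Cref{claim:flow-basics}), and so the argument is not specific to series-parallel structure and will apply verbatim to the general DAG setting needed in later sections. I do not anticipate any real obstacle; the lemma is essentially a sanity check verifying that the local sampling rule reproduces the marginals encoded by $\psE$.
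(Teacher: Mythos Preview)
Your proposal is correct and follows essentially the same approach as the paper: induction along a topological order, with the base case $s$, the inductive step summing over incoming edges and invoking flow conservation, and the edge statement obtained from the vertex statement via the local sampling rule. One tiny quibble: by the paper's definition $p_u=\sum_{e\in\delta^+(u)}\psE[x_e]$, so $p_t=0$, not $1$; the lemma as stated (and the paper's own proof) silently excludes the sink, and your argument is fine once you do the same.
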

\begin{proof}
We consider all vertices in topological order and prove the desired formulas for $\Prob{u \in P}$ and $\Prob{(u,v)\in P}$ by induction. For $u=s$, we have $\Prob{s\in P} = 1 = p_s$. Then, $\Prob{(s,v)\in P} = p_{(s,v)}/p_s = p_{(s,v)}$, as required.
Now assume that we proved the formulas for vertices $u'$ proceeding $u$ in the topological order. We have,
$$\Prob{u\in P} = \sum_{e=(u',u)\in \delta^-(u)} \Prob{e\in P} = \sum_{e=(u,u')\in \delta^-(u)} p_{e} = p_u.
$$
Here, we used the induction hypothesis and the flow conservation condition at vertex $u$. Now let $e=(u,v)$.
$$\Prob{e\in P} = \Prob{e \in P \given u\in P} \Prob{u \in P} = (p_e/p_u) \cdot p_u.
$$
\qedhere
\end{proof}

Now we will prove an upper bound on the $\ell_p$-cost of path $P$. The proof will be by induction on the series-parallel decomposition of $G$, going from lower to higher order blocks $B$. To analyze different blocks $B$, we first introduce some relevant notation.

Let us say that a path $P$ visits block $B$ if it contains at least one edge from $B$. 
Let $P_B = P\cap B$ be the restriction of $P$ to $B$. If $P$ does not visit $B$, let $P_B = \varnothing$.
Note that if $P$ visits $B$ then it must go through the source $s_B$ and sink $t_B$ of $B$. However, if \(B\) has a parallel block $B'$, a path may go through $s_B$ and $t_B$ but visit $B'$ rather than $B$ itself.
It follows from Lemma~\ref{lem:basic_prob} that the probability that $P$ visits $B$ equals $p_B \eqdef \sum_{e\in \delta^+(s_B) \cap B} p_e$.

Now, we define conditional expectations and pseudo-expectations restricted to $B$ (that is, conditioned on the event that \(P\) visits \(B\)). 
Let $h_B = \sum_{e \in \delta^+(s_B)\cap B} x_e = \sum_{e \in \delta^+(s_B)\cap B} x_e^2$ be a SoS indicator of the event that $P$ visits $B$. We let
\[\bE_B[\cdot] \eqdef \E{\cdot \given P \text{ visits } B} \qquad\text{and}\qquad\psE_B[\cdot] \eqdef \psE\pb{\cdot \given h_B}.\]

In the sequel, we shall bound the costs of \(P\) coordinate-by-coordinate. Thus, we consider a set of scalar non-negative edge weights \(a_e \ge 0\). Define 
$f_B \eqdef \sum_{e \in B} a_e \cdot x_e$. For a path $P'$, let $\cost(P') = \sum_{e\in P'}a_e$. Note that $f_B = \sum_{e\in B} a_e \cdot x_e^2$ and thus is a sum of squares.

\begin{claim}\label{claim:conditional}
Let $B'$ be a block inside $B$ (possibly $B'= B$). Assume $p_B > 0$. Then, we have $\EE{B}{\cost(P_{B'})^r} = \frac{\E{\cost(P_{B'})^r}}{p_B}$ and 
$\psE_B[f_{B'}^r] = \frac{\psE[f_{B'}^r]}{p_B}$ when $r\in\{1,\dots,p\}$.
\end{claim}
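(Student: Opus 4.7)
The plan is to handle the two equalities separately. The first is immediate: since \(B' \subseteq B\), any path \(P\) that does not visit \(B\) contains no edges of \(B\), and hence none of \(B'\), so \(P_{B'} = \varnothing\) and \(\cost(P_{B'})^r = 0\) (using \(r \ge 1\)). Thus \(\cost(P_{B'})^r\) equals \(\cost(P_{B'})^r \cdot \mathbf{1}[P \text{ visits } B]\) as a random variable, and the ordinary conditional expectation identity gives \(\E{\cost(P_{B'})^r} = p_B \cdot \EE{B}{\cost(P_{B'})^r}\), which rearranges to the first equality.

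For the second equality, the definition \(\psE_B[f_{B'}^r] = \psE[f_{B'}^r \cdot h_B]/\psE[h_B]\), together with linearity of \(\psE\) and the definition of \(p_B\), gives \(\psE[h_B] = \sum_{e \in \delta^+(s_B) \cap B}\psE[x_e] = p_B\). So it is enough to prove \(\psE[f_{B'}^r \cdot h_B] = \psE[f_{B'}^r]\). Since \(f_{B'}^r\) expands into monomials that are products of \(r\) variables \(x_{e_i}\) with \(e_i \in B' \subseteq B\), I would prove the slightly stronger statement: for every polynomial \(h\) (with \(\deg h + \deg h_B\) within the SoS degree) each of whose monomials contains at least one factor \(x_{e^*}\) with \(e^* \in B\), we have \(\psE[h \cdot h_B] = \psE[h]\).

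I would prove this stronger statement by induction on the parse tree of \(B\). In the base case, \(B\) is a depth-\(0\) block (parallel edges between \(s_B\) and \(t_B\)); the edges of \(B\) are pairwise incompatible, so for any monomial \(m\) with factor \(x_{e^*}\) (\(e^* \in B\)), every cross-term of \(h_B \cdot m = \sum_{e \in B} x_e\, m\) with \(e \ne e^*\) vanishes in \(\psE\) by \Cref{claim:flow-basics} item 2, while the surviving term equals \(m\) via \(x_{e^*}^2 = x_{e^*}\). For a parallel composition \(B = B_1 \| \cdots \| B_t\), I expand \(h_B = \sum_k h_{B_k}\) and split the monomials of \(h\) by the unique \(B_i\) containing their \(B\)-factors (monomials whose \(B\)-factors span two different \(B_j\)'s vanish in \(\psE\) by item 2). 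Among the cross products \(\psE[h_{B_k} \cdot h^{(i)}]\), only \(k = i\) survives (again by item 2), and the inductive hypothesis on \(B_i\) gives \(\psE[h_{B_i} \cdot h^{(i)}] = \psE[h^{(i)}]\); summing closes this case.

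The main obstacle is the series composition \(B = B_1 \circ \cdots \circ B_t\): here edges in different \(B_i\)'s are \emph{compatible}, so item 2 of \Cref{claim:flow-basics} no longer kills cross terms. The key tool is item 3: the monotone \(s_B\)-\(t_B\) cut separating \(V(B_1)\cup\cdots\cup V(B_{i-1})\cup\{s_{B_i}\}\) from its complement in \(B\) has \(f_{LR} = h_{B_i}\), which yields \(\psE[h_B \cdot h] = \psE[h_{B_i} \cdot h]\) for \emph{every} choice of \(i\). I would partition the monomials of \(h\) according to the smallest index \(i\) for which a monomial has a factor in \(B_i\); on each piece \(h^{(i)}\), moving the cut to \(s_{B_i}\) lets me invoke the inductive hypothesis on \(B_i\) (whose precondition is satisfied because every monomial of \(h^{(i)}\) has a factor in \(B_i\) by construction). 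Summing gives \(\psE[h_B \cdot h] = \psE[h]\). Without item 3's cut-invariance there would be no way to commit to the correct sub-block for induction, which is precisely where the SoS relaxation goes beyond a plain LP.
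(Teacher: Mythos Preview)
Your proposal is correct, and the first equality is argued exactly as in the paper. For the second equality, both you and the paper reduce to showing \(\psE[g\,h_B]=\psE[g]\) for any monomial \(g\) containing at least one factor \(x_e\) with \(e\in B\), but the paper's argument is more direct than your structural induction. The paper simply picks a monotone \(s_B\)-\(t_B\) cut \((L,R)\) in \(B\) that cuts the distinguished edge \(e\); item~3 of Claim~\ref{claim:flow-basics} replaces \(h_B\) by \(f_{LR}=\sum_{e'\in\delta(L,R)} x_{e'}\), and since every other edge \(e'\neq e\) in a monotone cut is incompatible with \(e\), item~2 kills all those terms, leaving \(\psE[g\,x_e]=\psE[g]\) via \(x_e^2=x_e\). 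Your induction rediscovers this mechanism piecewise: the base and parallel cases are just incompatibility (item~2), and your series step is precisely the ``move the cut to \(s_{B_i}\)'' instance of item~3, but you then still need the inductive hypothesis on \(B_i\). The paper collapses all of this into a single choice of monotone cut through \(e\), which avoids the recursion and the bookkeeping about which sub-block each monomial's \(B\)-factors lie in.
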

\begin{proof}
We have
\begin{align*}
\E{\cost(P_{B'})^r} &= \E{\cost(P_{B'})^r\given P_B \neq \varnothing}\cdot \Prob{P_B \neq \varnothing} + \E{\cost(P_{B'})^r\given P_B = \varnothing}\cdot \Prob{P_B = \varnothing} \\
&= \EE{B}{\cost(P_{B'})^r} \cdot p_B + 0\cdot (1-p_B) = p_B \cdot \EE{B}{\cost(P_{B'})^r},
\end{align*}
as required. To prove the second identity, consider a monomial $g$ in the expansion of $f_{B'}^r$. We now prove that $\psE[gh_B] = \psE[g]$ and thus $\psE_B[g] \eqdef \psE[gh_B]/p_B = \psE[g] / p_B$.
Note that only $x_e$ with $e \in B'$ appear in $g$, and $\deg g = r \geq 1$. Choose an arbitrary $x_e$ in $g$, say $e=(u,v)$ and let \(g'\) be such that $g = g'x_e$. Let $(L,R)$ be a monotone $s_B$-$t_B$ cut in $B$ that cuts $e$. By Claim~\ref{claim:flow-basics}, item~3,
$$\psE[gh_B] = \psE\Bigl[g\sum_{e'\in\delta(L,R)} x_{e'}\Bigr] = \psE\Bigl[g'\sum_{e'\in\delta(L,R)} x_{e}x_{e'}\Bigr].$$
Now all edges $e'\in \delta(L,R)$ other than $e$ are incompatible with $e$; for them, $\psE[g'x_{e}x_{e'}] = 0$ by Claim~\ref{claim:flow-basics}, item 2. Also, $\psE[g'x_{e}x_{e'}] = \psE[g'x_{e}]$ for $e'=e$. Therefore, 
$\psE[gh_B] = \psE[g'x_{e}] = \psE[g]$,
as required.
\end{proof}

\begin{lemma}\label{lem:cost-analysis}
Let $\psE$ be a feasible solution for the SoS relaxation of $\ell_p$-Shortest Path. Let $B$ be a block of order $\h=\sheight(B)$ with $p_B > 0$. Then for every $r\leq p$, we have
$$
\EE{B}{\cost(P_B)^{r}} 
\leq \bell_\h(r) \psE_B[f_B^{r}] .
$$
Here, $\bell_{\h}(r)$ is an $\h$-dimensional Bell number (see Section~\ref{sec:prelim:comb} for details).
\end{lemma}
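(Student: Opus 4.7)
The plan is to induct on the series-parallel height $\h = \sheight(B)$, proving the bound for all $r \le p$ simultaneously at each block. For the base case $\h=0$, block $B$ is a multigraph of parallel edges from $s_B$ to $t_B$; all these edges are pairwise incompatible, so \Cref{claim:flow-basics}(2) collapses both sides of the inequality to $\sum_{e\in B} a_e^r p_e/p_B$, matching $\bell_0(r)=1$. For the inductive step with a parallel composition $B = B_1 \cup \cdots \cup B_t$, we have $\h = \max_i \sheight(B_i)$ and $P$ visits at most one $B_i$, so $\EE{B}{\cost(P_B)^r} = \sum_i (p_{B_i}/p_B)\,\EE{B_i}{\cost(P_{B_i})^r}$; combining the inductive hypothesis with monotonicity of $\bell_d(r)$ in $d$ (immediate from the Bell recurrence) and \Cref{claim:conditional} gives the upper bound $\bell_\h(r) \sum_i \psE[f_{B_i}^r]/p_B$, and since edges from distinct parallel sub-blocks are incompatible, \Cref{claim:flow-basics}(2) yields $\psE[f_B^r] = \sum_i \psE[f_{B_i}^r]$, concluding this case.

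The main obstacle is the series composition $B = B_1 \cdots B_t$, where $\h = \max_i \sheight(B_i) + 1$ and cross-block $\psE$ terms do not vanish. Here any visit to $B$ traverses all sub-blocks, so $p_{B_i} = p_B$ and $\cost(P_B) = \sum_i \cost(P_{B_i})$; moreover, the local Markovian structure of \Cref{alg:rounding} makes the restrictions $P_{B_1}, \ldots, P_{B_t}$ mutually independent conditioned on $P$ visiting $B$. Multinomial expansion together with this independence gives
\[
    \EE{B}{\cost(P_B)^r} = \sum_{\alpha} \binom{r}{\alpha} \prod_{i=1}^t \EE{B_i}{\cost(P_{B_i})^{\alpha_i}}
\]
over compositions $\alpha$ of $r$ into $t$ non-negative parts. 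Applying the inductive hypothesis to each factor and using the identity $\psE_{B_i}[f_{B_i}^{\alpha_i}] = \psE_B[f_{B_i}^{\alpha_i}]$ (from \Cref{claim:conditional} and $p_{B_i}=p_B$), I would then group the remaining sum by the unlabeled partition $\lambda = (\lambda_1, \ldots, \lambda_k) \prt r$ formed by the nonzero entries of $\alpha$.

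The heart of the series case is the estimate
\[
    \sum_{\substack{(i_1,\ldots,i_k) \in [t]^k \\ \text{all distinct}}} \prod_{s=1}^k \psE_B[f_{B_{i_s}}^{\lambda_s}] \le \psE_B[f_B^r]
\]
for every partition $\lambda \prt r$. I would prove it in three short steps: first, drop the distinctness constraint (all summands are non-negative, since each $f_{B_{i_s}}^{\lambda_s}$ is SoS), turning the sum into $\prod_{s=1}^k \psE_B\bigl[\sum_i f_{B_i}^{\lambda_s}\bigr]$; second, note that $f_B^{\lambda_s} - \sum_i f_{B_i}^{\lambda_s}$ is SoS because its multinomial expansion consists of non-negative combinations of SoS monomials, hence $\prod_s \psE_B[\sum_i f_{B_i}^{\lambda_s}] \le \prod_s \psE_B[f_B^{\lambda_s}]$; and third, apply the pseudo-expectation majorization inequality (\Cref{lemma:ps_major}) to the SoS polynomial $f_B$ with the majorization $(r,0,\ldots,0) \succeq (\lambda_1,\ldots,\lambda_k)$, which collapses the product into $\psE_B[f_B^r]$. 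Summing the resulting bound over $\lambda$ and invoking the Bell recurrence (\Cref{claim:bell-recurrence}) assembles the coefficients $\binom{r}{\lambda} \prod_s \bell_{\h-1}(\lambda_s)/\prod_j \cnt(j,\lambda)!$ into $\bell_\h(r)\,\psE_B[f_B^r]$, completing the induction.
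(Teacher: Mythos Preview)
Your proposal is correct and matches the paper's proof almost step for step: the same parallel/series case split, the same multinomial expansion plus independence in the series case, the same three-step bound (drop distinctness, use $\sum_i f_{B_i}^{\lambda_s}\le f_B^{\lambda_s}$ in the SoS sense, then apply the majorization inequality), and the same appeal to the Bell recurrence to close the induction. One minor wrinkle: inducting literally on $\h=\sheight(B)$ does not go through in the parallel case, since parallel composition does not increase the order and a sub-block $B_i$ can have $\sheight(B_i)=\h$; the paper avoids this by doing structural induction on the series-parallel decomposition, and you should phrase your induction the same way (your monotonicity-of-$\bell_d(r)$ observation is then still needed to pass from $\bell_{\sheight(B_i)}(r)$ to $\bell_{\h}(r)$, and the paper indeed uses it implicitly).
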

\begin{proof}
We will prove the upper bound on $\E{\cost(P_B)^r}$ by induction on \(r\) and on the series-parallel decomposition of $B$. 
If $r=0$ or $\h=0$, the claim trivially holds.
We consider two cases: when a block $B$ is a parallel composition and when it is a series composition of lower-level blocks. We start with the former, much simpler case when $B$ is a parallel composition of blocks $B_1,\dots, B_t$ sharing the same source $s_B$ and sink $t_B$. If $P$ visits $B$, then it visits exactly one of the blocks $B_1, \dots, B_t$. Therefore,
\begin{equation}\label{eq:parallel:E}
\EE{B}{\cost(P_B)^r} = \sum_{i=1}^t\EE{B_i}{\cost(P_{B_i})^r} \Prob{P \text{ visits } B_i | P \text{ visits } B} = \sum_{i=1}^t\frac{p_{B_i}}{p_B}\EE{B_i}{\cost(P_{B_i})^r}. 
\end{equation}
Note that $f_B = \sum_{i=1}^t f_{B_i}$. Applying Claim~\ref{claim:sum-f-ineq} and then Claim~\ref{claim:conditional} twice, we get 
\begin{equation}\label{eq:parallel:psE}
\psE_B[f_B^r] \geq \sum_{i=1}^t\psE_B[f_{B_i}^r] = 
\sum_{i=1}^t\frac{1}{p_B}\psE[f_{B_i}^r] = 
\sum_{i=1}^t\frac{p_{B_i}}{p_B}\psE_{B_i}[f_{B_i}^r].
\end{equation}
In fact the inequality above is an equality by~Claim~\ref{claim:flow-basics}, part 2, but we do not need that here.
Comparing (\ref{eq:parallel:E}) and~(\ref{eq:parallel:psE}) term-by-term, and using the induction hypothesis, we get
$$\EE{B}{\cost(P_{B})^r} \leq \bell_\h(r) \psE_{B}[f_{B}^r].$$
This concludes the analysis of this case.
Now we assume that $B$ is a series composition of blocks $B_1,\dots, B_t$.
In this case, if $P$ visits $B$ then it visits all $B_i$; if it visits $B_i$, it visits $B$ and all other $B_j$. Thus, $p_B = p_{B_1} = \dots = p_{B_t}$. Also, $P_B$ is the concatenation of $P_{B_1}, \dots, P_{B_t}$.
Using the multinomial theorem, we get
\begin{align}
    \bE_B\pb{\cost(P_B)^r} \label{eq:E-cost-p}
        &= \bE_B\pb{\p{\sum_{i=1}^t \cost(P_{B_i})}^r} = \bE_B\pb{\sum_{\substack{\alpha_1, \ldots, \alpha_t \ge 0 \\ \alpha_1 + \cdots + \alpha_t = r}} \binom{r}{\alpha} \prod_{i=1}^t \cost(P_{B_i})^{\alpha_i}} \\
        &= \sum_{\substack{\alpha_1, \ldots, \alpha_t \ge 0 \\ \alpha_1 + \cdots + \alpha_t = r}} \binom{r}{\alpha}\bE_B \pb{\prod_{i=1}^t \cost(P \cap B_i)^{\alpha_i}}\notag.
    \end{align}
Observe that if $P$ enters \(B\), it necessarily visits $s_{B_1}, \dots, s_{B_t}$ and thus paths \(P_{B_1}, \ldots, P_{B_t}\) are mutually independent. We then have
\begin{align*}
    \bE_B\pb{\cost(P_B)^r} 
    &= \sum_{\substack{\alpha_1, \ldots, \alpha_t \ge 0 \\ \alpha_1 + \cdots + \alpha_t = r}} \binom{r}{\alpha}  \prod_{i=1}^t \bE_B\pb{\cost(P_{B_i})^{\alpha_i}}.
\end{align*}
By Claim~\ref{claim:conditional}, 
$\bE_B\pb{\cost(P_{B_i})^{\alpha_i}} =  \E{\cost(P_{B_i})^{\alpha_i}}/p_B = 
\E{\cost(P_{B_i})^{\alpha_i}}/p_{B_i}=
\bE_{B_i}\pb{\cost(P_{B_i})^{\alpha_i}}
$
and
$\psE_B[f_{B_i}^{\alpha_i}] = \psE[f_{B_i}^{\alpha_i}] / p_B= \psE[f_{B_i}^{\alpha_i}]/p_{B_i} = \psE_{B_i}[f_{B_i}^{\alpha_i}]$.
Using that $\sheight(B_i) \leq h-1$ and applying the induction hypothesis, we get
\begin{align*}
    \bE_B\pb{\cost(P_B)^r} &= \sum_{\substack{\alpha_1, \ldots, \alpha_t \ge 0 \\ \alpha_1 + \cdots + \alpha_t = r}} \binom{r}{\alpha}  \prod_{i=1}^t \bE_B\pb{\cost(P_{B_i})^{\alpha_i}}
    = \sum_{\substack{\alpha_1, \ldots, \alpha_t \ge 0 \\ \alpha_1 + \cdots + \alpha_t = r}} \binom{r}{\alpha}  \prod_{i=1}^t \bE_{B_i}\pb{\cost(P_{B_i})^{\alpha_i}} \\
    &\le \sum_{\substack{\alpha_1, \ldots, \alpha_t \ge 0 \\ \alpha_1 + \cdots + \alpha_t = r}} \binom{r}{\alpha}  \prod_{i=1}^t 
    \bell_{\h-1}(\alpha_i) \cdot \psE_{B_i} f_{B_i}^{\alpha_i} = \sum_{\substack{\alpha_1, \ldots, \alpha_t \ge 0 \\ \alpha_1 + \cdots + \alpha_t = r}} \binom{r}{\alpha}  \prod_{i=1}^t 
    \bell_{\h-1}(\alpha_i) \cdot \psE_{B} f_{B_i}^{\alpha_i}.
\end{align*}
Now, every $\alpha$ in the summation defines an unlabeled partition $\lambda$ of $n$: $\lambda$ is obtained by sorting all non-zero entries $\alpha_i$ of $\alpha$. Let us denote this $\alpha \to \sigma$. For example, $\alpha=(4,1,0,2,4,0,2, 2) \to \lambda=(4,4,2,2,2,1)$. Thus, to go over all $\alpha$, it is sufficient to go over all 
$\lambda \prt r$ and then all $\alpha$ such that $\alpha\to \lambda$. To do the latter, we go over all choices of distinct indices $j_1, \dots, j_{|\lambda|}$ and define $\alpha$ as follows: $\alpha_{j_i}=\lambda_i$ for $i\in [|\lambda|]$ and $\alpha_j = 0$ for all other $j$. However, if $\lambda_i = \lambda_{i'}$, then indices $j_1, \dots, j_{|\lambda|}$ and those with $j_i$ and $j_{i'}$ swapped define the same $\alpha$. It is easy to see that the above procedure defines every $\alpha$ exactly $\prod_{j=1}^r \cnt(j,\lambda)$ times. In the example above with $\lambda = (4,4,2,2,2,1)$, this procedure defines every $\alpha$ exactly $2!\cdot 3!$ times. Finally note that $\binom{r}{\alpha} = \binom{r}{\lambda}$.
Keeping this discussion in mind, we rewrite the upper bound on  $\bE_B\pb{\cost(P_B)^r}$ as follows.
\begin{align*}
         \bE_B\pb{\cost(P_B)^r} &
         \le \sum_{\lambda \prt r, ~|\lambda| \le t} \binom{r}{\lambda}  \frac{\sum_{\substack{j_1, \ldots, j_{|\lambda|} \in [t] \\ \text{distinct}}}\prod_{i=1}^{|\lambda|}
        \bell_{\h-1}(\lambda_i)\cdot \psE_{B} [f_{B_{j_i}}^{\lambda_i}] }{\prod_{j=1}^r \cnt(j,\lambda)}
        \\ 
        &\leq \sum_{\lambda \prt r, ~|\lambda| \le t} \binom{r}{\lambda} \cdot\frac{\prod_{i=1}^{|\lambda|} 
        \bell_{\h-1}(\lambda_i)}{\prod_{j=1}^r \cnt(j,\lambda)}  \sum_{j_1, \ldots, j_{|\lambda|} \in [t]} \prod_{i=1}^{|\lambda|}  \psE_{B}[ f_{B_{j_i}}^{\lambda_i}].
\end{align*}
Note that we removed the requirement that all $j_i$ are distinct in the last inequality (this is valid, since all terms are non-negative). Now we use Claim~\ref{claim:sum-f-ineq} and then the majorization inequality (see Lemma~\ref{lemma:ps_major}) to upper bound each term in the inner sum.
$$\sum_{j_1, \ldots, j_{|\lambda|} \in [t]} \prod_{i=1}^{|\lambda|}  \psE_{B}[ f_{B_{j_i}}^{\lambda_i}]
= \prod_{i=1}^{|\lambda|} \sum_{j\in [t]} \psE_{B}\Bigl[f_{B_{j}}^{\lambda_i}\Bigr]
=
\prod_{i=1}^{|\lambda|}  \psE_{B}\Bigl[\sum_{j\in [t]}f_{B_{j}}^{\lambda_i}\Bigr]
\leq 
\prod_{i=1}^{|\lambda|} \psE_{B}[ f_{B}^{\lambda_i}] \leq \psE[f_B^r].
$$
Using the recurrence relation for multidimensional Bell numbers from Lemma~\ref{claim:bell-recurrence}, we conclude that
$$\bE_B\pb{\cost(P_B)^r} \leq 
 \sum_{\lambda \prt r} \frac{\binom{r}{\lambda} \prod_{i=1}^{|\lambda|}
        \bell_{\h-1}(\lambda_i)}{\prod_{j=1}^r \cnt(j,\lambda)} \psE[f_B^r] = \bell_\h(r) \psE[f_B^r].
$$
\end{proof}

\begin{theorem}\label{thm:main:SP}
Algorithm~\ref{alg:rounding} gives an $(1+ \varepsilon) \abell_d(p) \eqdef(1+\varepsilon)\bell_d(p)^{1/p}$ approximation for the problem in series-parallel graphs of order $d$ in time polynomial in $n^{O(p)}$ and $1/\varepsilon$.
\end{theorem}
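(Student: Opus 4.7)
The plan is to reduce the theorem to Lemma~\ref{lem:cost-analysis} applied to the top-level block $G$ itself with $r = p$, once for each coordinate $i \in [\ell]$ of the cost vectors. Fix $i$ and set the scalar weights to $a_e = c_e(i)$. Since every $s$-$t$ path visits $G$, we have $p_G = 1$, so the $G$-conditional expectations and pseudo-expectations coincide with their unconditional counterparts. Writing $f_i = \sum_{e\in E} c_e(i)\,x_e$ and $\cost_i(P) = \sum_{e\in P} c_e(i)$, Lemma~\ref{lem:cost-analysis} gives
\[
\E\bigl[\cost_i(P)^p\bigr] \;\le\; \bell_d(p)\cdot \psE[f_i^p],
\]
where $d = \sheight(G)$. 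Summing over $i\in[\ell]$ and using that the SoS objective is a valid lower bound on $\opt^p$,
\[
\E\Bigl[\Bigl\|\sum_{e\in P}c_e\Bigr\|_p^p\Bigr]
= \sum_{i=1}^\ell \E\bigl[\cost_i(P)^p\bigr]
\;\le\; \bell_d(p)\cdot \psE\Bigl[\sum_{i=1}^\ell f_i^p\Bigr]
\;\le\; \bell_d(p)\cdot \opt^p.
\]

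Next I would amplify this in-expectation bound into a high-probability $(1+\varepsilon)$-factor guarantee by applying Markov's inequality to the non-negative random variable $\bigl\|\sum_{e\in P}c_e\bigr\|_p^p$:
\[
\Pr\Bigl[\Bigl\|\sum_{e\in P}c_e\Bigr\|_p > (1+\varepsilon)\,\abell_d(p)\cdot\opt\Bigr]
= \Pr\Bigl[\Bigl\|\sum_{e\in P}c_e\Bigr\|_p^p > (1+\varepsilon)^p\,\bell_d(p)\cdot\opt^p\Bigr]
\;\le\; (1+\varepsilon)^{-p}.
\]
Since $(1+\varepsilon)^{-p}\le (1+\varepsilon)^{-1}$ for $p\ge 1$, repeating the rounding $O(\varepsilon^{-1}\log(1/\delta))$ times independently and returning the best path produces a $(1+\varepsilon)\,\abell_d(p)$-approximation with probability at least $1-\delta$.

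For the running time, the degree-$2p$ SoS relaxation over $m$ edge variables can be solved to sufficient accuracy in time $m^{O(p)}/\varepsilon^{O(1)}$ under the standard regularity conditions, and the rounding procedure itself is linear given $\psE$, so the overall running time is $m^{O(p)}/\varepsilon^{O(1)}$ as claimed. The asymptotic estimate $\bell_d(p)^{1/p} = O(pd^{1-1/p})$ is supplied by Claim~\ref{claim:bellNumberAsymptotics}. Because Lemma~\ref{lem:cost-analysis} already carries out the main inductive work, there is no significant obstacle in this last step; the only subtlety worth flagging is that Markov is applied to the $p$th power of the cost rather than to the cost itself, which is what converts the in-expectation $\abell_d(p)$ bound into the $(1+\varepsilon)\,\abell_d(p)$ with-high-probability approximation stated in the theorem.
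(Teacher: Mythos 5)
Your proposal is correct and follows essentially the same route as the paper's proof: apply Lemma~\ref{lem:cost-analysis} coordinate-wise with $a_e = c_e(i)$ and $r=p$ to the whole graph (where $p_G=1$), sum over coordinates to bound $\E\bigl[\|\sum_{e\in P}c_e\|_p^p\bigr]$ by $\bell_d(p)\cdot\opt^p$, then use Markov's inequality on the $p$-th power of the cost and independent repetition of the rounding to obtain the $(1+\varepsilon)\abell_d(p)$ guarantee in time $m^{O(p)}/\varepsilon^{O(1)}$. The only cosmetic difference is that you apply Markov with threshold factor $(1+\varepsilon)^p$ while the paper uses factor $(1+\varepsilon)$ and then takes $p$-th roots; both yield the stated bound.
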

\begin{proof}
We apply Lemma~\ref{lem:cost-analysis} with $a_e =c_e(i)$ to every coordinate $i\in [\ell]$ and add up the obtained upper bounds on $\E{\bigl(\sum_{e\in P} c_e(i)\bigr)^p}$. We get that 
$$\E{\Bigl\|\sum_{e\in P} c_e\Bigr\|_p^p} \leq \bell_{\h}(p) \psE\Bigl[\sum_{i=1}^\ell \Bigl(\sum_e c_e(i) x_e\Bigr)^p\Bigr] \leq \bell_{\h}(p) \cdot \opt^p.$$
By Markov's inequality, $\Bigl\|\sum_{e\in P} c_e\Bigr\|_p^p \leq (1+\varepsilon) \bell_{\h}(p) \cdot \opt^p$ with probability at least $\varepsilon/(1+\varepsilon)$. By running the algorithm $1/\varepsilon$ times, we find a solution of cost at most $\Bigl\|\sum_{e\in P} c_e\Bigr\|_p \leq (1+\varepsilon)^{1/p} \bell_{\h}(p)^{1/p} \cdot \opt \leq (1+\varepsilon) \abell_\h(p) \cdot\opt$ with constant probability. (As is standard, we can run this procedure many times and make the failure probability exponentially small.)
\end{proof}

\section{Algorithms for \texorpdfstring{$\ell_p$}{l-p}-Shortest Path in Arbitrary Graphs}
\label{sec:general}
In this section, we describe an approximation algorithm for $\ell_p$-Shortest Path in arbitrary graphs. 

We note that there is a block-box reduction from problem in arbitrary graphs to that in series-parallel graphs, which is implicitly used by Li, Xu, and Zhang~\cite{li2023polylogarithmic} in their algorithm for $\ell_\infty$-Shortest Path (Robust $s$-$t$ Path). This reduction outputs a series-parallel graph with $O(n^{\log n})$ vertices, where $n$ is the number of vertices in the original graph. By using this reduction, we immediately get an $O(p \log^{1-1/p} n)$-approximation algorithm for general graphs with running time $n^{O(p\log n)}$. We describe how to get an approximation algorithm for general graphs with an improved running time and slightly improved approximation factor; namely we describe how to get a $O(c p\log^{1-1/p} n)$-approximation in time $m^{O( c e^{1/c} \log n)}$ for every $c\in(0,1/2)$.
We assume below that $\ell$ is at most polynomial in $n$; thus, we may assume $r\leq \roundup{\log_2 \ell} = O(\log n)$ (since all norms $\|\cdot\|_{r}$ with $r \geq \log_2 \ell$ are equivalent within a factor of 2).

\paragraph{Layered graphs and reduction from general graphs to layered graphs.}
We say that a DAG $G = (V, E)$ is an $s$-$t$ layered graph with $\Delta$ edge layers if 
$V$ is the disjoint union of vertex layers $V_0,V_1,\dots,V_{\Delta}$, $E$ is the disjoint union of layers $E_1,\dots, E_{\Delta}$, and each edge in \(E_i\) goes from \(V_{i-1}\) to \(V_i\).
Further, we require that $V_0=\{s\}$ and $V_\Delta = \{t\}$.

We transform an arbitrary graph $G$ with terminals $s$ and $t$ into a layered graph $\hat G=(V_{\hat{G}}, E_{\hat G})$ with $\Delta = n-1$ edge layers. We create vertex layers $V_0,V_1,\dots,V_{\Delta}$: $V_0= \{s\}$, $V_\Delta = \{t\}$, each $V_i$ is a disjoint copy of $V_G$. We connect $\hat u\in V_i$ with $\hat v\in V_{i+1}$ if there is an edge $(u,v)\in E_G$ between the corresponding vertices in $G$. The vector cost of $(\hat u, \hat v)$ equals that of $(u,v)$.
Additionally, we add \textit{padding} edges between copies of $t$ in adjacent layers and assign these edges cost $0$. 

For every $s$-$t$ path $P$ with at most $\Delta$ edges in $G$ there is a corresponding path $\hat P$ in $\hat G$ and vice versa ($P$ might not be a simple path); if path $P$ has $k < \Delta = n-1$ edges, then it contains $k$ non-padding edge and ends with $\Delta - k$ padding edges. Paths $P$ and $\hat P$ have the same vector costs.
Note that the $\ell_p$-Shortest Path $P^*$ between $s$ and $t$ is a simple path and thus contains at most $\Delta = n-1$ edges. Therefore, there is path $\hat P^*$ in $\hat G$ corresponding to $P^*$. An $\alpha$-approximation for $\hat P^*$ in $\hat G$ gives an $\alpha$-approximation for $P^*$ in $G$. 
This reduction shows that it is sufficient to consider layered graphs.

\paragraph{An algorithm for layered graphs.}
Assume that $G$ is a layered graph with $\Delta$ edge layers, source $s$, and sink $t$.
We use the SoS relaxation from Section~\ref{sec:sos-relaxation} for $G$. Let $a = \roundup{e^{1/c}}$.
We require that $\psE$ be a pseudo-expectation of degree $2d = 2(p+(a+1)\roundup{\log_{a+1} \Delta}) = \Theta(p + ce^{1/c}\log \Delta)$. 
For a set of edges $A$ of  size at most $d$, we define polynomial $h_A = \sum_{e\in A} x_a$ and conditional pseudo-expectation $\psE_A[\cdot]\eqdef \psE[\cdot\given h_A]$.
Given $A$ and a set of layer indices $I\subseteq [\Delta]$ so that $|A| + |I| \leq d$, we define a sampling procedure that samples an edge from each layer $E_i$ with $i\in I$ using pseudo-expectation $\psE_A$.
We assume that the two algorithms below have access to graph $G$ and pseudo-expectation $\psE$.

\begin{algorithm}[H]
\caption{Edge sampling procedure}\label{alg:sampling}
\begin{algorithmic}[1]
\State \textbf{Input:} a subset of layer indices $I\subseteq[\Delta]$ and a subset $A$ of edges.
\State \textbf{Output:} one edge from every layer $E_i$ with $i\in I$.
\State $R = \varnothing$
\ForAll{$i\in I$}
    \State Sample $e\in E_i$ with probability of choosing $e$ equal to $\psE_{A\cup R}[x_e]$
    \State $R = R\cup \{e\}$
\EndFor
\State \Return $R$
\end{algorithmic}
\end{algorithm}

 We say that we sample edges $e_1,\dots, e_k$ in layers $i_1,\dots, i_k$ conditioning on set $A$ to mean that we run Algorithm~\ref{alg:sampling} with parameters $I = \{i_1,\dots,i_k\}$ and $A$.

\begin{algorithm}[H]
\caption{Rounding algorithm for layered graphs}\label{alg:rounding-general}
\begin{algorithmic}[1]
\State \textbf{Input:} indices $y$ and $z$ of two edge layers ($1\leq y\leq z\leq \Delta$) and a subset of edges $A$.
\State \textbf{Output:} a path in $\hat G$ traversing layers $E_y$ to $E_z$.
\Function{FindPath}{$y, z, A$}
\If{$z - y  + 1 \leq a$}
    \State Sample edges $e_0, \dots, e_{z-y}$ in layers $E_y, E_{y+1},\dots, E_z$ conditioning on $A$. 
    \State\Return the path formed by $e_0,\dots, e_{z-y}$.
\EndIf
\State Let $m_i = y + \roundup{\frac{z-y}{a+1}\cdot i}$ for $i\in[a]$.
\State Sample edges $e_1,\dots, e_{a}$ in layers $m_1,\dots, m_a$ conditioning on $A$.
\State  Let $A' = A \cup \{e_{i}:i\in [a]\}$.\label{alg_line:random_a_p}
\State Let $m_0 = z-1$ and $m_{a+1} = y+1$.
\For{$i=0$ \textbf{to} $a$}
    \State $P_i = \mathrm{FindPath}(m_i+1, m_{i+1}-1, A')$ unless $m_i+1 > m_{i+1}-1$ then $P_i=\varnothing$
\EndFor
\State Let $P$ be the path formed by $P_0,e_1, P_1, e_2,\dots, e_a,P_a$.
\State \Return \(P\)\label{alg_line:rounding_sp_return_layered}
\EndFunction
\end{algorithmic}
\end{algorithm}

To solve the problem, we run the algorithm with $y=1$, $z=\Delta$, and $A=\varnothing$. We denote the path that the algorithm finds by $P$.
The analysis of this algorithm is quite similar to that of Algorithm~\ref{alg:rounding} for series-parallel graphs.
Instead of the series-parallel decomposition of $G$, we will consider the recursion tree whose nodes are invocations of FindPath. The height $h = h_\nu$ of a node $\nu$ is the length of the longest branch leaving $\nu$  (e.g., the height of leaves is 0).
Let us make a few observations and introduce some notation about the recursion:
\begin{itemize}
\item Each recursion node is uniquely determined by parameters $y=y_\nu$ and $z=z_\nu$ (which are not random). 
\item For each layer $E_i$, there is a non-random node $\nu$ where an edge $e\in E_i$ is sampled (this edge $e$ will be part of $P$).
\item For a node $\nu$, let $E_\nu = E_{x_\nu}\cup \dots \cup E_{y_\nu}$ and $P_{\nu} = P\cap E_\nu$. When FindPath returns from node $\nu$, it returns path $P_\nu$ that traverses all of the layers forming $E_\nu$. 
\item Let $A_\nu$ be the value of $A$ when we execute node $\nu$; $A_\nu$ is a random set of edges. It contains exactly the edges sampled at ancestor nodes of $\nu$.
 While $A_\nu$ is random, the set of layers hit by $A_\nu$ 
 is not random and depends only on $\nu$. 
 We say that a set of edges $A$ is $\nu$-admissible if $\Prob{A_\nu = A} > 0$.
\item For a $\nu$-admissible set $A$, define $\EE{A}{\cdot} \eqdef \E{\cdot \given A\subseteq P}$.
\end{itemize}
For a set of edges $A$, let $p_A \eqdef \Prob{A\subseteq P}$.
\begin{claim}
Assume that $A$ is $\nu$-admissible. Then
$$
p_A \eqdef \Prob{A\subseteq P} = \psE[h_A].
$$
\end{claim}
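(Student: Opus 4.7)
The plan is to prove this claim by telescoping the conditional sampling probabilities along the recursion path from the root of the recursion tree to $\nu$. I interpret $h_A$ as the product $\prod_{e\in A} x_e$ (this is what makes the statement work: in a 0/1 integral solution $\prod_{e\in A} x_e = \mathbf{1}[A\subseteq P]$, and $h_\varnothing = 1$ gives $\psE[h_\varnothing]=1$; a literal sum reading is inconsistent even at $|A|=0$).

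First, I would establish the combinatorial identity $\{A\subseteq P\} = \{A_\nu = A\}$ for $\nu$-admissible $A$. By the recursive decomposition of the layer interval $[y_\nu, z_\nu]$, the sets of layers at which the ancestors of $\nu$ draw their samples are pairwise disjoint, and every edge drawn at a node is spliced directly into the returned path via the concatenation $P_0, e_1, P_1, \ldots, e_a, P_a$ in Algorithm~\ref{alg:rounding-general}. Since $\nu$-admissibility forces $A$ to contain exactly one edge from each such layer, the event that all of $A$ lies in $P$ coincides with the event that the sample at each ancestor of $\nu$ equals the corresponding edge of $A$, i.e., $A_\nu = A$.

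Next, I would list the edges of $A$ as $e_1, \ldots, e_s$ in the order in which they are sampled along the path from the root to $\nu$, and set $A_i = \{e_1, \ldots, e_i\}$ with $A_0 = \varnothing$ and $A_s = A$. By inspection of Algorithm~\ref{alg:sampling} and the definition $\psE_{A_{i-1}}[\,\cdot\,] = \psE[\,\cdot\, h_{A_{i-1}}]/\psE[h_{A_{i-1}}]$, conditional on the event $\{A_{i-1}\subseteq P\}$ the probability that the next sample is $e_i$ equals
\[
    \psE_{A_{i-1}}[x_{e_i}] \;=\; \frac{\psE[x_{e_i}\, h_{A_{i-1}}]}{\psE[h_{A_{i-1}}]} \;=\; \frac{\psE[h_{A_i}]}{\psE[h_{A_{i-1}}]},
\]
where the last equality uses $x_{e_i}^2 = x_{e_i}$ in the Boolean quotient ring, so that $x_{e_i}\, h_{A_{i-1}} = h_{A_i}$. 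Multiplying these conditional probabilities yields the telescoping identity
\[
    \Prob{A\subseteq P} \;=\; \Prob{A_\nu = A} \;=\; \prod_{i=1}^s \frac{\psE[h_{A_i}]}{\psE[h_{A_{i-1}}]} \;=\; \frac{\psE[h_A]}{\psE[h_\varnothing]} \;=\; \psE[h_A].
\]

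The main obstacle is the initial combinatorial reduction: one must carefully verify, from the definition of the recursion, that sampling layers at distinct ancestors of $\nu$ are disjoint and that every sampled edge really does survive into the final path $P$, so that the event $\{A\subseteq P\}$ agrees exactly with $\{A_\nu = A\}$. Once that picture is in place, the telescoping calculation is essentially a one-liner, and no additional SoS machinery is needed beyond the Boolean identity $x_e^2 = x_e$.
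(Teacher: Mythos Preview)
Your proof is correct and is essentially the same telescoping/induction argument the paper gives: list the edges of $A$ in the order they are sampled, use that the sampling probability of the $i$-th edge is $\psE_{A_{i-1}}[x_{e_i}]$, and telescope to $\psE[h_A]$. Your reading of $h_A$ as the product $\prod_{e\in A}x_e$ is indeed the intended one (the ``$\sum$'' in the paper's definition is a typo), as confirmed both by the paper's own telescoping step $\psE_{e_1,\dots,e_{i-1}}[x_{e_i}]\cdot\psE[h_{\{e_1,\dots,e_{i-1}\}}]=\psE[h_{\{e_1,\dots,e_i\}}]$ and by the subsequent remark that the probability of sampling $R$ equals $\psE_A[\prod_{e\in R}x_e]$.
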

\begin{proof}
Let us say $A = \{e_1,\dots, e_k\}$ where edges $e_i$ are listed in the same order as they are sampled. Note that when edge $e_i$ is sampled, the pseudo-distribution is conditioned on $\{e_1,\dots,e_{i-1}\}$. We prove the desired identity by induction:
\begin{align*}
    \Prob{e_1,\dots,e_i\in P} &= \Prob{e_i\in P\given e_1,\dots, e_{i-1}\in P} \cdot \Prob{e_1,\dots, e_{i-1}\in P} \\
    =& \psE_{e_1,\dots, e_{i-1}}[x_{e_i}] \cdot\psE[h_{\{e_1,\dots, e_{i-1}\}}] = \psE[h_{\{e_1,\dots,e_i\}}].
\end{align*}
\end{proof}
Importantly, the claim is generally not true for arbitrary sets $A$ of size at most $a\roundup{\log_a \Delta}$. The claim shows that the order in which edges are examined in the for-loop of Algorithm~\ref{alg:sampling} does not affect the distribution of $R$; namely, the probability of sampling $R$ equals $\psE_A[\prod_{e\in R}x_e]$.

\begin{claim} Algorithms~\ref{alg:sampling} and \ref{alg:rounding-general} satisfy the following properties.
\begin{enumerate}
    \item The sampling procedure in Algorithm~\ref{alg:sampling} is well defined. Namely, $\psE[x_i] \geq 0$ and $\sum_{e\in E_i}\psE[x_i] = 1$ (see line 5 of the algorithm).
    \item Algorithm~\ref{alg:rounding-general} always returns an $s$-$t$ path in $G$.
\end{enumerate}
\end{claim}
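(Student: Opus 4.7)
For Part 1, I would split the claim into non-negativity and summation-to-one. For non-negativity of $\psE_{A\cup R}[x_e] = \psE[x_e\, h_{A\cup R}]/\psE[h_{A\cup R}]$, note that $x_e\, h_{A\cup R}$ is a product of Boolean indicator variables, so modulo the ideal $\langle x_f^2 - x_f\rangle_f$ it equals its own square and is a sum of squares; hence the numerator is $\ge 0$. For the denominator, I would argue inductively on the course of the sampling procedure: initially $\psE[h_A] = p_A > 0$ because $A$ is $\nu$-admissible, and whenever we sample an $e$ from the support of the currently defined distribution we have $\psE[h_{A\cup R\cup\{e\}}] = \psE[x_e\, h_{A\cup R}] = \psE_{A\cup R}[x_e] \cdot \psE[h_{A\cup R}] > 0$, so the denominator stays strictly positive throughout.

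For summation-to-one, I would apply \Cref{claim:flow-basics}, item~1, to the vertex set $S = V_0 \cup V_1 \cup \cdots \cup V_{i-1}$, which contains $s$ but not $t$. In the layered graph $\hat G$, edges only travel from lower to higher layers, so $\delta^+(S) = E_i$ and $\delta^-(S) = \varnothing$, giving $\x^+_S - \x^-_S = \sum_{e \in E_i} x_e$. Thus $\psE\bigl[\bigl(\sum_{e\in E_i} x_e\bigr)\, h_{A\cup R}\bigr] = \psE[h_{A\cup R}]$, and dividing through yields $\sum_{e\in E_i} \psE_{A\cup R}[x_e] = 1$.

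For Part 2, the key observation is that in a layered DAG any two edges $e \in E_{j-1}$ and $e' \in E_j$ in consecutive edge layers are compatible if and only if they share their common vertex in $V_{j-1}$, because any $s$-$t$ path visits exactly one vertex per vertex layer. Combined with \Cref{claim:flow-basics}, item~2, whenever $\psE_S[x_{e'}] > 0$ the edge $e'$ must be compatible with every edge in $S$; so any sampled edge in layer $E_j$ automatically shares its $V_{j-1}$ vertex with any previously chosen edge of $E_{j-1}$, and symmetrically for layer $E_{j+1}$.

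I would then induct on $z - y$ to show that FindPath$(y, z, A)$ returns a directed path traversing exactly layers $E_y, \dots, E_z$. The base case $z - y + 1 \le a$ follows directly from the observation above applied iteratively to the sequentially sampled $e_0, \dots, e_{z-y}$. In the recursive case, after the checkpoints $e_1, \dots, e_a$ are added to $A'$, every edge in each recursively produced subpath $P_i$ is compatible with $e_i$ (when $i \ge 1$) and $e_{i+1}$ (when $i \le a-1$); by the inductive hypothesis each $P_i$ is itself a path, and compatibility forces its endpoints to match the bracketing checkpoint edges. Hence $P_0, e_1, P_1, \dots, e_a, P_a$ is a single directed path. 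At the top level $y = 1$ and $z = \Delta$, so the resulting path starts at the unique vertex of $V_0 = \{s\}$ and ends at the unique vertex of $V_\Delta = \{t\}$. The main technical point is the layered-graph compatibility characterization, which is what lets conditioning-based rounding automatically stitch the sampled edges into a genuine $s$-$t$ path.
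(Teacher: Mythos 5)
Your proof is correct, and Part~1 is essentially the paper's argument: non-negativity because monomials in Boolean variables are squares, and summation-to-one via \Cref{claim:flow-basics}, item~1, applied across the cut between consecutive vertex layers (the paper phrases this as $\psE[g_i h_A]=\psE[g_1 h_A]=\psE[h_A]$ rather than naming the vertex set $V_0\cup\dots\cup V_{i-1}$, but it is the same computation). For Part~2 you take a genuinely different, though closely related, route. The paper fixes two sampled edges $e\in E_i$, $e'\in E_{i+1}$, argues by a case analysis on the recursion tree that the later-sampled one is drawn conditioned on a set $S$ containing the other, and then uses flow conservation at the shared vertex $v$: since $\psE_S[x_e]=1$, the conditional mass on $\delta^+(v)$ is $1$, so the next edge leaves $v$ with probability $1$. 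You instead invoke \Cref{claim:flow-basics}, item~2: any edge with positive conditional mass must be compatible with every edge in the conditioning set, and in a layered graph two consecutive-layer edges on a common $s$-$t$ path must share their middle vertex. Both arguments hinge on the same structural fact — that consecutive layers are never sampled in sibling branches, so the adjacent-layer edge is always in the conditioning set — which the paper states explicitly and you encode implicitly through the induction with bracketing checkpoint edges; your argument buys a slightly more "combinatorial" endpoint-matching step at the price of leaning on item~2 rather than the flow constraint directly. Two small points to tighten: your compatibility characterization should be stated only in the direction you use (compatible $\Rightarrow$ shared vertex); the converse can fail in the layered graph $\hat G$, since the shared vertex need not be reachable from $s$ or able to reach $t$. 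Also, your inductive statement for $\mathrm{FindPath}(y,z,A)$ should explicitly assert that the returned path's endpoints match the edges of $A$ lying in layers $y-1$ and $z+1$ (and that adjacent checkpoints sampled at the same node, which occur when a subinterval is empty, are consistent); both follow from the same conditioning mechanism, but as written they are only implicit.
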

\begin{proof}
1. First, $\psE[x_e]= \psE[x_e^2] \geq 0$. Let $g_i = \sum_{e\in E_i} x_e$. Then
$\psE_A[g_i] \cdot \psE[h_A]= \psE[g_i \cdot h_A]$.
By Claim \ref{claim:flow-basics}, item 1, $\psE[g_ih_A] = \psE[g_1 h_A] = \psE[h_A]$. We conclude that $\psE_A[g_i] = 1$.

2. It is straightforward that the algorithm will sample an edge from each layer $E_i$. We need to show that these edges form a path. Let us say the algorithm samples an edge $e= (u,v)$ in $E_i$ and $e'=(u',v')$ in $E_{i+1}$. We prove that $v=u'$. Let us see where edges $e$ and $e'$ are sampled. Note that paths $P_j$ found in FindPath are not adjacent to each other. Therefore, it is not possible that $e$ and $e'$ are sampled in different branches of the recursion. The only two possibilities are:
\begin{itemize}
\item $e$ and $e'$ are sampled at the same node of the recursion tree,
\item one of the edges is sampled at a node $\nu$ and the other at a descendant of $\nu$.
\end{itemize}
In the former case, one of the edges, say $e$ is sampled first by Algorithm~\ref{alg:sampling}. Then $e'$ is sampled using distribution $\psE_{A\cup R}[x_{e'}]$ with $e\in R$. In the latter case, one of the edges, say $e'$, is sampled using distribution $\psE_{A'}[x_{e'}]$ with $e\in A'$. That is, in either case $e'$ is sampled using distribution $\psE_S[x_{e'}]$ with $e\in S$.\footnote{It is possible that the roles of edges $e$ and $e'$ are reversed, but that case is completely analogous to the one we consider in the proof.}. Now, $\psE[x_e\given h_S] = 1$. From the flow conservation constraint at vertex $v$, we get $\psE_S[\sum_{e'\in \delta^+(v)} x_{e'}\given h_S] = 1$. That is, $e'\in \delta^+(v)$ with probability 1, as required.
\end{proof}

Now we will upper bound the approximation factor that our algorithm gives.
Consider a set of non-negative edge costs $a_e$. As in Section~\ref{sec:rounding}, define $\cost(P') = \sum_{e\in P'} a_e$ for a path $P'$ and $f_{\nu} = \sum_{e\in E_\nu} a_e x_e$. 
We prove an upper bound for $\E{\cost(P)^r}$ (cf. Lemma~\ref{lem:cost-analysis}). 
\begin{lemma}\label{lem:cost-analysis-layered}
Let $\psE$ be a feasible solution for the SoS relaxation of $\ell_p$-Shortest Path. Let $\nu$ be a node in the recursion tree of height $\h=h_\nu$. For every $\nu$-admissible set $A$ with $p_a > 0$, we have
$$
\EE{A}{\cost(P_\nu)^{r}} \leq \bell_{\h}(r) \psE_A[f_\nu^{r}]. 
$$
\end{lemma}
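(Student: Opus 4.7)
The plan is to argue by induction on the height $h = h_\nu$, mirroring the proof of \Cref{lem:cost-analysis} for series-parallel graphs but handling the additional wrinkle that $a$ random edges are sampled at each internal recursion node. The key new idea is to prove the inductive step in two stages: first establish the target bound after conditioning on the realized sampled edges, and only then average over those edges using the SoS flow identity $\sum_{e \in E_i} x_e = 1$.

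For the base case $h = 0$, the node $\nu$ is a leaf of the recursion tree ($k \eqdef z_\nu - y_\nu + 1 \leq a$), and \Cref{alg:sampling} samples one edge $\epsilon_i$ from each of the $k$ layers. The joint probability of the realization $(\epsilon_1,\dots,\epsilon_k)$ given $A\subseteq P$ equals $\psE_A[\prod_i x_{\epsilon_i}]$, by the same chain-rule argument used to prove $p_A = \psE[h_A]$. Expanding $\cost(P_\nu)^r = (\sum_i a_{\epsilon_i})^r$ via the multinomial theorem, invoking Boolean idempotency $x_\epsilon^{\alpha} = x_\epsilon$ for $\alpha \geq 1$ and the flow identity $\sum_{\epsilon \in E_i} x_\epsilon = 1$ for $\alpha = 0$, and applying the SoS-level inequality $\sum_{\epsilon} (a_\epsilon x_\epsilon)^{\alpha} \leq (\sum_{\epsilon} a_\epsilon x_\epsilon)^{\alpha}$ (whose difference is a nonnegative combination of monomial squares) yields $\EE{A}{\cost(P_\nu)^r} \leq \psE_A[f_\nu^r] = \bell_0(r) \psE_A[f_\nu^r]$.

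For the inductive step, write
\[\EE{A}{\cost(P_\nu)^r} = \sum_{\epsilon} \frac{p_{A'}}{p_A} \EE{A'}{\cost(P_\nu)^r}\]
with $A' = A \cup \{\epsilon_1,\dots,\epsilon_a\}$, reducing the task to establishing $\EE{A'}{\cost(P_\nu)^r} \leq \bell_h(r) \psE_{A'}[f_\nu^r]$ for each realization $\epsilon$. Conditional on $A'$, the recursive paths $P_{\nu_0},\dots,P_{\nu_a}$ are mutually independent and each $a_{\epsilon_j}$ is a constant. I would apply the multinomial theorem to $\cost(P_\nu)^r$ viewed as a sum of $2a+1$ nonnegative terms, use independence to split the expectation, invoke the inductive hypothesis on each $\EE{A'}{\cost(P_{\nu_i})^{\alpha_i}}$, and rewrite each constant $a_{\epsilon_j}^{\beta_j}$ as $\psE_{A'}[g_j^{\beta_j}]$, where $g_j \eqdef \sum_{e \in E_{m_j}} a_e x_e$. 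This identity holds because parallel edges in layer $m_j$ are incompatible with $\epsilon_j$ (\Cref{claim:flow-basics}, item 2), so under $\psE_{A'}$ only $x_{\epsilon_j}$ has nonzero pseudo-expectation in that layer. This lets me treat the sampled edges uniformly as trivial height-$0$ ``children'' with Bell contribution $\bell_0 = 1 \leq \bell_{h-1}$.

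The remainder of the argument now mirrors \Cref{lem:cost-analysis}: group the $2a+1$-tuple exponents by unlabeled partition $\lambda \prt r$; apply \Cref{claim:sum-f-ineq} to bound $\sum_k \psE_{A'}[F_k^{\lambda_i}] \leq \psE_{A'}[f_\nu^{\lambda_i}]$; use the majorization inequality (\Cref{lemma:ps_major}) to conclude $\prod_i \psE_{A'}[f_\nu^{\lambda_i}] \leq \psE_{A'}[f_\nu^r]$; and apply the Bell recurrence (\Cref{claim:bell-recurrence}) to collect the terms into $\bell_h(r) \psE_{A'}[f_\nu^r]$. Averaging over $\epsilon$ then telescopes via $h_{A'} = h_A \prod_j x_{\epsilon_j}$ and $\sum_{e \in E_{m_j}} x_e = 1$: $\sum_\epsilon (p_{A'}/p_A)\psE_{A'}[f_\nu^r] = \psE[f_\nu^r h_A]/\psE[h_A] = \psE_A[f_\nu^r]$, giving $\EE{A}{\cost(P_\nu)^r} \leq \bell_h(r)\psE_A[f_\nu^r]$ as required. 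The main obstacle is that conditional pseudo-expectations $\psE_{A'}[\cdot]$ appearing after one inductive step cannot in general be bounded termwise by $\psE_A[\cdot]$ (since pseudo-independence of the $f_{\nu_i}$ across different children need not hold); the two-stage approach of proving the $A'$-conditional version first and only later averaging via the layer-flow identity is what sidesteps this difficulty.
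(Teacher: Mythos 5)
Your proposal is correct and follows essentially the same route as the paper's proof: induct on the height of the recursion node, first prove the bound conditioned on the realized set $A'$ of edges sampled at the current node (using independence of the recursive subpaths, the multinomial expansion, \Cref{claim:sum-f-ineq}, the majorization inequality of \Cref{lemma:ps_major}, and the Bell recurrence of \Cref{claim:bell-recurrence}), and then average over $A'$ via the two decomposition identities for $\EE{A}{\cdot}$ and $\psE_A[\cdot]$. The only differences are cosmetic: you treat each sampled middle layer as a separate height-$0$ child (using $\bell_0=1\leq \bell_{\h-1}$) where the paper lumps them into a single term $f_0$ with matched moments, and you settle for an inequality in the base case where the paper notes equality.
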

\begin{proof}
If $\h=0$, then all edges in $P_\nu$ are sampled at $\nu$ (see lines 5-6 of Algorithm~\ref{alg:rounding-general}). It follows that $\EE{\nu A}{\cost(P_\nu)^{r}} = \psE_A[f_\nu^{r}]$, as required. So we assume below that $\h \geq 1$. The lemma trivially holds when $r = 0$, so we assume that $r\geq 1$.

Suppose that $\nu$ has children $\nu_1,\dots, \nu_t$. Let  $f_i = f_{\nu_i}$ for $i\in [t]$. Additionally, let $f_0 = \sum_{i=1}^a\sum_{e\in E_{m_i}} a_e x_{e}$. 
Similarly, let $P_i = P_{\nu_i}$ and $\cost_0 = \sum_{i=1}^a a_{e_i}$.
We have
$$
\cost(P_\nu) = \cost_0 + \sum_{i=1}^t \cost(P_i)\qquad\text{and}\qquad f_\nu = f_0 + \sum_{i=1}^t f_i.
$$

We consider execution of node $\nu$ with parameter $A$. Let $A'$ be a random set as defined on \Cref{alg_line:random_a_p} of Algorithm~\ref{alg:rounding-general}. Note that $A'$ is $\nu_i$-admissible for every $i$.
By the induction hypothesis for all nodes $\nu_i$ and exponents $q\leq r$,
$$
\EE{A'}{\cost(P_{\nu_i})^{q}} \leq \bell_{\h-1}(q) \psE_{A'}[f_{\nu_i}^{q}]
\qquad
\text{and}
\qquad
\EE{A'}{\cost_0^{q}} = \psE_{A'}[f_0^{q}].
$$
Using a multinomial expansion of $\cost(P_\nu)^p$ analogous to \eqref{eq:E-cost-p} and using the same algebraic transformations and inequalities as in Lemma~\ref{lem:cost-analysis}, we get
\begin{equation}\label{eq:main-layered-ineq}
\EE{A'}{\cost(P_\nu)^{r}} \leq \bell_{\h}(r) \psE_{A'}[f_\nu^{r}].
\end{equation}
Finally, we observe that
\begin{equation}\label{eq:main-layered-sum-1}\EE{A}{\cost(P_\nu)^{r}} = \sum_{A'} \frac{p_{A'}}{p_A}\EE{A'}{\cost(P_\nu)^{r}}
\end{equation}
and 
\begin{equation}\label{eq:main-layered-sum-2}
\psE_{A}[f_\nu^{r}] = \sum_{A'} \frac{p_{A'}}{p_A}\psE_{A'}[f_\nu^{r}]
\end{equation}
where both summations are over all sets $A'$ that can be generated on \Cref{alg_line:random_a_p} of the algorithm (or, in other words, the summation is over $A' = A'' \setminus A$ where  $A''$ is $\nu_1$-admissible). We upper bound each term of \eqref{eq:main-layered-sum-1} using \eqref{eq:main-layered-ineq} and then simplify the obtained sum using \eqref{eq:main-layered-sum-2}. This concludes the proof.
\end{proof}

We observe that the recursion tree has height $\h = O(\log_a \Delta) = O(c\log{\Delta})$.
Similarly to Theorem~\ref{thm:main:SP}, we obtain the following theorem for layered graphs.
\begin{theorem}\label{thm:main:layered}
Algorithm~\ref{alg:rounding} gives an $O(cp \log^{1-1/p} \Delta )$ approximation for the $\ell_p$-Shortest Path problem in layered graphs with $\Delta$ layers in time $m^{O(p + c e^{1/c} \log \Delta)}$ for $c\in(0,1/2)$.
\end{theorem}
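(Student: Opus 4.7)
The plan is to bootstrap from Lemma \ref{lem:cost-analysis-layered} in exactly the way Theorem \ref{thm:main:SP} bootstraps from Lemma \ref{lem:cost-analysis}, with the recursion tree of Algorithm \ref{alg:rounding-general} playing the role that the series-parallel decomposition played before. First I would solve the SoS relaxation of Section \ref{sec:sos-relaxation} of degree $2d$ with $d = p + (a+1)\lceil \log_{a+1}\Delta\rceil$ and $a = \lceil e^{1/c}\rceil$; this budget is large enough to support all the conditional pseudo-expectations that arise during the execution, since at each recursive invocation at most $a$ new edges are placed into the conditioning set, the recursion depth is at most $\lceil \log_{a+1}\Delta\rceil$, and an additional $p$ degrees are consumed by the $p$-th power moments appearing in the analysis.

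Next I would apply Lemma \ref{lem:cost-analysis-layered} at the root $\nu_0$ of the recursion tree, which has parameters $y=1$, $z=\Delta$, admissible set $A = \varnothing$ (so $p_A = 1$ and $\psE_A = \psE$), using exponent $r = p$ and edge weights $a_e = c_e(i)$ for each coordinate $i \in [\ell]$ separately. Summing the resulting bounds over $i$ and using that the SoS objective is at most $\opt^p$ gives
\[
\E{\Bigl\|\sum_{e\in P} c_e\Bigr\|_p^{p}} \leq \bell_h(p)\cdot \psE\Bigl[\sum_{i=1}^\ell\Bigl(\sum_e c_e(i)x_e\Bigr)^p\Bigr] \leq \bell_h(p)\cdot \opt^p,
\]
where $h$ is the height of the root and $P$ is the path returned by the algorithm. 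Because each invocation partitions the current range of layers into at most $a+1$ subranges until the range has length at most $a$, the height satisfies $h = O(\log_{a+1}\Delta) = O(c\log\Delta)$.

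Finally I would combine this with Markov's inequality and take a $p$-th root to conclude that $\|\sum_{e\in P}c_e\|_p \leq O(\bell_h(p)^{1/p})\cdot\opt$ with constant probability, which by Claim \ref{claim:bellNumberAsymptotics} is $O(p\cdot h^{1-1/p}) \cdot \opt = O(p(c\log\Delta)^{1-1/p})\cdot\opt$, matching the claimed approximation factor (the success probability can be boosted by repetition in the standard way). The running time is dominated by solving the SoS program of degree $2d$, which takes $m^{O(d)} = m^{O(p + c e^{1/c}\log\Delta)}$ as claimed.

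I do not expect a genuinely hard step here: the heavy technical lifting has already been done by Lemma \ref{lem:cost-analysis-layered} and, underlying it, by the majorization inequality for pseudo-expectations (Lemma \ref{lemma:ps_major}). The only care needed is bookkeeping: verifying that the degree budget $2d$ is large enough that every conditional pseudo-expectation $\psE_{A'}$ invoked by the sampling procedure and by the inductive proof of Lemma \ref{lem:cost-analysis-layered} is well-defined, and that the composition of the $O(c\log\Delta)$ recursion depth with the $\bell_h(p)^{1/p}$ Bell-number bound yields precisely the claimed $O(cp\log^{1-1/p}\Delta)$ dependence.
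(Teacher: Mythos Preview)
Your proposal is correct and follows exactly the paper's approach: the paper's proof is essentially a one-line reference to the analogous argument for Theorem~\ref{thm:main:SP}, and you have spelled out precisely those details (apply Lemma~\ref{lem:cost-analysis-layered} coordinate-wise at the root with $A=\varnothing$, bound the recursion height by $O(c\log\Delta)$, invoke Claim~\ref{claim:bellNumberAsymptotics}, then Markov and repetition). The degree-budget and running-time bookkeeping you describe also match the paper's setup.
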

As a corollary, we get the following result for arbitrary graphs.
\begin{theorem}\label{thm:main:general}
There is an $O(cp \log^{1-1/p} n)$ approximation algorithm for the $\ell_p$-Shortest Path problem in arbitrary graphs that runs in time $m^{O(p + c e^{1/c} \log n)}$ for $c\in(0,1/2)$.
\end{theorem}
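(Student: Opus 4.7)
The plan is to derive Theorem~\ref{thm:main:general} as a direct corollary of Theorem~\ref{thm:main:layered} via the black-box reduction from arbitrary graphs to layered graphs that was already described at the start of \Cref{sec:general} (the paragraph titled ``Layered graphs and reduction from general graphs to layered graphs''). So the proof is essentially bookkeeping: invoke the reduction, apply the layered-graph theorem, and translate the parameters back.

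Concretely, I would proceed as follows. First, given an arbitrary input graph $G=(V,E)$ with terminals $s,t$ and vector edge costs, construct the layered graph $\hat G$ with $\Delta = n-1$ edge layers by copying $V$ into each of $V_0,\dots,V_{\Delta}$, connecting consecutive copies according to $E$ (with the same vector costs), and adding zero-cost padding edges between consecutive copies of $t$. As recalled in the reduction, simple $s$-$t$ paths in $G$ biject (up to padding) with $s$-$t$ paths in $\hat G$ of the same vector cost; in particular, the $\ell_p$-Shortest Path optimum value is preserved, and any $\alpha$-approximate path in $\hat G$ pulls back to an $\alpha$-approximate path in $G$. Second, run the layered-graph algorithm (Algorithm~\ref{alg:rounding-general}) on $\hat G$, invoking Theorem~\ref{thm:main:layered} with $\Delta = n-1$. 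Third, pull the returned path back to $G$ by stripping padding edges.

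For the approximation factor, Theorem~\ref{thm:main:layered} gives $O(cp\log^{1-1/p}\Delta) = O(cp\log^{1-1/p}n)$ since $\Delta = n-1$. For the running time, $\hat G$ has $O(n^2)$ vertices and $\hat m \le (n-1)m = \mathrm{poly}(m,n)$ edges, so Theorem~\ref{thm:main:layered} runs in $\hat m^{O(p+ce^{1/c}\log\Delta)} = m^{O(p+ce^{1/c}\log n)}$ (absorbing the polynomial blow-up into the exponent). As before, we assume $\ell$ is polynomial in $n$ so it does not contribute to the bound.

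There is no real obstacle: all the work has been done in Theorem~\ref{thm:main:layered} and in establishing that the layered reduction is cost-preserving. The only mild care needed is to verify that the blow-up from $m$ to $\hat m$ does not worsen the stated running-time bound, which it does not since $\log \hat m = O(\log m + \log n) = O(\log n)$ and the exponent already carries a factor of $\log n$.
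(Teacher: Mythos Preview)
Your proposal is correct and is exactly the argument the paper has in mind: the paper states Theorem~\ref{thm:main:general} immediately after Theorem~\ref{thm:main:layered} with the single sentence ``As a corollary, we get the following result for arbitrary graphs'' and gives no further proof, relying on the layered-graph reduction described at the beginning of \Cref{sec:general}. Your bookkeeping of the approximation factor and the running-time blow-up fills in precisely the details the paper omits.
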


\section{Algorithms for \texorpdfstring{$\ell_p$}{l-p}-Group ATSP and \texorpdfstring{$\ell_p$}{l-p}-Group Steiner Tree}\label{sec:ATSP-and-Steiner-Tree}
In this section, we describe our algorithms for the \(\ell_p\)-Group ATSP and \(\ell_p\)-Group Steiner Tree problems.
First, we consider an auxiliary variant of the $\ell_p$-Group ATSP problem in layered graphs. We are given a layered graph $G$ and $k$ subsets of edges $R_1,\dots, R_k$. The goal is to find an $s$-$t$ path with minimum possible $\ell_p$-cost that visits  exactly one edge from every group. We  refer to this problem as Layered $\ell_p$-Group ATSP (the problem is arguably not very natural on its own, but we will later reduce Group ATSP to this problem).

\begin{theorem}\label{thm:ATSP:main}
There exists a pseudo-approximation algorithm for Layered $\ell_p$-Group ATSP that given an instance of the problem with $\Delta$ layers and a parameter $c \in (1/\log \Delta, 1/2)$, generates a random $s$-$t$ path $P$ such that
\begin{enumerate}
    \item $\E{\cost(P)^p}^{1/p} \leq O(c p\log^{1-1/p} \Delta) \cdot \opt$, where $\cost(P)$ is the $\ell_p$-cost of $P$ w.r.t. vector costs.
    \item For every group $R_i$, $\Prob{P \text{ visits at least one edge in } R_i} \geq \Omega(\frac{1}{c\log \Delta})$.
\end{enumerate}
The algorithm runs in time $m^{O(p + c e^{1/c} \log \Delta)}$.
\end{theorem}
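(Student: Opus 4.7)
The plan is to extend the SoS relaxation and rounding of \Cref{sec:general} to Layered $\ell_p$-Group ATSP by adding linear group constraints, reuse \Cref{lem:cost-analysis-layered} verbatim for the cost bound, and then exploit the same lemma via a Paley--Zygmund second-moment argument to lower-bound the group hit probability. I take the pseudo-expectation $\psE$ of degree $2d=2(p+(a+1)\lceil\log_{a+1}\Delta\rceil)$ with $a=\lceil e^{1/c}\rceil$, exactly as in \Cref{sec:general}, augmented with the linear constraints $\sum_{e\in R_i}x_e=1$ for every $i\in[k]$ (encoding that $P$ visits exactly one edge of each group). I then run Algorithm~\ref{alg:rounding-general} on it unchanged. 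The relaxation has size $m^{O(p+ce^{1/c}\log\Delta)}$, matching the claimed running time.

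For property 1, I apply \Cref{lem:cost-analysis-layered} with $a_e=c_e(j)$ and $r=p$ at the root node and sum over $j\in[\ell]$, obtaining $\E{\cost(P)^p}\leq \bell_h(p)\cdot\opt^p$ with recursion depth $h=O(c\log\Delta)$; \Cref{claim:bellNumberAsymptotics} then yields the desired factor $\abell_h(p)=O(cp\log^{1-1/p}\Delta)$. The proof of the lemma uses only flow conservation and the SoS framework, so the added group constraints do not interfere.

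For property 2, let $X_i=|P\cap R_i|$. A short induction on the recursion tree -- using the identity $\Prob{A_\nu=A}=\psE[h_A]$ from \Cref{sec:general} together with the layered flow identity $\sum_{e\in E_\ell}x_e=1$ (a consequence of vertex flow conservation) -- shows that $\Prob{e\in P}=\psE[x_e]$ for every edge, so $\E{X_i}=\sum_{e\in R_i}\psE[x_e]=1$ by the group constraint. For the second moment, I apply \Cref{lem:cost-analysis-layered} at the root with $a_e=\mathbb{1}[e\in R_i]$ and $r=2$: since $\sum_{e\in R_i}x_e=1$ in the SoS quotient, $\psE[f_{\nu_0}^2]=1$ and therefore $\E{X_i^2}\leq\bell_h(2)=h+1=O(c\log\Delta)$, where I use the easy identity $\bell_d(2)=d+1$ (the only non-trivial refinement of $\{\{1,2\}\}$ is $\{\{1\},\{2\}\}$, which may first appear at any of $d+1$ positions in the chain). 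Paley--Zygmund at threshold $\theta=1/2$, combined with $X_i$ being integer-valued, then gives
\[
\Prob{X_i\geq 1}=\Prob{X_i\geq \tfrac12\E{X_i}}\geq \tfrac14\cdot\frac{\E{X_i}^2}{\E{X_i^2}}=\Omega\Bigl(\frac{1}{c\log\Delta}\Bigr).
\]
The key subtlety -- and the only step where real care is needed -- is the sharpness of this second-moment bound: it relies both on the tight linear growth $\bell_h(2)=h+1$ (in contrast to the much larger $\bell_h(p)$ at higher powers) and on the group constraint collapsing $\psE[f_{\nu_0}^2]$ all the way to $1$. Without either ingredient the bound on $\E{X_i^2}$ would be too weak to deduce any non-trivial hit probability.
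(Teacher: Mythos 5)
Your proposal is correct and follows essentially the same route as the paper: augment the layered SoS relaxation with the group constraints $\sum_{e\in R_i}x_e=1$ (the paper states this as $h_{R_i}=1$, $h_{R_i}^2=1$, which your linear equality constraint implies in the pseudo-expectation sense), run Algorithm~\ref{alg:rounding-general} unchanged, get item~1 from \Cref{lem:cost-analysis-layered} exactly as in \Cref{thm:main:layered}, and get item~2 by applying the same lemma with $r=2$ to the indicator costs of $R_i$ together with $\E{X_i}=\psE[h_{R_i}]=1$ and Paley--Zygmund. The only differences are cosmetic (making $\bell_h(2)=h+1$ explicit, and using the threshold-$1/2$ form of Paley--Zygmund instead of the $\Prob{\xi>0}\ge\E{\xi}^2/\E{\xi^2}$ form), so no further changes are needed.
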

\begin{proof}
For every group $R_i$, add the following constraints to the SoS relaxation: $h_{R_i} = 1$ and $h_{R_i}^2 = 1$. Then solve the relaxation and run Algorithm~\ref{alg:rounding-general} to generate $P$. The upper bounds on the cost and running time follow immediately from Theorem~\ref{thm:main:layered}. Now consider a group $R_i$, and let us prove that item 2 holds for it. To this end, define $a_e = 1$ for $e\in R_i$ and $a_e = 0$ otherwise. Let random variable $\xi$ be the number of edges in $R_i$ that $P$ visits. Since $P$ visits every edge $e$ with probability $\psE[x_e]$, $\E{\xi} = \psE[\sum_{e\in R_i} x_e] = \psE[h_{R_i}] =1$. By Lemma~\ref{lem:cost-analysis-layered} applied to costs $a_e$ (with $r=2$), $\E{\xi^2} \leq O(c\log \Delta \psE[h_{R_i}^2]) = O(c \log \Delta)$. By the Paley--Zygmund Inequality,
$$\Prob{\xi>0} \geq \frac{\E{\xi}^2}{\E{\xi^2}} \geq \frac{\Omega(1)}{c\log \Delta}.$$
\end{proof}

Now we present the reduction from $\ell_p$-Group ATSP in arbitrary graphs to Layered $\ell_p$-Group ATSP, and argue that it yields an approximation algorithm for the former.
\begin{theorem}
For every $c\in (0,1/2)$, there exists a $O(c^2 p  \log^{2-1/p} n \log k)$-approximation algorithm for $\ell_p$-Group ATSP, whose running time is $m^{O(p+ce^{1/c} \log n)}$. We assume that the number of groups $k$ is polynomial in $n$. 
\end{theorem}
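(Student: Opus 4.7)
The plan is to combine a reduction to the Layered variant handled by \Cref{thm:ATSP:main} with an independent-repetition boost. Fix an arbitrary root vertex $r \in V$ and an integer $\Delta = n^{O(1)}$ upper-bounding the length of a \emph{canonical} optimum tour (one that visits each group exactly once). Build a layered graph $\hat G$ with $\Delta$ edge layers in which both $s$ and $t$ are copies of $r$; every intermediate vertex layer is a copy of $V$, each edge of $G$ is replicated between consecutive layers, and zero-cost self-loops allow the walk to pause. A path from $s$ to $t$ in $\hat G$ is precisely a closed walk in $G$ starting and ending at $r$, i.e., a tour. For each vertex group $R_i$, design an edge group $\hat R_i \subseteq E(\hat G)$ such that (i) the canonical optimum tour, laid out in $\hat G$, uses exactly one edge of $\hat R_i$, and (ii) any path using at least one edge of $\hat R_i$ traverses a copy of some vertex in $R_i$.

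Apply \Cref{thm:ATSP:main} to $\hat G$ with groups $\hat R_i$: since $\log \Delta = O(\log n)$, we obtain a random $s$-$t$ path $P$ with $\E{\cost(P)^p}^{1/p} \le O(cp \log^{1-1/p} n) \cdot \opt$ and $\Prob{P \text{ visits some edge in } \hat R_i} \ge \Omega(1/(c \log n))$ for every $i$, in time $m^{O(p + c e^{1/c} \log n)}$. Let $N = \Theta(c \log n \cdot \log k)$ and draw $N$ independent paths $P_1, \dots, P_N$. The triangle inequality for $\|\cdot\|_p$ on nonnegative vectors yields
\[
\Bigl\| \sum_{j=1}^N \sum_{e \in P_j} c_e \Bigr\|_p \le \sum_{j=1}^N \cost(P_j),
\]
while Jensen's inequality gives $\E{\cost(P_j)} \le \E{\cost(P_j)^p}^{1/p}$, so the expected total is $O(c^2 p \log^{2-1/p} n \log k) \cdot \opt$, and Markov's inequality bounds the realized sum up to a constant factor with constant probability. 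For each $R_i$, $\Prob{\text{no } P_j \text{ visits } \hat R_i} \le \bigl(1 - \Omega(1/(c \log n))\bigr)^N = k^{-\Omega(1)}$, so a union bound over the $k$ groups ensures that the concatenation $P_1 \circ \cdots \circ P_N$, viewed as a tour in $G$, covers every $R_i$. Standard repetition, keeping the best valid tour found, boosts success to high probability, and the running time is dominated by $N = \operatorname{polylog}(n)$ invocations of \Cref{thm:ATSP:main}.

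The main obstacle is the first step: constructing the edge groups $\hat R_i$ so that the relaxation of \Cref{thm:ATSP:main} with equalities $h_{\hat R_i} = 1$ and $h_{\hat R_i}^2 = 1$ still admits a feasible solution of objective at most $\opt^p$. This requires exhibiting a canonical layout of the optimum tour in $\hat G$ that uses exactly one edge from each $\hat R_i$. Using the padding self-loops to skip duplicate visits of a group is the natural approach, but one must verify it can be done simultaneously for all groups without inflating the $\ell_p$-cost and while keeping the walk within $\Delta = n^{O(1)}$ layers. Once this reduction is set up, the rest of the proof is a routine combination of Markov's inequality, Jensen's inequality, and a union bound.
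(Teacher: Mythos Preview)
Your plan matches the paper's approach, but two points need attention.

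First, fixing an \emph{arbitrary} root $r$ does not work: in Group ATSP the optimal tour need not pass through $r$, so there is no layout of it in $\hat G$ starting and ending at copies of $r$, and the layered relaxation would have no feasible solution of value $\le \opt^p$. The paper fixes this by guessing one vertex known to be on the optimal tour---specifically a vertex in one of the groups, say $R_k$---and using it as both endpoints of $\hat G$; one tries all $|R_k|$ choices.

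Second, on the ``main obstacle'' you flag: the paper resolves it with exactly the self-loop idea you hint at. After making the groups disjoint (splitting shared vertices with zero-cost edges), for each $u\in R_i$ it inserts a dedicated zero-cost \emph{check-in} edge between every pair of consecutive copies of $u$, and lets $\hat R_i$ be the set of all check-in edges for $R_i$. Property~(ii) is immediate. For~(i), lay out the optimal tour so that the \emph{first} time it reaches a vertex of $R_i$ it takes one check-in step before continuing; disjointness guarantees this inserts exactly one edge of each $\hat R_i$. Since between consecutive first visits the tour can be taken simple, the total length is at most $(n-1)k + (k-1)$, so $\Delta = nk-1$ layers suffice and only the sink needs padding loops. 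With this in place your repetition-and-union-bound argument goes through; the paper bounds $\E{\cost(P_{\mathrm{alg}})^p}^{1/p}$ directly via Minkowski in $L^p(\Omega\times[\ell])$ rather than your triangle-inequality-plus-Jensen route, but both give the same factor.
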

\begin{proof}
First, we may assume that all groups $R_i$ are disjoint. If not and a vertex $u$ belongs to several groups, we can create a copy of $u$ for each group that contains it and connect all copies of $u$ by edges of cost 0. 

We guess one vertex in group $R_k$ that is visited by the optimal tour. Denote it by $s$; we will ensure that our tour will visit $s$. We transform $G$ into a layered graph $\hat G$ with $s=s$, $t=s$ and $\Delta = nk -1$ as described in Section~\ref{sec:general}. We create edge groups $R_i'$ for all vertex groups $R_1,\dots, R_{k-1}$ as follows: if $u\in R_i$ then we add edges between copies of $u$ in adjacent layers of $\hat G$ to $R_i'$. Let us call all edges in $\bigcup_i R_i'$ \textit{check-in} edges. 

Now consider an optimal tour. As we assumed, it visits vertex $s$; thus, we think of it as a non-simple path $P^*$ from $s$ to $s$. For each $i$, let $v_i$ be the first vertex in group $R_i$ on $P^*$. Note that vertices $v_i$ partition path $P^*$ into $k$ subpaths; we may assume that these subpaths are simple paths (if they are not, by shortcutting them, we can only decrease the cost). Therefore, $P^*$ has length at most $(n-1)k$.

We construct path $\hat P^*$ corresponding to $P^*$ in $\hat G$: we keep track of path $P^*$ and whenever it follows edge $e=(u,v)$, $\hat P^*$ follows the corresponding edge $\hat e = (\hat u, \hat v)$, except that when $P^*$ visits $v_i$ for the first time, $\hat P^*$ takes the check-in edge between copies of $v_i$. If necessary, we append padding edges (connecting copies of $t$) to the end of path $\hat P$.

Note that $\hat P$ has at most $(n-1)k$ regular edges (those that are not padding and check-in edges) and at most $k-1$ check-in edges; that is, it has at most $\Delta = nk-1$ non-padding edges.
Now we run Algorithm~\ref{alg:rounding-general}  $T = \Theta(c\log \Delta\log k)$ times. We get $T$ paths in $\hat G$. Let $P_1,\dots, P_T$ be the corresponding $s$-$s$ paths in $G$ and $P_{alg}$ be their concatenation. By Theorem~\ref{thm:ATSP:main}, item 1, $\E{\cost(P_{alg})^p}^{1/p}\leq O(c T p \log^{1-1/p} (nk))$.
Consider a group $R_i$. Theorem~\ref{thm:ATSP:main}, item 2, implies that each of the \(P_1, \ldots, P_T\) visits group $R_i$ with probability at least $\Omega({\frac{1}{c\log \Delta}})$. Thus, by choosing the constant in the big-O formula for \(T\) large enough, we can ensure that \(P_{alg}\) visits $R_i$ with probability at least (say) $1 - \frac{1}{3k}$. 
By the union bound, \(P_{alg}\) visits all of the groups $\{R_i\}_{i}$ with probability at least $2/3$.

After conditioning on this success event, we have
\begin{align*}
    \E{\cost(P_{alg})^p\given \text{ success}}^{1/p} &\le \frac{1}{\Pr(\text{success})^{1/p}} \cdot \E{\cost(P_{alg})^p}^{1/p} \\ 
    &\leq  (3/2)^{1/p} \cdot O(T c p \log^{1-1/p} \Delta) \cdot \opt 
    \\&= 
 O(c^2 p  \log^{2-1/p} n \log k) \cdot \opt.
\end{align*}

Running this algorithm sufficiently many times and outputting the best path, we get the desired result.
\end{proof}

We conclude this section by deriving an algorithm for the $\ell_p$-Group Steiner Tree problem. This is a simple corollary of the result above, and is obtained by extending a standard reduction from the Steiner tree problem to the ATSP problem, to the case of vector costs.
\begin{theorem}
    For every $c\in (0,1/2)$, there exists a $O(c^2 p  \log^{2-1/p} n \log k)$-approximation algorithm for
    $\ell_p$-Group Steiner Tree in undirected graphs running in time $m^{O(p + c e^{1/c} \log n)}$. We assume that the number of groups $k$ is polynomial in $n$. 
\end{theorem}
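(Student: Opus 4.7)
The plan is to reduce $\ell_p$-Group Steiner Tree on an undirected graph to $\ell_p$-Group ATSP on a directed graph and then invoke the previous theorem. First I would construct a directed graph $G' = (V, E', c')$ from $G = (V, E, c)$ by replacing each undirected edge $\{u, v\}$ with the two directed edges $(u, v)$ and $(v, u)$, each inheriting the vector cost of the original edge. The groups $R_1, \ldots, R_k$ are unchanged, so the Steiner tree instance on $G$ becomes a Group ATSP instance on $G'$ with the same parameters $n$, $k$, $p$, and $c$.

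The core of the proof is an approximate cost correspondence in both directions. For an upper bound on $\opt_{\mathrm{ATSP}}(G')$, I would take any Steiner tree $T$ that meets every group, root it arbitrarily, and perform a DFS traversal; this yields a closed walk in $G'$ that uses each edge of $T$ exactly twice (once per orientation) and visits every vertex of $T$, hence every group. Its vector cost is exactly $2\sum_{e \in T} c_e$, so its $\ell_p$-cost is $2\cost_{\ell_p}(T)$, giving $\opt_{\mathrm{ATSP}}(G') \leq 2\,\opt_{\mathrm{tree}}(G)$. For the other direction, given the closed walk $W$ returned by the Group ATSP algorithm of the previous theorem, I would let $F \subseteq E$ be the set of undirected edges whose directed copies appear in $W$ and then output any spanning tree $T$ of the subgraph $(V_W, F)$, where $V_W$ denotes the vertices visited by $W$. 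Since $V_W$ contains at least one vertex from every group, $T$ is a valid Steiner tree, and since $\sum_{e \in T} c_e \leq \sum_{e \in W} c'_e$ coordinate-wise (because $T \subseteq F$ and each undirected edge appears at most as many times in $F$ as its orientations do in $W$), non-negativity of the entries of $c_e$ yields $\cost_{\ell_p}(T) \leq \cost_{\ell_p}(W)$.

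Combining these bounds, applying the Group ATSP algorithm to $G'$ produces a walk of $\ell_p$-cost at most $O(c^2 p \log^{2-1/p} n \log k) \cdot \opt_{\mathrm{ATSP}}(G') \leq O(c^2 p \log^{2-1/p} n \log k)\cdot \opt_{\mathrm{tree}}(G)$ (the factor $2$ is absorbed into the big-O), and the extracted tree has cost no larger. The running time matches the statement because $|V'| = n$ and $|E'| = 2m$, so invoking the previous theorem on $G'$ takes time $m^{O(p + c e^{1/c} \log n)}$. No real obstacle arises; the only ingredients one needs are the coordinate-wise non-negativity of the edge costs (so that DFS exactly doubles the vector cost and discarding redundant edges can only decrease $\ell_p$-cost) and the fact that the walk returned by the ATSP algorithm is a closed walk whose vertex set is connected in the undirected sense, so a spanning tree of $(V_W, F)$ is well defined.
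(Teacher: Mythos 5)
Your proposal is correct and follows essentially the same route as the paper: bidirect the graph to get a Group ATSP instance, use an Euler/DFS traversal of a Steiner tree to show $\opt_{\mathrm{ATSP}} \leq 2\,\opt_{\mathrm{tree}}$, and extract from the returned tour a tree of no greater $\ell_p$-cost using non-negativity of the costs. No gaps.
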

\begin{proof}
    We proceed by reducing the $\ell_p$-Group Steiner Tree problem to the $\ell_p$-Group ATSP problem as follows. Given an undirected graph $G=(V,E,c)$ with vector edge costs, we turn it into a directed graph $G'$ by replacing each undirected edge $\{u,v\}$ with two directed edges $(u,v)$ and $(v,u)$ of equal vector cost $c_{uv})$. Given an instance $\mathcal{I} = (G,R_1, \dots , R_k)$ of the $\ell_p$-Group Steiner tree problem in undirected graphs, we construct an instance $\mathcal{I}' = (G', R_1, \dots , R_k)$ of the $\ell_p$-Group ATSP problem on \(G'\) with the same groups. It is easy to see that any feasible tour $T'$ in $G'$ can be turned into a feasible tree in $G$ of no greater $\ell_p$-cost, by ignoring the directions of the edges in $T'$ and dropping edges forming cycles.
    Similarly, given a feasible Steiner tree $T$ in $G$ of cost $C$, an Euler tour of $T$ yields a valid group ATSP tour $T'$ with $\ell_p$-cost $2C$. Hence, this reduction preserves the cost of every solution within a factor of $2$, and the result follows.
\end{proof}

\section{Hardness results}\label{sec:all-hardness-results}
\subsection{Arbitrary costs}\label{sec:hardness-from-CVP}
In this section, we argue that the assumption that all edge costs are entry-wise non-negative is 
indeed necessary to obtain a good approximation result. We prove that the version of the $\ell_p$-Shortest Path problem in which weight vectors can have negative coordinates is as hard as the $\ell_p$ closest vector problem in lattices.

We now recall the basic definitions required for this argument. For a more detailed overview of the background, we refer the reader to the book of Micciancio and Goldwasser~\cite{micciancio2002complexity}. A lattice is the set of all integer linear combinations of a collection $\mathcal{L} = \{\mathbf{v}_1, \dots , \mathbf{v}_d\}$ of linearly independent vectors in $\R^n$. The set $\mathcal{L}$ is referred to as the \emph{lattice basis} and is often represented as a matrix
\[
    B = \begin{pmatrix}
            | & &|\\
            \mathbf{v}_1& \cdots & \mathbf{v}_d\\
            | & &|
        \end{pmatrix}.
\]

The closest vector problem is then defined as follows.
\begin{definition}[$\ell_p$-Closest Vector Problem]
    The $\ell_p$-closest vector problem ($\ell_p$-CVP) is defined as follows: given a lattice basis $B \in \Z^{n\times d}$ and a target vector $\mathbf{t} \in \Z^n$, find a vector $\mathbf{x} \in \Z^d$ minimizing $\norm{B\mathbf{x} - \mathbf{t}}_p$.
\end{definition}

This problem has a long history and several hardness results have been proved about it. We will use hardness results by Dinur, Kindler, and Safra~\cite{dinur2003approximating} (for $p < \infty$) and  Dinur~\cite{dinur2002approximating} (for $p=\infty$).
\begin{theorem}[Paraphrasing~\cite{dinur2003approximating} and Dinur~\cite{dinur2002approximating}]\label{thm:known-CVP-hardness}
    For every $p \in[1,\infty]$, it is \textnormal{NP}-hard to approximate $\ell_p$-CVP to within a factor of $d^{\Omega_p(1 / \log \log d)}$.
\end{theorem}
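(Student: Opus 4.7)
The plan is to obtain the statement as a direct consequence of the main theorems of Dinur--Kindler--Safra~\cite{dinur2003approximating} (for finite $p$) and Dinur~\cite{dinur2002approximating} (for $p=\infty$). In fact, both papers establish approximation hardness factors that dominate $d^{\Omega_p(1/\log\log d)}$: they produce NP-hardness of approximating $\ell_p$-CVP within factor $2^{(\log d)^{1-\varepsilon}}$ for every $\varepsilon > 0$, and since for any fixed $\varepsilon$ we have $(\log d)^{1-\varepsilon} \geq (\log d)/(\log\log d)$ once $d$ is large enough, this immediately implies hardness of approximation within $d^{1/\log\log d}$. So the ``proof'' is essentially a citation together with this elementary comparison of growth rates. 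Nonetheless, let me outline what goes into those theorems.

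The backbone for finite $p$ is a two-stage construction. Stage one produces a \emph{constant-gap} hardness for $\ell_p$-CVP: starting from an NP-hard promise problem such as a gap version of 3-SAT (or, in the DKS paper, a suitable CSP obtained via the PCP theorem and long-code gadgetry), one builds a lattice basis $B$ and target $\mathbf{t}$ so that satisfying assignments correspond to lattice points at distance $\leq 1$ from $\mathbf{t}$, while in the no-case every lattice point is at distance $> c$ for some constant $c > 1$. The coordinates of $B, \mathbf{t}$ are designed so the ``closest vector'' encodes a truth assignment, with code-theoretic gadgets used to force large $\ell_p$ distance whenever a constraint is violated.

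Stage two boosts this constant gap to $d^{\Omega(1/\log\log d)}$ via a tensor-product (or direct-sum) amplification: given two CVP instances $(B_1,\mathbf{t}_1)$ and $(B_2,\mathbf{t}_2)$, form the combined instance with basis corresponding to $B_1 \otimes B_2$ and target $\mathbf{t}_1 \otimes \mathbf{t}_2$. In the $\ell_p$ tensor, $\|\mathbf{x}\otimes\mathbf{y}\|_p = \|\mathbf{x}\|_p \cdot \|\mathbf{y}\|_p$, so the YES side composes multiplicatively. The real content is soundness: one must rule out ``spurious'' short vectors in the tensored lattice that do not factor as $\mathbf{x}_1\otimes\mathbf{x}_2$. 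This is proved by a decoding lemma that extracts, from any short vector of the product lattice, short vectors in the two base lattices. Iterating this amplification $k = \Theta(\log d/\log\log d)$ times, with parameters chosen so the dimension grows only polynomially in the original, produces the desired $d^{\Omega(1/\log\log d)}$ gap.

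For $p=\infty$, the argument in~\cite{dinur2002approximating} is analogous but exploits the cleaner tensorization $\|(\mathbf{x},\mathbf{y})\|_\infty = \max(\|\mathbf{x}\|_\infty,\|\mathbf{y}\|_\infty)$; the base hardness is obtained via a reduction from (a variant of) hypergraph coloring, and the gap amplification uses Cartesian products of instances together with an analogous decoding lemma. In both cases the main obstacle, and the heart of the original proofs, is the soundness analysis of the amplification step---verifying that short vectors in the amplified lattice must come from short vectors in both factors. Since the hardness factor in the stated theorem is weaker than what these proofs actually deliver, no additional work beyond invoking them is required here.
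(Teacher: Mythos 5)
The paper does not prove this theorem at all: it is imported verbatim (the label ``Paraphrasing'' is the tell), so the only content required is an accurate citation of what Dinur--Kindler--Safra and Dinur actually establish. Your proposal gets the citation right in spirit but the justification you attach to it is wrong in a way that matters. You claim the cited papers give NP-hardness within $2^{(\log d)^{1-\varepsilon}}$ and that this \emph{dominates} $d^{\Omega_p(1/\log\log d)}$ because $(\log d)^{1-\varepsilon}\ge \log d/\log\log d$ for large $d$. That inequality is backwards: $(\log d)^{1-\varepsilon}=\log d/(\log d)^{\varepsilon}$, and $(\log d)^{\varepsilon}$ grows much faster than $\log\log d$, so in fact $(\log d)^{1-\varepsilon}\ll \log d/\log\log d$ and $2^{(\log d)^{1-\varepsilon}}\ll d^{c/\log\log d}$. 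Hence hardness at factor $2^{(\log d)^{1-\varepsilon}}$ would \emph{not} imply the stated theorem; if that were all the cited papers proved, your argument would leave a genuine gap.

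The fix is simply to cite the results correctly: the factor $2^{(\log d)^{1-\varepsilon}}$ is the older Arora--Babai--Stern--Sweedyk-type bound, which moreover is proved under a quasi-polynomial-time assumption rather than plain NP-hardness. The whole point of \cite{dinur2003approximating} (for finite $p$) and \cite{dinur2002approximating} (for $p=\infty$) is to obtain genuine NP-hardness at the \emph{almost-polynomial} factor $d^{c/\log\log d}=2^{c\log d/\log\log d}$, which is exactly the factor asserted in the theorem --- no growth-rate comparison is needed or possible in the direction you use it. Your sketch of the internal machinery of those papers (constant-gap base case plus tensoring/amplification with a soundness decoding step) is a reasonable caricature and harmless here, since none of it needs to be reproduced; but the ``elementary comparison'' step should be deleted and replaced by the observation that the cited theorems state the claimed bound directly.
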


Now we show that all hardness of approximation results for $\ell_p$-CVP, including Theorem~\ref{thm:known-CVP-hardness}, directly apply to the $\ell_p$-Shortest Path problem with negative coordinates allowed.
Combining it with \Cref{thm:known-CVP-hardness}, we readily obtain
\Cref{thm:intro-hardness-from-CVP}.

\begin{theorem}\label{thm:CVP-reduction}
    There is a polynomial-time approximation-preserving reduction from the $\ell_p$-CVP problem to the $\ell_p$-Shortest Path problem with negative coordinates allowed.
\end{theorem}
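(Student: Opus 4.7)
The plan is to realize each integer CVP coefficient $x_i\in\Z$ by the binary digits chosen along an $s$-$t$ path in a layered directed graph, so that the total vector cost of the path equals $B\mathbf{x}-\mathbf{t}$ and its $\ell_p$-cost matches the CVP objective. I would first fix a cutoff $M$ (to be chosen below) and write each coordinate as $x_i = -M + \sum_{j=0}^{L-1} b_{ij}\,2^j$ with $b_{ij}\in\{0,1\}$ and $L=\lceil\log_2(2M+1)\rceil$. The graph would consist of $dL+2$ vertices $s=u_0,u_1,\dots,u_{dL+1}=t$ arranged on a path: between $u_0$ and $u_1$ I place a single edge of vector cost $-M\sum_{i=1}^d \mathbf{v}_i-\mathbf{t}$, and between the pair of consecutive vertices corresponding to the $(i,j)$-th bit two parallel edges of vector costs $\mathbf{0}$ and $2^j\mathbf{v}_i$ respectively. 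An $s$-$t$ path then selects one bit per $(i,j)$, encoding some $\mathbf{x}\in[-M,M]^d$, and its total vector cost telescopes to $B\mathbf{x}-\mathbf{t}$, so its $\ell_p$-cost equals $\|B\mathbf{x}-\mathbf{t}\|_p$. Conversely, every $\mathbf{x}\in[-M,M]^d$ is realized by a unique path. Note that these cost vectors can have negative entries, which is precisely what the hardness variant permits.

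The main obstacle is choosing $M$ so that (i)~$\log M=\operatorname{poly}(N)$, where $N$ is the input bit-length, making the reduction polynomial-time, and (ii)~restricting attention to $\mathbf{x}\in[-M,M]^d$ preserves the CVP optimum. I would establish both via standard lattice reduction: running LLL on $B$ and then Babai's nearest-plane algorithm produces in polynomial time a vector $\mathbf{x}_0$ with $\|\mathbf{x}_0\|_\infty\le 2^{\operatorname{poly}(N)}$ and $\|B\mathbf{x}_0-\mathbf{t}\|_p\le 2^{O(d)}\opt_{\mathrm{CVP}}$, giving in particular $\opt_{\mathrm{CVP}}\le 2^{\operatorname{poly}(N)}$. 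Any $\mathbf{x}$ with $\|B\mathbf{x}-\mathbf{t}\|_p\le 2^{\operatorname{poly}(N)}$ then satisfies $\|B\mathbf{x}\|_p\le 2^{\operatorname{poly}(N)}$, and combined with the bound $\sigma_{\min}(B)\ge 2^{-\operatorname{poly}(N)}$ for integer matrices of full column rank (Cramer's rule applied to $B^{\top}B$, whose determinant is a positive integer) this forces $\|\mathbf{x}\|_\infty\le 2^{\operatorname{poly}(N)}$. Choosing $M$ just beyond this threshold secures both (i) and (ii), and the resulting graph has $\operatorname{poly}(N)$ vertices and edges.

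Since the optima of the two instances coincide and the sizes are polynomially related, any $\alpha$-approximation algorithm for $\ell_p$-Shortest Path with negative coordinates immediately yields an $\alpha$-approximation for $\ell_p$-CVP, giving \Cref{thm:CVP-reduction}; combining with \Cref{thm:known-CVP-hardness} then produces \Cref{thm:intro-hardness-from-CVP}. The genuine work lies in the polynomial bit-length bound on the coordinates of near-optimal CVP solutions; the combinatorial construction itself and the verification that its cost formula matches the CVP objective amount to a telescoping identity.
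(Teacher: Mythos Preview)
Your proposal is correct and follows essentially the same approach as the paper: encode the integer CVP coefficients in binary along an $s$-$t$ path so that the path's total vector cost equals $B\mathbf{x}-\mathbf{t}$, then bound $\|\mathbf{x}^*\|_\infty$ polynomially via Cramer's rule on $B^\top B$ (the paper phrases this as a bound on the entries of the Moore--Penrose inverse $B^\dagger$). The only differences are cosmetic---you use an unsigned binary encoding with an additive offset $-M$ where the paper uses signed digits $\pm 2^j\mathbf{v}_k$ plus zero-cost padding edges---and your LLL/Babai detour is unnecessary, since the paper (and you could too) simply observes $\opt_{\mathrm{CVP}}\le\|\mathbf{t}\|_p$ by taking $\mathbf{x}=0$.
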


\begin{figure}
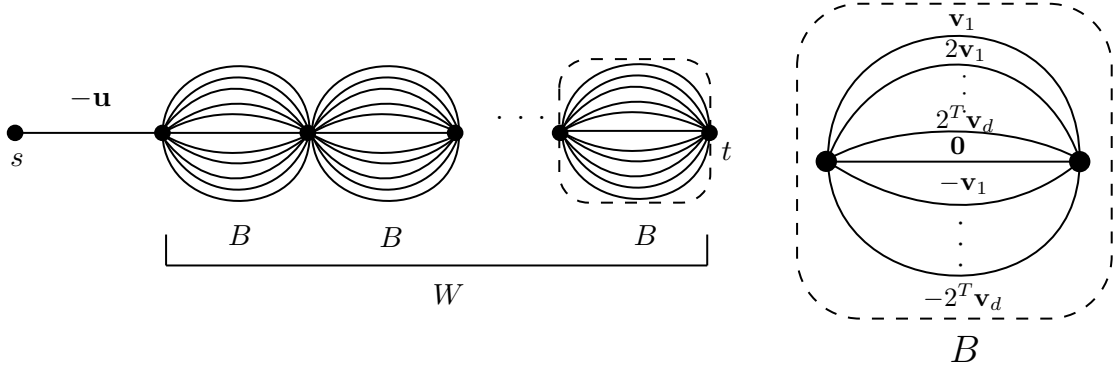

    \centering
    \include{img/CVP_reduction}
    \caption{The structure of the reduction for the proof of Theorem~\ref{thm:CVP-reduction}. On the left, we illustrate the structure of the $\ell_p$-Shortest Path instance graph $G$ constructed. On the right we show the $B$-block used to construct $G$.}
    \label{fig:CVP-reduction}
\end{figure}
\begin{proof}
    Given an instance of the closest vector problem defined by a lattice basis
    \[
        B = \begin{pmatrix}
            | & &|\\
            \mathbf{v}_1& \cdots & \mathbf{v}_d\\
            | & &|
        \end{pmatrix}
    \]
    and a target vector $\mathbf{u}$, we construct an instance $G=(V,E,w)$ of the $\ell_p$-Shortest Path problem as follows (see also Figure~\ref{fig:CVP-reduction}). 
    We will use two integers $T$ and $W$ to describe the size of the instance, as explained below. The vertex set $V$ consists of $W+2$ vertices $v_0, \dots , v_{W+1}$. Each vertex $v_i$ for $1 \le i \le W$ is connected to vertex $v_{i+1}$ by $2d(T+1)$ edges with vector costs equal to $\{2^j\mathbf{v}_k \given 1 \le k \le d, 0 \le j \le T\}$ and $\{-2^j\mathbf{v}_k \given 1 \le k \le d, 0 \le j \le T\}$ as well as an extra edge of cost $0$. Vertex $v_0$ is connected to $v_1$ with a single edge with weight $-\mathbf{u}$.
    We set the source vertex to be \(s = v_0\) and the destination vertex to be \(t = v_{W+1}\).

    We now show that for some suitable choice of $T, W =\text{poly}(d,\log M, \log n)$, where $M$ is an upper bound on the magnitude of the numbers in the CVP instance, this reduction simultaneously ensures that: (1) every feasible solution to the $\ell_p$-Shortest Path instance can be converted in polynomial time to a solution to the CVP of equal objective value, and (2) the optimal value of the $\ell_p$-Shortest Path instance is the same as the optimal value of the CVP instance. These two facts directly show the reduction is approximation-preserving, as in the statement of the theorem.

    \begin{claim}\label{claim:first-claim-in-CVP-proof}
        Every solution to the $\ell_p$-Shortest Path instance can be efficiently converted to a solution of the CVP instance with the same objective value.
    \end{claim}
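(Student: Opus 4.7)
The plan is to read off an integer vector $\mathbf{x} \in \mathbb{Z}^d$ directly from a given $s$-$t$ path $P$ in the constructed graph $G$, so that $B\mathbf{x} - \mathbf{u}$ is exactly the cost vector of $P$. Since every edge cost in $G$ is either $\mathbf{0}$, $-\mathbf{u}$, or an integer multiple of some $\mathbf{v}_k$, summing edge costs automatically produces a vector of the form $B\mathbf{x} - \mathbf{u}$, and extracting the coefficients $x_k$ is trivial.

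First I would observe the rigid structure of any $s$-$t$ path. The vertex $v_0$ has only one outgoing edge, namely the edge to $v_1$ of cost $-\mathbf{u}$, so every $s$-$t$ path must begin with it. Between $v_i$ and $v_{i+1}$ for $1 \le i \le W$ (the ``$B$-blocks''), the parallel edges share no vertex other than the endpoints, so the path must select exactly one parallel edge per block. Thus any $s$-$t$ path $P$ has total cost
\[
\mathbf{c}(P) = -\mathbf{u} + \sum_{i=1}^{W} \mathbf{c}_i,
\]
where $\mathbf{c}_i \in \{\mathbf{0}\} \cup \{\pm 2^{j}\mathbf{v}_k : 1 \le k \le d,\ 0 \le j \le T\}$ is the cost of the edge chosen in the $i$th block.

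Next, I would define $\mathbf{x} \in \mathbb{Z}^d$ coordinate-wise: for each $k \in [d]$, let
\[
x_k = \sum_{i \,:\, \mathbf{c}_i \in \{\pm 2^j \mathbf{v}_k\}} \varepsilon_i 2^{j_i},
\]
where $\varepsilon_i \in \{+1,-1\}$ and $j_i \in \{0,\ldots,T\}$ are the sign and exponent for block $i$ (with $x_k = 0$ if no such blocks exist). By construction $\sum_{i=1}^{W} \mathbf{c}_i = \sum_{k=1}^{d} x_k \mathbf{v}_k = B\mathbf{x}$, hence $\mathbf{c}(P) = B\mathbf{x} - \mathbf{u}$. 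Therefore
\[
\|B\mathbf{x} - \mathbf{u}\|_p = \|\mathbf{c}(P)\|_p,
\]
which is exactly the $\ell_p$-Shortest Path objective at $P$ and the $\ell_p$-CVP objective at $\mathbf{x}$.

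There is essentially no obstacle here: the reduction is designed so that edge costs themselves are the ``atoms'' of lattice combinations, and reading off $\mathbf{x}$ from $P$ takes $O(W)$ time, which is polynomial in the input size once $T$ and $W$ are fixed polynomially in $d$ and $\log M$. The only mild care needed is to verify feasibility of $\mathbf{x}$ as a CVP solution, which holds because $\mathbf{x} \in \mathbb{Z}^d$ by construction (each $x_k$ is a sum of signed integer powers of two). The more substantive content of the reduction (namely, that every CVP solution within some polynomial-magnitude range can be realized by some $s$-$t$ path, so that the optima coincide) is deferred to the companion direction of the reduction, which is asserted as point~(2) immediately after the claim.
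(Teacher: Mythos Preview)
Your proof is correct and follows essentially the same approach as the paper: both observe that any $s$-$t$ path must start with the $-\mathbf{u}$ edge and then pick exactly one parallel edge per block, and both extract $x_k$ as the sum of the signed powers of two coming from edges labeled with $\mathbf{v}_k$, concluding that $\mathbf{c}(P)=B\mathbf{x}-\mathbf{u}$. The only cosmetic difference is notation (the paper parameterizes each block's choice by a triple $(g_i,j_i,k_i)$ with $\mathbf{v}_0=\mathbf{0}$ to handle the zero-cost edge), but the argument is identical.
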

    \begin{proof}[Proof of Claim~\ref{claim:first-claim-in-CVP-proof}]
        By construction, every s-t path \(P\) in the $\ell_p$-Shortest Path instance is a collection of $W+1$ edges, the first one of which is the edge $(v_0,v_1)$ with cost $-\mathbf{u}$ and all the other edges $i$ connect $v_i$ to $v_{i+1}$ with cost equal to a vector of the form $g_i \cdot 2^{j_i} \mathbf{v}_{k_i}$ where $g_i\in\{\pm 1\}$, $0 \le j_i \le T$, and $0\leq k_i\leq d$; here, to simplify notation, we denote $\mathbf{v}_0 = 0$ ($0$ is not a lattice basis vector). The $\ell_p$-cost of \(P\) is then
        \begin{equation}
           \cost_{\ell_p}(P) = \norm{\sum_{i=1}^W g_i \cdot 2^{j_i }\mathbf{v}_{k_i}- \mathbf{u}}_p.
        \end{equation}
        We now observe that the CVP solution $\mathbf{x}\in \Z^d$ given by
        \begin{equation}\label{eq:x-in-terms-of-path}
            \mathbf{x} = \begin{pmatrix}
                \sum_{i:k_i=1} g_i \cdot 2^{j_i}\\
                \vdots\\
                \sum_{i:k_i=d} g_i \cdot 2^{j_i}
            \end{pmatrix}
        \end{equation}
        \noindent
        has CVP objective value $\norm{B\mathbf{x} - \mathbf{u}}_p$ equal to $\cost_{\ell_p}(P)$. This completes the proof of the first claim.
        \end{proof}
        
        We now address the second claim.
        \begin{claim}\label{claim:second-claim-in-CVP-proof}
            Assume that $T, W =\emph{\text{poly}}(d,\log M, \log n)$ are sufficiently large. For the optimal solution of the CVP problem, there is a corresponding solution of the $\ell_p$-Shortest Path instance of the same cost.
        \end{claim}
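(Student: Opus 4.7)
}

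The plan is to essentially invert the construction used in Claim~\ref{claim:first-claim-in-CVP-proof}: given an optimal \(\mathbf{x}^* = (x_1^*, \dots, x_d^*) \in \mathbb{Z}^d\) for the CVP instance, I would build a path \(P\) whose edge costs re-sum to \(B\mathbf{x}^* - \mathbf{u}\). Concretely, for each coordinate \(k\), let \(g_k = \mathrm{sign}(x_k^*) \in \{\pm 1\}\) (say \(g_k = +1\) when \(x_k^* = 0\)) and write the binary expansion \(|x_k^*| = \sum_{j=0}^T b_{k,j} 2^j\) with \(b_{k,j} \in \{0,1\}\). The path \(P\) starts with the unique edge \((v_0, v_1)\) of cost \(-\mathbf{u}\); then, for each pair \((k,j)\) with \(b_{k,j} = 1\), I would use one of the block edges with vector cost \(g_k \cdot 2^j \mathbf{v}_k\) (the block edges come from \(v_i\) to \(v_{i+1}\) for some available index \(i \in \{1,\dots,W\}\)), assigning these pairs to distinct block indices in any order. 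Any remaining block slots are filled with the zero-cost edge present in every \(B\)-block. By construction,
\[
    \sum_{e \in P} c_e \;=\; -\mathbf{u} + \sum_{k=1}^d g_k \sum_{j=0}^T b_{k,j}\, 2^j\, \mathbf{v}_k \;=\; -\mathbf{u} + \sum_{k=1}^d x_k^*\, \mathbf{v}_k \;=\; B\mathbf{x}^* - \mathbf{u},
\]
so \(\cost_{\ell_p}(P) = \|B\mathbf{x}^* - \mathbf{u}\|_p\), matching the CVP optimum.

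For this construction to succeed I need two quantitative guarantees: \(W\) must be at least the number of nonzero binary coefficients plus the slack for zero-edges, and \(T\) must be large enough that every coordinate \(x_k^*\) fits in \(T+1\) binary digits. The first guarantee is immediate: the number of nonzero \((k,j)\) pairs is at most \(d(T+1)\), so taking \(W = d(T+1)\) suffices (and we can pad with zero-edges).

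The main obstacle, and the only step that is not routine, is choosing \(T = \mathrm{poly}(d, \log M, \log n)\) so that \emph{some} optimal CVP solution satisfies \(|x_k^*| \le 2^T\) for every \(k\). I would establish this via the standard polynomial bound on optimal CVP coefficients: since \(\mathbf{0}\) lies in the lattice, the optimum satisfies \(\|B\mathbf{x}^*\|_p \le 2\|\mathbf{u}\|_p\); applying the Moore--Penrose pseudo-inverse, \(\mathbf{x}^* = (B^\top B)^{-1} B^\top (B\mathbf{x}^*)\), so \(\|\mathbf{x}^*\|_\infty\) is bounded by \(\|(B^\top B)^{-1} B^\top\|\) times \(\|B\mathbf{x}^*\|_p\). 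The entries of \((B^\top B)^{-1}\) are, by Cramer's rule, ratios of determinants of integer submatrices of \(B^\top B\); the numerators are bounded by Hadamard's inequality by \((dM^2)^d\), and the denominator \(\det(B^\top B)\) is a positive integer, hence \(\ge 1\). This yields \(\|\mathbf{x}^*\|_\infty \le 2^{\mathrm{poly}(d, n, \log M)}\), so setting \(T\) to this polynomial completes the argument. Since \(W\) and \(T\) are both polynomial in the input size, the reduction is polynomial-time, and combined with Claim~\ref{claim:first-claim-in-CVP-proof} it is approximation-preserving, proving Theorem~\ref{thm:CVP-reduction}.
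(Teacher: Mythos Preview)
Your proposal is correct and follows essentially the same route as the paper: construct the path from the binary expansions of the coordinates of $\mathbf{x}^*$, then bound $\|\mathbf{x}^*\|_\infty$ via the Moore--Penrose pseudo-inverse, Cramer's rule for the cofactors of $B^\top B$, and the observation that $\det(B^\top B)$ is a positive integer and hence at least $1$. The only imprecision is your cofactor bound ``$(dM^2)^d$'' (the entries of $B^\top B$ are bounded by $nM^2$, not $dM^2$) and the resulting ``$2^{\mathrm{poly}(d,n,\log M)}$'' --- carrying the computation through correctly gives $\log\|\mathbf{x}^*\|_\infty = O(d\log(nd) + d\log M)$, which is indeed $\mathrm{poly}(d,\log n,\log M)$ as required by the claim.
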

        \begin{proof}[Proof of Claim~\ref{claim:second-claim-in-CVP-proof}]
            Consider the optimal solution $\mathbf{x}^*$ to the CVP instance. 
            The corresponding solution of the \(\ell_p\)-Shortest Path instance will be given by
            the path corresponding to the binary expansion of the entries of \(\x^*\). Namely,
            for each \(1 \le k \le d\) consider the binary representation of \(\x^*_k\), as
            \[\x^*_k = g_k \sum_{i=0}^{\lceil \log |\x^*_k| \rceil} j_k^{(i)} 2^{i}\]
            where \(g_k \in \{\pm 1\}\), and \(j_k^{(i)} \in \{0, 1\}\) for each \(0 \le i \le \lceil \log_2 |\x^*_k| \rceil\).
            First we assume that \(T\) and \(W\) are large enough (we will fix them later), and we construct the path \(P\) corresponding to this solution edge-by-edge,
            with the first edge being \(v_0 \to v_1\), of cost \(-\mathbf{u}\).
            Then the path follows the binary representation:
            for each \(1 \le k \le d\)
            and \(0 \le i \le \lceil \log_2 |\x_k^*| \rceil\) such that \(j_k^{(i)} = 1\),
            we append the edge of cost \(g_k \cdot 2^{i} \cdot \mathbf{v}_k\) to the end of \(P\). The rest of the path is formed by edges of cost zero.
            
            Now by construction, we have
            \[\cost_{\ell_p}(P) = \norm{\sum_{k = 1}^d \p{g_k \sum_{i=0}^{\lceil \log_2 |\x_k^*| \rceil} j_k^{(i)} 2^{i}} \cdot \mathbf{v}_k - \mathbf{u}}_p = \norm{\sum_{k=1}^d \x^*_k \cdot \mathbf{v}_k - \mathbf{u}}_p = \norm{B \x^* - \mathbf{u}}_p\]

            It remains to find \(T\) and \(W\) large enough for the construction of \(P\) above to work.
            Observe that it suffices for \(T \ge \log \|\x^*\|_\infty \)
            and \(W \ge \sum_{k=1}^d |\{0 \le i \le \lceil \log_2 |\x_k^*| \rceil \colon j_k^{(i)} = 1\}|\).
            To this end, we show that
            \[\log \|\x^*\|_\infty \le O(d \log nd + d \log M)\]
            and so taking \(T = O(d \log nd + d \log M)\)
            and \(W = d \cdot T\) suffices.
            
            Let $\mathbf{w}^* = B \mathbf{x}^*$, so that \(
                \opt_{\text{CVP}} = \norm{\mathbf{w}^* - \mathbf{u}}_p\). Note that since $\mathbf{x} = 0$ is always a feasible solution,
            \begin{equation}\label{eq:bound-on-opt-CVP}
                \opt_{\text{CVP}} \leq \norm{\mathbf{u}}_p.
            \end{equation}
            We then have
            \[
                \norm{\mathbf{w}^*}_\infty \leq \norm{\mathbf{u}}_\infty + \norm{\mathbf{w}^* -\mathbf{u}}_\infty \leq \norm{\mathbf{u}}_\infty + \opt_{\text{CVP}}\leq 2\norm{\mathbf{u}}_p  \leq 2{n}M.
            \]
            
            Note that, since $B$ is full-rank, $\mathbf{x}^*$ is the only solution to $B\mathbf{x} = \mathbf{w}^*$ and hence $\mathbf{x}^* = B^{\dagger} \mathbf{w}$ where $B^\dagger = (B^{\top}B)^{-1} B^{\top}$ is the Moore–Penrose inverse of $B$. Recall the well-known formula for the inverse
            \begin{equation}
                (B^{\top}B)^{-1}_{ij} = \frac{(-1)^{i+j}}{\operatorname{det}(B^{\top}B)} M_{ij},
            \end{equation}
            where $M_{ij}$ is the $(i,j)$ cofactor of $B^{\top}B$, that is, $M_{ij} = \operatorname{det}(B^\top B)^{(-i,-j)}$, where $(B^\top B)^{(-i,-j)}$ is obtained from $(B^\top B)$ by removing the $i$-{th} row and the $j$-{th} column. 
            
            Now observe that because \(B\) only has integer entries, \(\det(B^\top B)\) is also an integer. As \(B\) is full-rank, we then get \(\det(B^\top B) \ge 1\). By the permutation formula for the determinant, we have:
            \begin{align*}
               |(B^\top B)^{-1}_{ij}| \leq |M_{ij}| &= \left|\sum_{\sigma \in S_{d-1}} \sgn(\sigma) \prod_{k=1}^{d-1} (B^\top B)_{k\sigma(k)}^{(-i,-j)}\right|\leq (d-1)! \p{\max_{i,j\in [d]} \left|(B^\top B)_{ij}\right|}^{d-1} \\
               &\leq (d-1)! (M^2 \cdot n)^{d-1}
            \end{align*}
            where $S_{d-1}$ is the group of permutations of $d-1$ elements and $\sgn(\sigma)$ is the sign of the permutation $\sigma$. 
            Hence we have
            \begin{equation}
               |B^\dagger_{ij}| \leq d \cdot M \max_{i,j\in[d]} |(B^\top B)^{-1}_{ij}| \leq d!\cdot M^{2d-1} \cdot n^{d-1}.
            \end{equation}
            and
            \begin{equation}
                \norm{\mathbf{x}^*}_\infty \leq \norm{\mathbf{w}^*}_{\infty} \cdot n^d \cdot d!\cdot M^{2d-1}  \leq 2n \cdot (nd)^d \cdot M^{2d}.
            \end{equation}
            Thus,
            \[
                \log_2 \norm{\mathbf{x}^*}_\infty \leq 1 + \log_2 n + d\log_2 nd + (2d+1)\log_2 M.
            \]
            as desired.


        \end{proof}

        This concludes the proof of Theorem~\ref{thm:CVP-reduction}.

\end{proof}

\subsection{Reduction from Congestion Minimization}\label{sec:hardness-from-congestion-minimization}
In this section, we show $\Omega(\log n / \log \log n)$-hardness of approximation for the $\ell_\infty$-shortest path problem. We do so by providing an approximation-preserving reduction from the \emph{Directed Congestion Minimization} problem, which is defined as follows. Given a directed unweighted graph $G=(V,E)$ and a collection of source-sink pairs $\{(s_1,t_1),(s_2,t_2) \dots , (s_k,t_k)\}$, we need to find $s_1$-$t_1$ path $P_1$, $s_2$-$t_2$ path $P_2$,\dots, $s_k$-$t_k$ path $P_k$. Define the congestion $c$ of this family of paths as the maximum number of times a single edge is used $c= \max_e|\{i:e\in P_i\}|$. Then the goal is to find a feasible solution that minimizes $c$. Chuzhoy and Khanna~\cite{chuzhoy2006hardness} showed that this problem is hard to approximate within an $\Omega(\log n/\log\log n)$-factor unless $\mathrm{NP} \subseteq \mathrm{ZPTIME}(n^{\log \log n})$.


\begin{theorem}\label{thm:minimum-congestion-reduction}
There is a polynomial-time approximation-preserving reduction from Directed Congestion Minimization to $\ell_p$-Shortest Path.
\end{theorem}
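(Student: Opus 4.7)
The plan is to build, for any Directed Congestion Minimization instance $(G, \{(s_i, t_i)\}_{i=1}^k)$, an $\ell_\infty$-Shortest Path instance on a ``stacked'' graph $G'$ whose $s$-$t$ paths correspond bijectively with tuples of $s_i$-to-$t_i$ paths in $G$, and whose $\ell_\infty$-cost coincides exactly with the congestion. Concretely, I would take $k$ vertex-disjoint copies $V_1, \dots, V_k$ of $V(G)$; for each edge $e=(u,v) \in E(G)$ and each $i \in [k]$ place a copy $e^{(i)}$ going from $u^{(i)}$ to $v^{(i)}$; and stitch the levels together with zero-cost transition edges $t_i^{(i)} \to s_{i+1}^{(i+1)}$ for $i < k$, designating $s = s_1^{(1)}$ and $t = t_k^{(k)}$ as the new source and sink. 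Cost vectors live in $\mathbb{R}_{\ge 0}^{|E(G)|}$, indexed by edges of $G$: each $e^{(i)}$ is assigned the standard basis vector of coordinate $e$, while transition edges get the zero vector.

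The central structural observation is that every $s$-$t$ path in $G'$ is forced to use each transition edge exactly once, in increasing level order, because distinct levels are vertex-disjoint outside these transitions and the only way to reach $t_k^{(k)}$ from $s_1^{(1)}$ is to climb through all of them. Consequently, any such path decomposes canonically into simple $s_i$-$t_i$ subpaths $P_i$ inside the $i$-th copy of $G$, and conversely any tuple $(P_1, \dots, P_k)$ of simple paths lifts to a valid $s$-$t$ path in $G'$. By construction, the $e$-th coordinate of the accumulated cost vector counts the number of $P_i$'s that traverse $e$, which is exactly the congestion induced on $e$ by the routing.

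Given this correspondence, the $\ell_\infty$-cost of any $s$-$t$ path in $G'$ equals the maximum congestion of its associated routing in $G$, so the two optimal values coincide and any $\alpha$-approximation for $\ell_\infty$-Shortest Path on $G'$ yields an $\alpha$-approximation for Directed Congestion Minimization on $(G, \{(s_i, t_i)\})$. The reduction is polynomial-time since $|V(G')|, |E(G')| = O(k \cdot (|V(G)| + |E(G)|))$ and $k$ is polynomial in the input. Combining with the $\Omega(\log n / \log\log n)$-hardness of Chuzhoy--Khanna~\cite{chuzhoy2006hardness} then yields \Cref{thm:intro-hardness-from-congestion-minimization}. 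There is no real obstacle: the construction is engineered so that the $\ell_\infty$-norm of a path equals the congestion of the routing on the nose, and the remaining work is only the routine verification of the path-decomposition bijection.
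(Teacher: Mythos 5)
Your construction is exactly the paper's reduction: $k$ stacked copies of $G$, one per terminal pair, joined by zero-cost edges from $(t_i,i)$ to $(s_{i+1},i+1)$, with each copied edge carrying the standard basis vector indexed by the original edge, so that the $\ell_\infty$-cost of an $s$-$t$ path equals the congestion of the corresponding routing. The correspondence argument and the conclusion that the reduction is approximation-preserving match the paper's proof, so the proposal is correct and essentially identical in approach.
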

This, together with the result of Chuzhoy and Khanna~\cite{chuzhoy2006hardness}, immediately gives us the following hardness result for $\ell_\infty$-Shortest Path.
\begin{corollary}
    The $\ell_\infty$-Shortest Path problem is hard to approximate within a factor of $\Omega(\log n/ \log \log n)$ unless $\mathrm{NP} \subseteq \mathrm{ZPTIME}(n^{\log \log n})$.
\end{corollary}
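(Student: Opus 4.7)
The plan is to build an instance of $\ell_\infty$-Shortest Path whose cost vector directly records the per-edge congestion of a DCM solution, so that an approximation for the former immediately yields one for the latter. The key idea is to \emph{chain} $k$ vertex-disjoint copies of $G$ in series, one per commodity, so that a single $s$-$t$ path in the resulting graph $G'$ simultaneously encodes the $k$ individual $s_i$-$t_i$ paths.

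Concretely, given an instance $(G=(V,E),\{(s_i,t_i)\}_{i=1}^k)$ of Directed Congestion Minimization with $\ell \eqdef |E|$, I would build $G'$ as follows. Take $k$ vertex-disjoint copies $G_1,\ldots,G_k$ of $G$; denote by $v^{(i)}$ the copy of $v \in V$ inside $G_i$, and for each $i \in [k-1]$ add a \emph{linker} edge from $t_i^{(i)}$ to $s_{i+1}^{(i+1)}$. Assign vector costs in $\bbR_{\ge 0}^{\ell}$: every copy of an original edge $e \in E$ receives the standard basis vector $\mathbf{e}_e \in \{0,1\}^\ell$ supported on coordinate $e$, while linker edges have zero cost. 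Finally, set $s \eqdef s_1^{(1)}$ and $t \eqdef t_k^{(k)}$.

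Next I would check the correspondence between solutions. Since the linker edges are bridges, every simple $s$-$t$ path $P$ in $G'$ factors uniquely as $P = Q_1 \cdot L_1 \cdot Q_2 \cdots L_{k-1} \cdot Q_k$, where $Q_i$ is a simple $s_i^{(i)}$-$t_i^{(i)}$ path inside $G_i$ and each $L_i$ is the $i$-th linker. Identifying $Q_i$ with the corresponding simple $s_i$-$t_i$ path $\tilde P_i$ in $G$, the aggregate cost vector equals $\sum_{e \in E} c_e\,\mathbf{e}_e$ with $c_e \eqdef |\{i : e \in \tilde P_i\}|$, so $\cost_{\ell_\infty}(P) = \max_e c_e$ is \emph{exactly} the congestion of $(\tilde P_1,\ldots,\tilde P_k)$. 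Conversely, any feasible DCM solution gives, by concatenation through linkers, a simple $s$-$t$ path in $G'$ of the same $\ell_\infty$ cost, so the two optima coincide.

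The main thing to verify --- and the main potential obstacle --- is that the correspondence is robust under approximation: given an $\alpha$-approximate path $P$ for $\ell_\infty$-Shortest Path on $G'$, the recovered tuple $(\tilde P_1,\ldots,\tilde P_k)$ must still be $\alpha$-approximate for DCM. Vertex-disjointness of the copies and the forced traversal of every linker ensure that the $Q_i$ are independent $s_i$-$t_i$ paths with no cross-copy interaction, and non-negativity of costs lets us short-cut any stray cycles inside a $Q_i$, which only decreases every $c_e$. Since $G'$ has size $O(k(n+m))$, this yields a polynomial-time exact approximation-preserving reduction for $p=\infty$. For the general $\ell_p$ statement of Theorem~\ref{thm:minimum-congestion-reduction}, observe that $\|c\|_\infty \le \|c\|_p \le \ell^{1/p}\,\|c\|_\infty$ with $\ell$ polynomial in $n$, so the same reduction is approximation-preserving up to constants whenever $p = \Omega(\log n)$, which is precisely the regime needed for the corollary.
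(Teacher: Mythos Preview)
Your proposal is correct and follows essentially the same approach as the paper: both construct $k$ vertex-disjoint copies of $G$ chained in series by zero-cost linker edges, assign each copy of edge $e$ the indicator vector $\mathbf{e}_e\in\bbR^{|E|}_{\ge 0}$, and observe that the $\ell_\infty$-cost of an $s$-$t$ path in the resulting graph equals exactly the congestion of the corresponding routing. Your treatment is slightly more careful about short-cutting cycles, and your final paragraph about general $p$ via the $\ell^{1/p}$ norm comparison is extraneous to the corollary (the paper simply invokes the $p=\infty$ case directly), but the core reduction and its analysis are the same.
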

  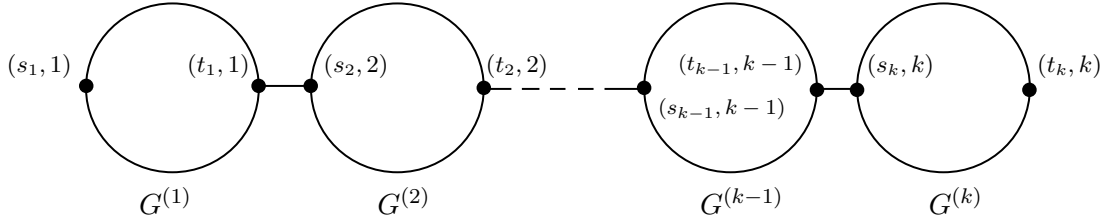
\begin{figure}[t]
        \centering

\tikzset{every picture/.style={line width=0.75pt}} 

\begin{tikzpicture}[x=0.75pt,y=0.75pt,yscale=-1,xscale=1]

\draw   (81,101.47) .. controls (81,78.01) and (100.3,59) .. (124.12,59) .. controls (147.93,59) and (167.23,78.01) .. (167.23,101.47) .. controls (167.23,124.92) and (147.93,143.93) .. (124.12,143.93) .. controls (100.3,143.93) and (81,124.92) .. (81,101.47) -- cycle ;
\draw   (193.23,101.47) .. controls (193.23,78.01) and (212.54,59) .. (236.35,59) .. controls (260.16,59) and (279.47,78.01) .. (279.47,101.47) .. controls (279.47,124.92) and (260.16,143.93) .. (236.35,143.93) .. controls (212.54,143.93) and (193.23,124.92) .. (193.23,101.47) -- cycle ;
\draw   (359.47,101.47) .. controls (359.47,78.01) and (378.77,59) .. (402.58,59) .. controls (426.4,59) and (445.7,78.01) .. (445.7,101.47) .. controls (445.7,124.92) and (426.4,143.93) .. (402.58,143.93) .. controls (378.77,143.93) and (359.47,124.92) .. (359.47,101.47) -- cycle ;
\draw   (465.7,101.47) .. controls (465.7,78.01) and (485,59) .. (508.82,59) .. controls (532.63,59) and (551.93,78.01) .. (551.93,101.47) .. controls (551.93,124.92) and (532.63,143.93) .. (508.82,143.93) .. controls (485,143.93) and (465.7,124.92) .. (465.7,101.47) -- cycle ;
\draw  [fill={rgb, 255:red, 0; green, 0; blue, 0 }  ,fill opacity=1 ] (78,100.47) .. controls (78,98.75) and (79.4,97.35) .. (81.12,97.35) .. controls (82.84,97.35) and (84.23,98.75) .. (84.23,100.47) .. controls (84.23,102.19) and (82.84,103.58) .. (81.12,103.58) .. controls (79.4,103.58) and (78,102.19) .. (78,100.47) -- cycle ;
\draw  [fill={rgb, 255:red, 0; green, 0; blue, 0 }  ,fill opacity=1 ] (164,100.47) .. controls (164,98.75) and (165.4,97.35) .. (167.12,97.35) .. controls (168.84,97.35) and (170.23,98.75) .. (170.23,100.47) .. controls (170.23,102.19) and (168.84,103.58) .. (167.12,103.58) .. controls (165.4,103.58) and (164,102.19) .. (164,100.47) -- cycle ;
\draw  [fill={rgb, 255:red, 0; green, 0; blue, 0 }  ,fill opacity=1 ] (190,100.47) .. controls (190,98.75) and (191.4,97.35) .. (193.12,97.35) .. controls (194.84,97.35) and (196.23,98.75) .. (196.23,100.47) .. controls (196.23,102.19) and (194.84,103.58) .. (193.12,103.58) .. controls (191.4,103.58) and (190,102.19) .. (190,100.47) -- cycle ;
\draw  [fill={rgb, 255:red, 0; green, 0; blue, 0 }  ,fill opacity=1 ] (276.35,101.47) .. controls (276.35,99.75) and (277.75,98.35) .. (279.47,98.35) .. controls (281.19,98.35) and (282.58,99.75) .. (282.58,101.47) .. controls (282.58,103.19) and (281.19,104.58) .. (279.47,104.58) .. controls (277.75,104.58) and (276.35,103.19) .. (276.35,101.47) -- cycle ;
\draw  [fill={rgb, 255:red, 0; green, 0; blue, 0 }  ,fill opacity=1 ] (356.35,101.47) .. controls (356.35,99.75) and (357.75,98.35) .. (359.47,98.35) .. controls (361.19,98.35) and (362.58,99.75) .. (362.58,101.47) .. controls (362.58,103.19) and (361.19,104.58) .. (359.47,104.58) .. controls (357.75,104.58) and (356.35,103.19) .. (356.35,101.47) -- cycle ;
\draw  [fill={rgb, 255:red, 0; green, 0; blue, 0 }  ,fill opacity=1 ] (442.58,102.47) .. controls (442.58,100.75) and (443.98,99.35) .. (445.7,99.35) .. controls (447.42,99.35) and (448.82,100.75) .. (448.82,102.47) .. controls (448.82,104.19) and (447.42,105.58) .. (445.7,105.58) .. controls (443.98,105.58) and (442.58,104.19) .. (442.58,102.47) -- cycle ;
\draw  [fill={rgb, 255:red, 0; green, 0; blue, 0 }  ,fill opacity=1 ] (462.58,102.47) .. controls (462.58,100.75) and (463.98,99.35) .. (465.7,99.35) .. controls (467.42,99.35) and (468.82,100.75) .. (468.82,102.47) .. controls (468.82,104.19) and (467.42,105.58) .. (465.7,105.58) .. controls (463.98,105.58) and (462.58,104.19) .. (462.58,102.47) -- cycle ;
\draw  [fill={rgb, 255:red, 0; green, 0; blue, 0 }  ,fill opacity=1 ] (548.58,102.47) .. controls (548.58,100.75) and (549.98,99.35) .. (551.7,99.35) .. controls (553.42,99.35) and (554.82,100.75) .. (554.82,102.47) .. controls (554.82,104.19) and (553.42,105.58) .. (551.7,105.58) .. controls (549.98,105.58) and (548.58,104.19) .. (548.58,102.47) -- cycle ;
\draw    (167.12,100.5) -- (190,100.5) ;
\draw    (445.7,102) -- (465.7,102) ;
\draw    (279.47,102) -- (293.23,102) ;
\draw    (339.7,102) -- (359.47,102) ;
\draw    (298.47,102) -- (307.23,102) ;
\draw    (312.47,102) -- (320.23,102) ;
\draw    (326.47,102) -- (334.23,102) ;

\draw (40,83) node [anchor=north west][inner sep=0.75pt]  [font=\footnotesize]  {$( s_{1} ,1)$};
\draw (130,83) node [anchor=north west][inner sep=0.75pt]  [font=\footnotesize]  {$( t_{1} ,1)$};
\draw (198,83) node [anchor=north west][inner sep=0.75pt]  [font=\footnotesize]  {$( s_{2} ,2)$};
\draw (279,83) node [anchor=north west][inner sep=0.75pt]  [font=\footnotesize]  {$( t_{2} ,2)$};
\draw (365,105) node [anchor=north west][inner sep=0.75pt]  [font=\scriptsize]  {$( s_{k-1} ,k-1)$};
\draw (375,83) node [anchor=north west][inner sep=0.75pt]  [font=\scriptsize]  {$( t_{k-1} ,k-1)$};
\draw (469,83) node [anchor=north west][inner sep=0.75pt]  [font=\footnotesize]  {$( s_{k} ,k)$};
\draw (554,83) node [anchor=north west][inner sep=0.75pt]  [font=\footnotesize]  {$( t_{k} ,k)$};
\draw (106,150) node [anchor=north west][inner sep=0.75pt]    {$G^{( 1)}$};
\draw (225,150) node [anchor=north west][inner sep=0.75pt]    {$G^{( 2)}$};
\draw (386,150) node [anchor=north west][inner sep=0.75pt]    {$G^{( k-1)}$};
\draw (500,150) node [anchor=north west][inner sep=0.75pt]    {$G^{( k)}$};

\end{tikzpicture}
        \caption{A diagram of the construction for the proof of Theorem~\ref{thm:minimum-congestion-reduction}.}
        \label{fig:minimum-congestion-reduction}
    \end{figure}
\begin{proof}[Proof of Theorem~\ref{thm:minimum-congestion-reduction}]
Our reduction works as follows. Given an instance $G=(V,E)$ of the congestion minimization problem, we construct an instance $G'=(V',E',w)$ of the $\ell_\infty$-Shortest Path problem defined as follows
    \[
        V'\eqdef\{(v,i) \given v \in V, i\in [k]\}, 
    \]
    \[
        E'\eqdef \{((u,i),(v,i))\in E \given (u,v)\in E, i\in [k]\} \cup\{ ((t_i,i),(s_{i+1},{i+1}))\given i \in [k-1]\},
    \]
    and $c : E' \to \R^E$ (that is, the vector costs in this instance are $m$-dimensional, where $m = |E|$, and the coordinates are indexed by $e\in E$) given by
    \[
        c(((u,i),(v,i)))_{(u',v')} = \begin{cases}
            1, & \text{if }  (u,v) = (u',v'),\\
            0, & \text{otherwise.}
        \end{cases}
    \]
    The costs of edges $((t_i,i),(s_{i+1},{i+1}))$ is 0.
    
    In other words, we construct $k$ copies $G^{(1)}, \dots , G^{(k)}$ of the Directed Congestion Minimization instance $G=(V,E)$ and stitch them together by connecting terminal $t_i$ in the $i$-{th} copy to terminal $s_{i+1}$ in the $(i+1)$-{th} copy. Copies $((u,1),(v,1)),\dots,((u,k), (v,k))$ of the same edge $(u,v)$ have the same cost. Note that the edges of the form $((t_i,i),(s_i,i+1))$ have cost zero, though one could easily change the construction to enforce that every cost vector is non-zero by simply identifying the two endpoints with each other.

    We now argue that there is a one-to-one correspondence between routings in $G$ with congestion $c$ and paths connecting $(s_1,1)$ and $(t_k,k)$ in $G'$ with $\ell_\infty$ cost $c$. Consider a solution $P_1,\dots,P_k$ for the Directed Congestion Minimization problem with congestion $c$. We construct a path $P$ from $(s_1,1)$ to $(t_k,k)$ in $G'$ by taking a copy of $P_1$ in $G^{(1)}$, a copy of $P_2$ in $G^{(2)}$, \dots, a copy of $P_k$ in $G^{(k)}$, concatenating all of them and all zero-cost edges between graphs $G^{(i)}$ and $G^{(i+1)}$. By construction, this path has $\ell_\infty$-cost equal to $c$. Similarly, every path in $G'$ connecting $(s_1,1)$ to $(t_k,k)$ with $\ell_\infty$-cost $c$ can be decomposed into $k$ paths $P_1, \dots, P_k$ (plus zero-cost edges) all of which are simultaneously routable in $G$ with congestion $c$.

    We have obtained an approximation-preserving reduction from Directed Congestion Minimization to $\ell_p$-Shortest Path.
\end{proof}

We remark that we can consider an $\ell_p$-version of the Directed Congestion Minimization problem in which congestion is measured as the $\ell_p$-norm of the vector of congestions on the edges of $G$ (the \emph{maximum} congestion used above corresponds to $\ell_\infty$). A similar analysis can be used to reduce this version of Congestion Minimization to the $\ell_p$-Shortest Path problem, showing that $\ell_p$-Shortest Path is at least as hard to approximate as $\ell_p$-Directed Congestion Minimization for every $p$. However, we are not aware of any known hardness results for this more general version of Congestion Minimization, so this result would not directly yield any lower bounds for  $\ell_p$-Shortest Path.

\subsection{Tightness of the analysis in Section~\ref{sec:rounding}}\label{sec:tightness-of-analysis}
In this section, we argue that our analysis provided in Lemma~\ref{lem:cost-analysis} (see Section \ref{sec:rounding}) is tight. 
We start with considering a simple construction showing that Lemma~\ref{lem:cost-analysis} is tight for $h=1$. We then refine this construction and show that the result of the lemma is also tight for arbitrary values of $h$. Initially, we analyze the performance of the algorithm on a feasible but suboptimal SoS solution. We then show that the algorithm performs the same on an optimal SoS solution.

For brevity, we will refer to the $\ell_p$-cost of a path raised to the power of $p$ as the $\ell_p^p$-cost and the SoS-objective as the $\ell_p^p$-objective.

\paragraph{Base case} Let $H$ be an undirected graph consisting of only two vertices with two edges between them $e_1$ and $e_2$ with scalar cost $c(e_1)=1$ and $c(e_2) = 0$. Let $G^{(1)}_N$ be the order 1 series-parallel graph obtained by composing together $N$ copies of $H$ in series (see Figure~\ref{fig:first_tightness_proof}). 
\begin{figure}[H]
    \centering
    \tikzset{every picture/.style={line width=0.75pt}} 

\begin{tikzpicture}[x=0.75pt,y=0.75pt,yscale=-1,xscale=1]

\draw  [fill={rgb, 255:red, 0; green, 0; blue, 0 }  ,fill opacity=1 ] (139.08,115.17) .. controls (139.08,112.56) and (141.2,110.43) .. (143.82,110.43) .. controls (146.44,110.43) and (148.56,112.56) .. (148.56,115.17) .. controls (148.56,117.79) and (146.44,119.92) .. (143.82,119.92) .. controls (141.2,119.92) and (139.08,117.79) .. (139.08,115.17) -- cycle ;
\draw  [fill={rgb, 255:red, 0; green, 0; blue, 0 }  ,fill opacity=1 ] (233.49,115.17) .. controls (233.49,112.56) and (235.62,110.43) .. (238.24,110.43) .. controls (240.86,110.43) and (242.98,112.56) .. (242.98,115.17) .. controls (242.98,117.79) and (240.86,119.92) .. (238.24,119.92) .. controls (235.62,119.92) and (233.49,117.79) .. (233.49,115.17) -- cycle ;
\draw  [fill={rgb, 255:red, 0; green, 0; blue, 0 }  ,fill opacity=1 ] (329.22,115.17) .. controls (329.22,112.56) and (331.35,110.43) .. (333.96,110.43) .. controls (336.58,110.43) and (338.71,112.56) .. (338.71,115.17) .. controls (338.71,117.79) and (336.58,119.92) .. (333.96,119.92) .. controls (331.35,119.92) and (329.22,117.79) .. (329.22,115.17) -- cycle ;
\draw  [fill={rgb, 255:red, 0; green, 0; blue, 0 }  ,fill opacity=1 ] (494.45,115.17) .. controls (494.45,112.56) and (496.58,110.43) .. (499.2,110.43) .. controls (501.82,110.43) and (503.94,112.56) .. (503.94,115.17) .. controls (503.94,117.79) and (501.82,119.92) .. (499.2,119.92) .. controls (496.58,119.92) and (494.45,117.79) .. (494.45,115.17) -- cycle ;
\draw    (143.82,116.06) .. controls (144.53,129.73) and (166.86,143.54) .. (189.81,144.25) .. controls (212.77,144.96) and (239.28,131.79) .. (239.55,116.06) ;
\draw  [fill={rgb, 255:red, 0; green, 0; blue, 0 }  ,fill opacity=1 ] (397.41,115.17) .. controls (397.41,112.56) and (399.54,110.43) .. (402.16,110.43) .. controls (404.78,110.43) and (406.9,112.56) .. (406.9,115.17) .. controls (406.9,117.79) and (404.78,119.92) .. (402.16,119.92) .. controls (399.54,119.92) and (397.41,117.79) .. (397.41,115.17) -- cycle ;
\draw    (146.18,202.02) -- (497.72,202.02) ;
\draw    (146.18,202.02) -- (146.18,181.75) ;
\draw    (497.72,202.02) -- (497.72,181.75) ;
\draw    (143.82,116.06) .. controls (144.11,98.55) and (168.06,85.63) .. (192.16,85.66) .. controls (216.26,85.69) and (240.2,102.07) .. (239.55,116.06) ;
\draw    (237.56,117.24) .. controls (238.27,130.9) and (260.6,144.71) .. (283.56,145.42) .. controls (306.51,146.14) and (333.02,132.96) .. (333.29,117.24) ;
\draw    (237.56,117.24) .. controls (237.86,99.72) and (261.8,86.81) .. (285.9,86.83) .. controls (310,86.86) and (333.95,103.24) .. (333.29,117.24) ;
\draw    (402.78,116.06) .. controls (403.5,129.73) and (425.82,143.54) .. (448.78,144.25) .. controls (471.74,144.96) and (498.25,131.79) .. (498.51,116.06) ;
\draw    (402.78,116.06) .. controls (403.08,98.55) and (427.02,85.63) .. (451.12,85.66) .. controls (475.23,85.69) and (499.17,102.07) .. (498.51,116.06) ;

\draw (132.96,121.89) node [anchor=north west][inner sep=0.75pt]    {$s$};
\draw (506.03,120.88) node [anchor=north west][inner sep=0.75pt]    {$t$};
\draw (360.19,103.9) node [anchor=north west][inner sep=0.75pt]   [align=left] {. . .};
\draw (181.1,153.86) node [anchor=north west][inner sep=0.75pt]    {$H$};
\draw (277.11,152.83) node [anchor=north west][inner sep=0.75pt]    {$H$};
\draw (443.51,152.69) node [anchor=north west][inner sep=0.75pt]    {$H$};
\draw (318.8,213.46) node [anchor=north west][inner sep=0.75pt]    {$N$};
\draw (187.06,121.77) node [anchor=north west][inner sep=0.75pt]    {$0$};
\draw (185.88,64.35) node [anchor=north west][inner sep=0.75pt]    {$1$};
\draw (280.8,64.35) node [anchor=north west][inner sep=0.75pt]    {$1$};
\draw (443.68,65.52) node [anchor=north west][inner sep=0.75pt]    {$1$};
\draw (278.46,121.77) node [anchor=north west][inner sep=0.75pt]    {$0$};
\draw (444.85,122.94) node [anchor=north west][inner sep=0.75pt]    {$0$};
\draw (108.18,50.06) node [anchor=north west][inner sep=0.75pt]  [font=\large]  {$G_{N}^{( 1)}$};

\end{tikzpicture}
    \caption{The construction of $G^{(1)}_N$.}
    \label{fig:first_tightness_proof}
\end{figure}
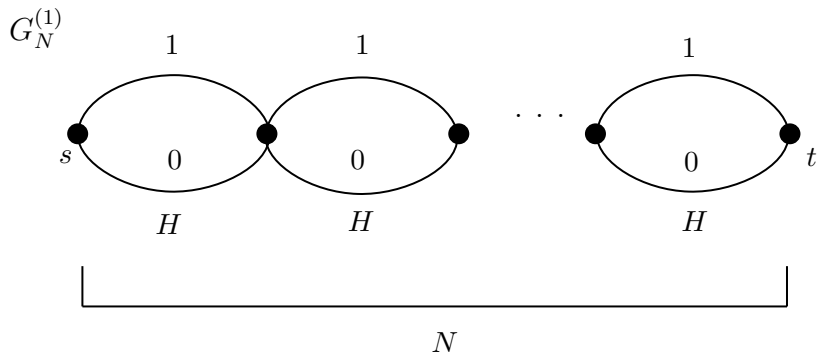
For every $i=1, \dots , N$, let $P_i$ be the $s$-$t$ path in $G^{(1)}_N$ obtained by concatenating the edge of cost zero in every $H$-block with the exception of the $i$-{th} block, from which instead we take the edge of cost 1. Let  $\mathcal{D}$ be the uniform probability distribution over $\{P_i\}_{i=1}^N$. Now, let $\psE$ be the \textit{true} expectation $\mathbb{E}_{\mathcal{D}}$ under this probability distribution. Then $\psE$ is a feasible solution to the SoS relaxation for the $\ell_p$-Shortest Path. Note that since every $P_i$ has $\ell_p$-cost $1$, the $\ell_p^p$-objective value of this SoS solution is also $1$. 

We will now show that the expected cost of the path output by our algorithm when rounding this SoS solution is no smaller than $\bell_1(p) -o (1)$ (where the $o(1)$ term goes to $0$ as $N\to \infty$); this lower bound matches the upper bound from Lemma~\ref{lem:cost-analysis}. \Cref{alg:rounding} will return a path $P$ constructed by selecting one edge at a time from every $H$-block in $G^{(1)}_N$. For each block, the algorithm will pick the edge of cost zero with probability ${(N-1 )/ N}$ and the edge of cost $1$ with probability $1/N$.
Hence, the cost of the path $P$ (since the construction has scalar costs, we the cost of a path here is simply the sum of the scalar costs of edges along the path) follows a binomial distribution with parameters $N$ and ${1/N}$. As $N \to \infty$, the distribution of the cost tends to the Poisson distribution with rate $1$, and so the expectation of the \(p\)-th power of the cost tends to $\bell_1(p)$ by \Cref{fact:iterated-poisson-moments}. Hence, for the case of $h=1$ the bound in Lemma~\ref{lem:cost-analysis} is tight.

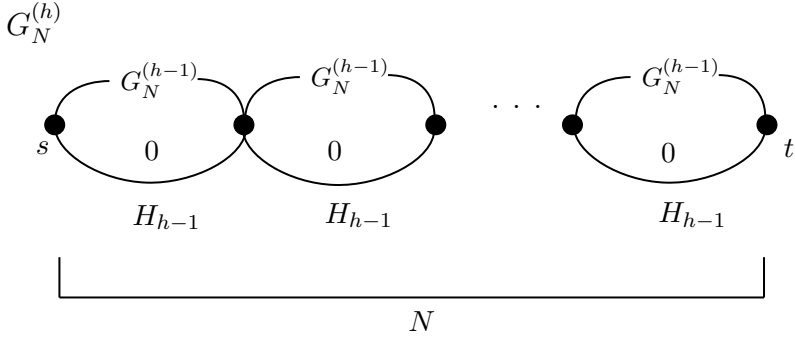
\begin{figure}[t]
    \centering
    \tikzset{every picture/.style={line width=0.75pt}} 

\begin{tikzpicture}[x=0.75pt,y=0.75pt,yscale=-1,xscale=1]

\draw  [fill={rgb, 255:red, 0; green, 0; blue, 0 }  ,fill opacity=1 ] (139.08,115.17) .. controls (139.08,112.56) and (141.2,110.43) .. (143.82,110.43) .. controls (146.44,110.43) and (148.56,112.56) .. (148.56,115.17) .. controls (148.56,117.79) and (146.44,119.92) .. (143.82,119.92) .. controls (141.2,119.92) and (139.08,117.79) .. (139.08,115.17) -- cycle ;
\draw  [fill={rgb, 255:red, 0; green, 0; blue, 0 }  ,fill opacity=1 ] (233.49,115.17) .. controls (233.49,112.56) and (235.62,110.43) .. (238.24,110.43) .. controls (240.86,110.43) and (242.98,112.56) .. (242.98,115.17) .. controls (242.98,117.79) and (240.86,119.92) .. (238.24,119.92) .. controls (235.62,119.92) and (233.49,117.79) .. (233.49,115.17) -- cycle ;
\draw  [fill={rgb, 255:red, 0; green, 0; blue, 0 }  ,fill opacity=1 ] (329.22,115.17) .. controls (329.22,112.56) and (331.35,110.43) .. (333.96,110.43) .. controls (336.58,110.43) and (338.71,112.56) .. (338.71,115.17) .. controls (338.71,117.79) and (336.58,119.92) .. (333.96,119.92) .. controls (331.35,119.92) and (329.22,117.79) .. (329.22,115.17) -- cycle ;
\draw  [fill={rgb, 255:red, 0; green, 0; blue, 0 }  ,fill opacity=1 ] (494.45,115.17) .. controls (494.45,112.56) and (496.58,110.43) .. (499.2,110.43) .. controls (501.82,110.43) and (503.94,112.56) .. (503.94,115.17) .. controls (503.94,117.79) and (501.82,119.92) .. (499.2,119.92) .. controls (496.58,119.92) and (494.45,117.79) .. (494.45,115.17) -- cycle ;
\draw    (143.82,116.06) .. controls (144.53,129.73) and (166.86,143.54) .. (189.81,144.25) .. controls (212.77,144.96) and (239.28,131.79) .. (239.55,116.06) ;
\draw  [fill={rgb, 255:red, 0; green, 0; blue, 0 }  ,fill opacity=1 ] (397.41,115.17) .. controls (397.41,112.56) and (399.54,110.43) .. (402.16,110.43) .. controls (404.78,110.43) and (406.9,112.56) .. (406.9,115.17) .. controls (406.9,117.79) and (404.78,119.92) .. (402.16,119.92) .. controls (399.54,119.92) and (397.41,117.79) .. (397.41,115.17) -- cycle ;
\draw    (146.18,202.02) -- (497.72,202.02) ;
\draw    (146.18,202.02) -- (146.18,181.75) ;
\draw    (497.72,202.02) -- (497.72,181.75) ;
\draw    (143.82,116.06) .. controls (143.23,99.07) and (158.13,93.15) .. (171.13,93.15) ;
\draw    (237.56,117.24) .. controls (238.27,130.9) and (260.6,144.71) .. (283.56,145.42) .. controls (306.51,146.14) and (333.02,132.96) .. (333.29,117.24) ;
\draw    (215.13,92.15) .. controls (231.13,92.27) and (239.23,102.22) .. (238.24,115.17) ;
\draw    (402.78,116.06) .. controls (403.5,129.73) and (425.82,143.54) .. (448.78,144.25) .. controls (471.74,144.96) and (498.25,131.79) .. (498.51,116.06) ;
\draw    (238.82,114.06) .. controls (238.23,97.07) and (253.13,91.15) .. (266.13,91.15) ;
\draw    (310.13,90.15) .. controls (326.13,90.27) and (334.23,100.22) .. (333.24,113.17) ;
\draw    (403.82,114.06) .. controls (403.23,97.07) and (418.13,91.15) .. (431.13,91.15) ;
\draw    (475.13,90.15) .. controls (491.13,90.27) and (499.23,100.22) .. (498.24,113.17) ;

\draw (132.96,121.89) node [anchor=north west][inner sep=0.75pt]    {$s$};
\draw (506.03,120.88) node [anchor=north west][inner sep=0.75pt]    {$t$};
\draw (360.19,103.9) node [anchor=north west][inner sep=0.75pt]   [align=left] {. . .};
\draw (181.1,153.86) node [anchor=north west][inner sep=0.75pt]    {$H_{h-1}$};
\draw (277.11,152.83) node [anchor=north west][inner sep=0.75pt]    {$H_{h-1}$};
\draw (443.51,152.69) node [anchor=north west][inner sep=0.75pt]    {$H_{h-1}$};
\draw (318.8,207.46) node [anchor=north west][inner sep=0.75pt]    {$N$};
\draw (187.06,121.77) node [anchor=north west][inner sep=0.75pt]    {$0$};
\draw (278.46,121.77) node [anchor=north west][inner sep=0.75pt]    {$0$};
\draw (444.85,122.94) node [anchor=north west][inner sep=0.75pt]    {$0$};
\draw (175.18,82.06) node [anchor=north west][inner sep=0.75pt]  [font=\small]  {$G_{N}^{( h-1)}$};
\draw (118.18,52.06) node [anchor=north west][inner sep=0.75pt]  [font=\large]  {$G_{N}^{( h)}$};
\draw (270.18,80.06) node [anchor=north west][inner sep=0.75pt]  [font=\small]  {$G_{N}^{( h-1)}$};
\draw (435.18,80.06) node [anchor=north west][inner sep=0.75pt]  [font=\small]  {$G_{N}^{( h-1)}$};

\end{tikzpicture}
    \caption{The construction of $G^{(h)}_N$}
    \label{fig:second-tightness-proof}
\end{figure}

\paragraph{Generalization to higher \(h\)} We will now lift the construction above to show an analogous result for every value of $h$. We define a family of graphs $\{G^{(h)}_N\}$ recursively as follows.  Define block $H_{h-1}$ as the parallel composition of $G^{(h-1)}_N$ and an edge of cost zero. Then $G^{(h)}_N$ is obtained by combining $N$ distinct $H_{h-1}$-blocks in series (see Figure~\ref{fig:second-tightness-proof}).

In this new construction, we consider a family of $s$-$t$  paths where each path enters a copy of $G_N^{(h-1)}$ exactly in one $H_{h-1}$-block and uses the $0$-cost edges in all other $H_{h-1}$-blocks. Inside this copy of $G_N^{(h-1)}$, it enters a copy of $G_N^{(h-2)}$ in exactly one block $H_{h-2}$-block, and so on.
Note that these paths can be parameterized by the unique edge $e$ of cost $1$ they use. Similarly to the construction above, we consider the SoS solution defined by the expectation $\mathbb{E}_{\mathcal{D}_N^{(h)}}$, where $\mathcal{D}_N^{(h)}$ is the uniform distribution over this family of paths. As before, each path in this family has cost $1$ and so does the SoS solution. 

Now, consider the behavior of our rounding algorithm on the solution \(\mathbb{E}_{\mathcal{D}_N^{(h)}}\).
The distribution of the cost is described by the following recursively-defined sequence (as above, the cost of a path is simply the sum of the scalar costs of the edges along the path).
We let
\[Q_N^{(1)} = 
\begin{cases}1 & \text{w.p. } \frac{1}{N} \\ 0 & \text{w.p. } 1 - \frac{1}{N} \end{cases} \qquad, \qquad R_N^{(1)} = \text{sum of \(N\) independent samples of } Q_N^{(1)}\]
And similarly, for higher \(h \ge 2\) we define
\begin{gather*}
Q_N^{(h)} = \begin{cases} \text{sample from } R_N^{(h-1)} & \text{w.p. } \frac{1}{N} \\ 0 & \text{w.p. } 1 - \frac{1}{N} \end{cases}\\
R_N^{(h)} = \text{sum of \(N\) independent samples from } Q_N^{(h)}
\end{gather*}
Let $P_N^{(h)}$ be the random path in \(G_N^{(1)}\) returned by our algorithm by rounding \(\mathcal{D}_N^{(1)}\). As described above, the distribution of the cost of $P_B^{(1)}$ follows \(R_N^{(1)}\).
We show by induction on $h$ that, in general, the cost of $P = P_N^{(h)}$ follows \(R_N^{(h)}\).
Indeed, consider how path $P$ traverses an $H_{h-1}$-block. It either takes the \(0\)-cost edge or enters a copy of \(G_{N}^{(h-1)}\) inside the block. In the latter case,
the restriction of $P$ to the block is distributed exactly like $P^{(h-1)}_N$; in turn, the cost of $P_N^{(h-1)}$ follows $R_{N}^{(h-1)}$. Since $P$ enters a copy of $G_{N}^{(h-1)}$ with probability $1/N$, the cost of the restriction of $P$ to the $H_{h-1}$-block is distributed like $Q_N^{(h)}$. As the cost of $P$ is the sum of costs of restrictions of $P$ to all $H_{h-1}$-blocks forming $G^{(h)}_N$, we conclude that the cost of $P$ follows $Q_N^{(h)}$, as desired.

Next, we argue that as \(N \to \infty\), the distribution of \(R_N^{(h)}\) approaches \(Z_h\) (the Poisson branching process described in Section~\ref{sec:prelim}).
As above, \(R_N^{(1)}\) converges to \(Z_1\sim \Pois(1)\) in distribution, as $N\to \infty$. Now consider $R_N^{(h)}$. It is a sum of $R^{(1)}_N$ independent copies of $R^{(h-1)}_N$. In the limit when $N\to \infty$, $R_N^{(h)}$ is distributed like a sum of $Z_1$ independent copies of $Z_{h-1}$. It is straightforward that this sum has the same distribution as $Z_h$.

We see that the expected $\ell_p^p$-cost of the path produced by the algorithm is $\E{Z_h^p} = \bell_h(p)$ in the limit (see Fact~\ref{fact:iterated-poisson-moments}); at the same time, the cost of the SoS-solution is 1. Thus, the gap is $\bell_d(p)$. This construction shows that the performance guarantee in Lemma~\ref{lem:cost-analysis} is tight when we run our algorithm on a possibly subotimal SoS solution. 

\paragraph{Tightness of rounding optimal solutions} In the example above, we considered suboptimal SoS solutions; in fact, our instances had optimal solutions of cost 0. Now we consider slightly more complex constructions that show that even when we round optimal SoS solutions, the bound of Lemma~\ref{lem:cost-analysis} can be tight. In contrast to the the constructions above, where costs were scalars, we will use multidimensional vector costs in our new constructions.

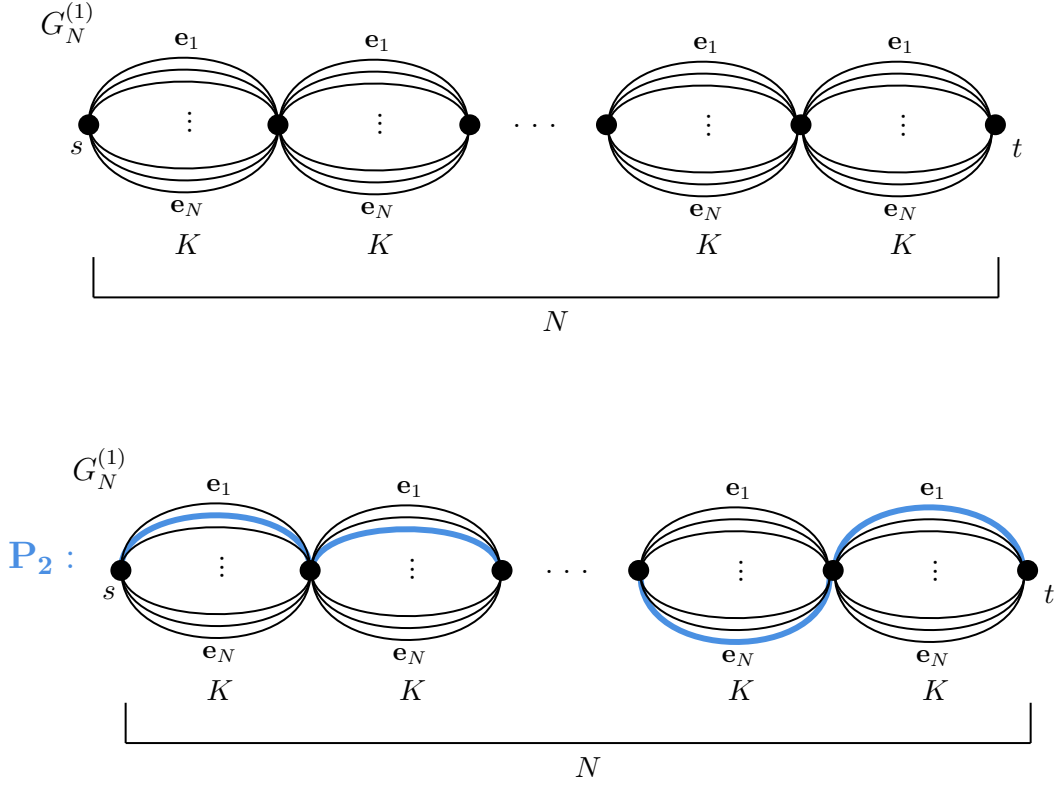
\begin{figure}[t]
    \centering
    \tikzset{every picture/.style={line width=0.75pt}} 

\begin{tikzpicture}[x=0.75pt,y=0.75pt,yscale=-1,xscale=1]

\draw  [fill={rgb, 255:red, 0; green, 0; blue, 0 }  ,fill opacity=1 ] (76.08,92.17) .. controls (76.08,89.56) and (78.2,87.43) .. (80.82,87.43) .. controls (83.44,87.43) and (85.56,89.56) .. (85.56,92.17) .. controls (85.56,94.79) and (83.44,96.92) .. (80.82,96.92) .. controls (78.2,96.92) and (76.08,94.79) .. (76.08,92.17) -- cycle ;
\draw  [fill={rgb, 255:red, 0; green, 0; blue, 0 }  ,fill opacity=1 ] (170.49,92.17) .. controls (170.49,89.56) and (172.62,87.43) .. (175.24,87.43) .. controls (177.86,87.43) and (179.98,89.56) .. (179.98,92.17) .. controls (179.98,94.79) and (177.86,96.92) .. (175.24,96.92) .. controls (172.62,96.92) and (170.49,94.79) .. (170.49,92.17) -- cycle ;
\draw  [fill={rgb, 255:red, 0; green, 0; blue, 0 }  ,fill opacity=1 ] (266.22,92.17) .. controls (266.22,89.56) and (268.35,87.43) .. (270.96,87.43) .. controls (273.58,87.43) and (275.71,89.56) .. (275.71,92.17) .. controls (275.71,94.79) and (273.58,96.92) .. (270.96,96.92) .. controls (268.35,96.92) and (266.22,94.79) .. (266.22,92.17) -- cycle ;
\draw  [fill={rgb, 255:red, 0; green, 0; blue, 0 }  ,fill opacity=1 ] (431.45,92.17) .. controls (431.45,89.56) and (433.58,87.43) .. (436.2,87.43) .. controls (438.82,87.43) and (440.94,89.56) .. (440.94,92.17) .. controls (440.94,94.79) and (438.82,96.92) .. (436.2,96.92) .. controls (433.58,96.92) and (431.45,94.79) .. (431.45,92.17) -- cycle ;
\draw  [fill={rgb, 255:red, 0; green, 0; blue, 0 }  ,fill opacity=1 ] (334.41,92.17) .. controls (334.41,89.56) and (336.54,87.43) .. (339.16,87.43) .. controls (341.78,87.43) and (343.9,89.56) .. (343.9,92.17) .. controls (343.9,94.79) and (341.78,96.92) .. (339.16,96.92) .. controls (336.54,96.92) and (334.41,94.79) .. (334.41,92.17) -- cycle ;
\draw    (83.18,179.02) -- (534.72,179.02) ;
\draw    (83.18,179.02) -- (83.18,158.75) ;
\draw    (534.72,179.02) -- (534.72,158.75) ;
\draw    (80.82,92.17) .. controls (80.23,47.7) and (175.23,46.7) .. (175.24,92.17) ;
\draw    (80.82,92.17) .. controls (80.23,55.7) and (175.23,54.7) .. (175.24,92.17) ;
\draw    (80.82,92.17) .. controls (80.23,63.7) and (175.23,62.7) .. (175.24,92.17) ;
\draw    (175.45,93.07) .. controls (175.45,137.55) and (80.44,137.29) .. (81.04,91.82) ;
\draw    (175.45,93.07) .. controls (175.55,129.55) and (80.55,129.29) .. (81.04,91.82) ;
\draw    (175.45,93.07) .. controls (175.66,121.55) and (80.65,121.29) .. (81.04,91.82) ;
\draw    (175.82,93.17) .. controls (175.23,48.7) and (270.23,47.7) .. (270.24,93.17) ;
\draw    (175.82,93.17) .. controls (175.23,56.7) and (270.23,55.7) .. (270.24,93.17) ;
\draw    (175.82,93.17) .. controls (175.23,64.7) and (270.23,63.7) .. (270.24,93.17) ;
\draw    (270.45,94.07) .. controls (270.45,138.55) and (175.44,138.29) .. (176.04,92.82) ;
\draw    (270.45,94.07) .. controls (270.55,130.55) and (175.55,130.29) .. (176.04,92.82) ;
\draw    (270.45,94.07) .. controls (270.66,122.55) and (175.65,122.29) .. (176.04,92.82) ;
\draw    (339.82,94.17) .. controls (339.23,49.7) and (434.23,48.7) .. (434.24,94.17) ;
\draw    (339.82,94.17) .. controls (339.23,57.7) and (434.23,56.7) .. (434.24,94.17) ;
\draw    (339.82,94.17) .. controls (339.23,65.7) and (434.23,64.7) .. (434.24,94.17) ;
\draw    (434.45,95.07) .. controls (434.45,139.55) and (339.44,139.29) .. (340.04,93.82) ;
\draw    (434.45,95.07) .. controls (434.55,131.55) and (339.55,131.29) .. (340.04,93.82) ;
\draw    (434.45,95.07) .. controls (434.66,123.55) and (339.65,123.29) .. (340.04,93.82) ;
\draw  [fill={rgb, 255:red, 0; green, 0; blue, 0 }  ,fill opacity=1 ] (528.45,92.17) .. controls (528.45,89.56) and (530.58,87.43) .. (533.2,87.43) .. controls (535.82,87.43) and (537.94,89.56) .. (537.94,92.17) .. controls (537.94,94.79) and (535.82,96.92) .. (533.2,96.92) .. controls (530.58,96.92) and (528.45,94.79) .. (528.45,92.17) -- cycle ;
\draw    (436.82,94.17) .. controls (436.23,49.7) and (531.23,48.7) .. (531.24,94.17) ;
\draw    (436.82,94.17) .. controls (436.23,57.7) and (531.23,56.7) .. (531.24,94.17) ;
\draw    (436.82,94.17) .. controls (436.23,65.7) and (531.23,64.7) .. (531.24,94.17) ;
\draw    (531.45,95.07) .. controls (531.45,139.55) and (436.44,139.29) .. (437.04,93.82) ;
\draw    (531.45,95.07) .. controls (531.55,131.55) and (436.55,131.29) .. (437.04,93.82) ;
\draw    (531.45,95.07) .. controls (531.66,123.55) and (436.65,123.29) .. (437.04,93.82) ;

\draw (69.96,98.89) node [anchor=north west][inner sep=0.75pt]    {$s$};
\draw (122.1,145.86) node [anchor=north west][inner sep=0.75pt]    {$K$};
\draw (305.8,184.46) node [anchor=north west][inner sep=0.75pt]    {$N$};
\draw (120.06,128.77) node [anchor=north west][inner sep=0.75pt]  [font=\small]  {$\mathbf{e}_{N}$};
\draw (55.18,29.06) node [anchor=north west][inner sep=0.75pt]  [font=\large]  {$G_{N}^{( 1)}$};
\draw (122.06,44.77) node [anchor=north west][inner sep=0.75pt]  [font=\small]  {$\mathbf{e}_{1}$};
\draw (133.5,82.5) node [anchor=north west][inner sep=0.75pt]  [rotate=-90] [align=left] {{\large ...}};
\draw (218.1,145.86) node [anchor=north west][inner sep=0.75pt]    {$K$};
\draw (382.1,145.86) node [anchor=north west][inner sep=0.75pt]    {$K$};
\draw (215.06,129.77) node [anchor=north west][inner sep=0.75pt]  [font=\small]  {$\mathbf{e}_{N}$};
\draw (217.06,45.77) node [anchor=north west][inner sep=0.75pt]  [font=\small]  {$\mathbf{e}_{1}$};
\draw (228.5,83.5) node [anchor=north west][inner sep=0.75pt]  [rotate=-90] [align=left] {{\large ...}};
\draw (379.06,130.77) node [anchor=north west][inner sep=0.75pt]  [font=\small]  {$\mathbf{e}_{N}$};
\draw (381.06,46.77) node [anchor=north west][inner sep=0.75pt]  [font=\small]  {$\mathbf{e}_{1}$};
\draw (392.5,84.5) node [anchor=north west][inner sep=0.75pt]  [rotate=-90] [align=left] {{\large ...}};
\draw (540.03,97.88) node [anchor=north west][inner sep=0.75pt]    {$t$};
\draw (479.1,145.86) node [anchor=north west][inner sep=0.75pt]    {$K$};
\draw (476.06,130.77) node [anchor=north west][inner sep=0.75pt]  [font=\small]  {$\mathbf{e}_{N}$};
\draw (478.06,46.77) node [anchor=north west][inner sep=0.75pt]  [font=\small]  {$\mathbf{e}_{1}$};
\draw (489.5,84.5) node [anchor=north west][inner sep=0.75pt]  [rotate=-90] [align=left] {{\large ...}};
\draw (291,90.4) node [anchor=north west][inner sep=0.75pt]    {$.\ .\ .$};

\end{tikzpicture}
    \tikzset{every picture/.style={line width=0.75pt}} 

\begin{tikzpicture}[x=0.75pt,y=0.75pt,yscale=-1,xscale=1]

\draw [color={rgb, 255:red, 74; green, 144; blue, 226 }  ,draw opacity=1 ][line width=2.25]    (436.82,94.17) .. controls (436.23,49.7) and (531.23,48.7) .. (531.24,94.17) ;
\draw [color={rgb, 255:red, 74; green, 144; blue, 226 }  ,draw opacity=1 ][line width=2.25]    (434.45,95.07) .. controls (434.45,139.55) and (339.44,139.29) .. (340.04,93.82) ;
\draw [color={rgb, 255:red, 74; green, 144; blue, 226 }  ,draw opacity=1 ][line width=2.25]    (175.82,93.17) .. controls (175.23,64.7) and (270.23,63.7) .. (270.24,93.17) ;
\draw [color={rgb, 255:red, 74; green, 144; blue, 226 }  ,draw opacity=1 ][line width=2.25]    (80.82,92.17) .. controls (80.23,55.7) and (175.23,54.7) .. (175.24,92.17) ;
\draw  [fill={rgb, 255:red, 0; green, 0; blue, 0 }  ,fill opacity=1 ] (76.08,92.17) .. controls (76.08,89.56) and (78.2,87.43) .. (80.82,87.43) .. controls (83.44,87.43) and (85.56,89.56) .. (85.56,92.17) .. controls (85.56,94.79) and (83.44,96.92) .. (80.82,96.92) .. controls (78.2,96.92) and (76.08,94.79) .. (76.08,92.17) -- cycle ;
\draw  [fill={rgb, 255:red, 0; green, 0; blue, 0 }  ,fill opacity=1 ] (170.49,92.17) .. controls (170.49,89.56) and (172.62,87.43) .. (175.24,87.43) .. controls (177.86,87.43) and (179.98,89.56) .. (179.98,92.17) .. controls (179.98,94.79) and (177.86,96.92) .. (175.24,96.92) .. controls (172.62,96.92) and (170.49,94.79) .. (170.49,92.17) -- cycle ;
\draw  [fill={rgb, 255:red, 0; green, 0; blue, 0 }  ,fill opacity=1 ] (266.22,92.17) .. controls (266.22,89.56) and (268.35,87.43) .. (270.96,87.43) .. controls (273.58,87.43) and (275.71,89.56) .. (275.71,92.17) .. controls (275.71,94.79) and (273.58,96.92) .. (270.96,96.92) .. controls (268.35,96.92) and (266.22,94.79) .. (266.22,92.17) -- cycle ;
\draw  [fill={rgb, 255:red, 0; green, 0; blue, 0 }  ,fill opacity=1 ] (431.45,92.17) .. controls (431.45,89.56) and (433.58,87.43) .. (436.2,87.43) .. controls (438.82,87.43) and (440.94,89.56) .. (440.94,92.17) .. controls (440.94,94.79) and (438.82,96.92) .. (436.2,96.92) .. controls (433.58,96.92) and (431.45,94.79) .. (431.45,92.17) -- cycle ;
\draw  [fill={rgb, 255:red, 0; green, 0; blue, 0 }  ,fill opacity=1 ] (334.41,92.17) .. controls (334.41,89.56) and (336.54,87.43) .. (339.16,87.43) .. controls (341.78,87.43) and (343.9,89.56) .. (343.9,92.17) .. controls (343.9,94.79) and (341.78,96.92) .. (339.16,96.92) .. controls (336.54,96.92) and (334.41,94.79) .. (334.41,92.17) -- cycle ;
\draw    (83.18,179.02) -- (534.72,179.02) ;
\draw    (83.18,179.02) -- (83.18,158.75) ;
\draw    (534.72,179.02) -- (534.72,158.75) ;
\draw    (80.82,92.17) .. controls (80.23,47.7) and (175.23,46.7) .. (175.24,92.17) ;
\draw    (80.82,92.17) .. controls (80.23,63.7) and (175.23,62.7) .. (175.24,92.17) ;
\draw    (175.45,93.07) .. controls (175.45,137.55) and (80.44,137.29) .. (81.04,91.82) ;
\draw    (175.45,93.07) .. controls (175.55,129.55) and (80.55,129.29) .. (81.04,91.82) ;
\draw    (175.45,93.07) .. controls (175.66,121.55) and (80.65,121.29) .. (81.04,91.82) ;
\draw    (175.82,93.17) .. controls (175.23,48.7) and (270.23,47.7) .. (270.24,93.17) ;
\draw    (175.82,93.17) .. controls (175.23,56.7) and (270.23,55.7) .. (270.24,93.17) ;
\draw    (270.45,94.07) .. controls (270.45,138.55) and (175.44,138.29) .. (176.04,92.82) ;
\draw    (270.45,94.07) .. controls (270.55,130.55) and (175.55,130.29) .. (176.04,92.82) ;
\draw    (270.45,94.07) .. controls (270.66,122.55) and (175.65,122.29) .. (176.04,92.82) ;
\draw    (339.82,94.17) .. controls (339.23,49.7) and (434.23,48.7) .. (434.24,94.17) ;
\draw    (339.82,94.17) .. controls (339.23,57.7) and (434.23,56.7) .. (434.24,94.17) ;
\draw    (339.82,94.17) .. controls (339.23,65.7) and (434.23,64.7) .. (434.24,94.17) ;
\draw    (434.45,95.07) .. controls (434.55,131.55) and (339.55,131.29) .. (340.04,93.82) ;
\draw    (434.45,95.07) .. controls (434.66,123.55) and (339.65,123.29) .. (340.04,93.82) ;
\draw  [fill={rgb, 255:red, 0; green, 0; blue, 0 }  ,fill opacity=1 ] (528.45,92.17) .. controls (528.45,89.56) and (530.58,87.43) .. (533.2,87.43) .. controls (535.82,87.43) and (537.94,89.56) .. (537.94,92.17) .. controls (537.94,94.79) and (535.82,96.92) .. (533.2,96.92) .. controls (530.58,96.92) and (528.45,94.79) .. (528.45,92.17) -- cycle ;
\draw    (436.82,94.17) .. controls (436.23,57.7) and (531.23,56.7) .. (531.24,94.17) ;
\draw    (436.82,94.17) .. controls (436.23,65.7) and (531.23,64.7) .. (531.24,94.17) ;
\draw    (531.45,95.07) .. controls (531.45,139.55) and (436.44,139.29) .. (437.04,93.82) ;
\draw    (531.45,95.07) .. controls (531.55,131.55) and (436.55,131.29) .. (437.04,93.82) ;
\draw    (531.45,95.07) .. controls (531.66,123.55) and (436.65,123.29) .. (437.04,93.82) ;

\draw (69.96,98.89) node [anchor=north west][inner sep=0.75pt]    {$s$};
\draw (122.1,145.86) node [anchor=north west][inner sep=0.75pt]    {$K$};
\draw (305.8,184.46) node [anchor=north west][inner sep=0.75pt]    {$N$};
\draw (120.06,128.77) node [anchor=north west][inner sep=0.75pt]  [font=\small]  {$\mathbf{e}_{N}$};
\draw (55.18,29.06) node [anchor=north west][inner sep=0.75pt]  [font=\large]  {$G_{N}^{( 1)}$};
\draw (122.06,44.77) node [anchor=north west][inner sep=0.75pt]  [font=\small]  {$\mathbf{e}_{1}$};
\draw (133.5,82.5) node [anchor=north west][inner sep=0.75pt]  [rotate=-90] [align=left] {{\large ...}};
\draw (218.1,145.86) node [anchor=north west][inner sep=0.75pt]    {$K$};
\draw (382.1,145.86) node [anchor=north west][inner sep=0.75pt]    {$K$};
\draw (215.06,129.77) node [anchor=north west][inner sep=0.75pt]  [font=\small]  {$\mathbf{e}_{N}$};
\draw (217.06,45.77) node [anchor=north west][inner sep=0.75pt]  [font=\small]  {$\mathbf{e}_{1}$};
\draw (228.5,83.5) node [anchor=north west][inner sep=0.75pt]  [rotate=-90] [align=left] {{\large ...}};
\draw (379.06,130.77) node [anchor=north west][inner sep=0.75pt]  [font=\small]  {$\mathbf{e}_{N}$};
\draw (381.06,46.77) node [anchor=north west][inner sep=0.75pt]  [font=\small]  {$\mathbf{e}_{1}$};
\draw (392.5,84.5) node [anchor=north west][inner sep=0.75pt]  [rotate=-90] [align=left] {{\large ...}};
\draw (540.03,97.88) node [anchor=north west][inner sep=0.75pt]    {$t$};
\draw (479.1,145.86) node [anchor=north west][inner sep=0.75pt]    {$K$};
\draw (476.06,130.77) node [anchor=north west][inner sep=0.75pt]  [font=\small]  {$\mathbf{e}_{N}$};
\draw (478.06,46.77) node [anchor=north west][inner sep=0.75pt]  [font=\small]  {$\mathbf{e}_{1}$};
\draw (489.5,84.5) node [anchor=north west][inner sep=0.75pt]  [rotate=-90] [align=left] {{\large ...}};
\draw (291,90.4) node [anchor=north west][inner sep=0.75pt]    {$.\ .\ .$};
\draw (23,76.4) node [anchor=north west][inner sep=0.75pt]  [font=\Large]  {$\textcolor[rgb]{0.29,0.56,0.89}{\mathbf{P_{2} :}}$};

\end{tikzpicture}
    \caption{The construction for the discussion below. At the top, we illustrate the structure of the instance graph. At the bottom, we highlight the structure of the path $P_2$ as an example.}
    \label{fig:third-tightness-proof}
\end{figure}

As earlier, we present a recursive construction. We start with $h=1$. Let $K$ be a graph on two vertices with $N$ edges between them, with edge costs $\mathbf{e}_1, \dots ,\mathbf{e}_N$ respectively, where $\mathbf{e}_i$ is the $i$-{th} standard basis vector. Define the graph $G_N^{(1)}$ to be the series composition of $N$ copies of $K$. Observe that the optimal $\ell_p^p$-cost of this instance is $N$: every path that uses exactly one edge of cost $\mathbf{e}_1$, one of cost $\mathbf{e}_2$, \dots, and one of cost $\mathbf{e}_N$ has $\ell_p^p$-cost $N$; every other $s$-$t$ path has $\ell_p^p$-cost strictly greater than $N$. Consider the following family of $\ell_p$-shortest $s$-$t$ paths for this instance. We let $P_i$ be the $s$-$t$ path that takes the $i$-{th} edge from the first copy of $K$ in $G_{N}^{(1)}$, and uses the $i+1$-{st} edge from the second copy of $K$, and so on; we use $(i - 1) + j$ edge from the $j$-{th} copy of $K$ for every $j\in [N]$ (here, we count edges modulo $N$ and identify edges number $i$ and $i+N$). The cost vector of each of these paths equals to the all-one vector in $\mathbb{R}^N$. Let $\cal D$ be the uniform distribution over these paths. Consider the solution to the SoS relaxation defined by the true expectation w.r.t. $\cal D$, the $\ell_p^p$-cost of this solution is $N$. 

We next argue that the SoS solution corresponding to \(\cal D\)
is an optimal solution for the SoS-relaxation described in \Cref{sec:sos-relaxation}.
Consider any other SoS solution to this program, given by \(\psE\).
Let \(f_i \defeq \sum_e c_e(i) x_e\). 
From the SoS relaxation's flow constraints, we get that \(\sum_{i=1}^N \psE\pb{f_i} = N\) (note that $\ell= N$ in our instance).
Using that \((p,0,\dots,0) \succeq (1, \ldots, 1)\) and applying \Cref{lemma:ps_major} and H\"older's inequality, we get the following lower bound on the SoS objective
\begin{align*}
\psE\pb{\sum_{i=1}^N f_i^p} = \sum_{i=1}^N \psE[f_i^p] \ge \sum_{i=1}^N \psE[f_i]^p \geq
N\left(\frac{\sum_{i=1}^N \psE[f_i])}{N}\right)^p = N
\end{align*}
This shows that no other SoS solution has a smaller cost than \(\bE_{\cal D}\).

Now, consider the path $P$ returned by running our rounding algorithm on this SoS solution. Each coordinate of the cost vector associated with $P$ follows a binomial distribution with parameters $N$ and $1/N$ and hence, as $N\to \infty$ the expected $\ell_p^p$-cost of $P$ tends to $N \E{Z_1^p} = N \bell_1(p)$, where $Z_1$ is a Poisson random variable with rate $1$. This shows that the bound is tight for $h=1$.

We will now lift this construction and show that our analysis is tight for arbitrary values of $h$. We define $G^{(h)}_N$ as a graph with vector costs in $\bigl(\R^N\bigr)^{\otimes h} \cong \R^{(N^h)}$. We first compose $N$ copies of $G^{(h-1)}_N$ in parallel to obtain a graph we call an $H_{h-1}$-block. Then we compose $N$ copies of $H_{h-1}$ in series to obtain $G^{(h)}_N$.
We define edge costs for this graph as follows. 
Consider an edge $e$ in an $H_{h-1}$-block such that it belongs to the $i$-th copy of $G_N^{h-1}$ within \(H_{h-1}\), and let \(c_e\) be its cost in $G_N^{(h-1)}$. Then we assign $e$ cost $\mathbf{e}_i\otimes c_e$ within \(G_N^{(h)}\) (we assign the same cost vector to corresponding edges in all copies of $H_{h-1}$).

Alternatively, edge costs can be described as follows. Consider the sequence of indices \(1 \le i_1, \ldots, i_h \le N\)
associated with \(e\): let \(i_1\) be the index of the copy of \(G_N^{(h-1)}\) within \(H_{h-1}\) that contains $e$,  \(i_2\) be the index of the copy of \(G_N^{(h-2)}\) within \(H_{h-2}\) that contains $e$ (here, \(H_{h-2}\) itself is within the \(i_1\)-th copy of \(G_N^{(h-1)}\)), and so on.
Then \(e\) has cost vector \(\mathbf{e}_{i_1} \otimes \cdots \otimes \mathbf{e}_{i_h}\).

Next, we describe the distribution \(\mathcal{D}\) over paths \(G_N^{(h)}\) for which we will analyze the rounding.
Recall that for \(G_N^{(1)}\), we defined \(N\) paths \(P_1, \ldots, P_N\)
such that the costs of edges on each $P_i$ are distinct basis vectors.
Let \((P_i)_{j}\) be the index of the \(j\)-th edge of \(P_i\) within the corresponding $K$-block.
For all indices \(1 \le i_1, \ldots, i_h \le N\), we inductively define path \(P = P_{i_1, \ldots, i_h}\) in $G^{(h)}_N$ as follows.
Path $P$ visits the \((P_{i_1})_1\)-st copy of \(G_N^{(h-1)}\) within the first \(H_{h-1}\)-block in \(G_N^{(h)}\), then the \((P_{i_1})_2\)-th copy of \(G_N^{(h-1)}\) within the second \(H_{h-1}\)-block, and so on up to the \((P_{i_1})_N\)-th copy of \(G_N^{(h-1)}\) within the \(N\)-th \(H_{h-1}\)-block.
Within each copy of \(G_N^{(h-1)}\) visited, \(P\) follows \(P_{i_2, \ldots, i_h}\). Observe that each \(P\) visits each standard basis vector of $\left(\R^{N}\right)^{\otimes h}\) exactly once and thus its total cost vector is $\left(\sum_{i=1}^N \mathbf{e}_i\right)^{\otimes h}$, the all-ones vector in the standard basis.
Thus, the $\ell_p^p$-cost of $P$ is $N^h$.
We then take \(\cal D\) to be uniform over this set of paths $P_{i_1, \ldots, i_h}$.
A similar argument as in the \(h=1\) case then holds to show that \(\cal D\) is an optimal solution for the SoS relaxation. 

Now, consider the output of running the rounding algorithm on \(\bE_{\mathcal{D}}\).
Due to linearity of expectation, we may consider each coordinate of the cost vector independently.
It is easy to see that the distribution of the cost for the \(i\)-th coordinate follows \(R_N^{(h)}\) as defined in the previous construction,
and the expectation of the \(p\)-th power of its cost again satisfies \(\E{Z_d^p} = \bell_h(p)\) in the limit.
We conclude that the expected $\ell_p^p$-cost of the path is $N^h \bell_h(p)$, whereas the SoS objective is $N^h$. This shows that the analysis in \Cref{lem:cost-analysis} is tight for all \(h\).
\newpage
\printbibliography

@article{laddha2022socially,
  title={Socially fair network design via iterative rounding},
  author={Laddha, Aditi and Singh, Mohit and Vempala, Santosh S},
  journal={Operations Research Letters},
  volume={50},
  number={5},
  pages={536--540},
  year={2022},
  publisher={Elsevier}
}

@article{kasperski2009approximability,
  title={On the approximability of minmax (regret) network optimization problems},
  author={Kasperski, Adam and Zieli{\'n}ski, Pawe{\l}},
  journal={Information Processing Letters},
  volume={109},
  number={5},
  pages={262--266},
  year={2009},
  publisher={Elsevier}
}

@incollection{ruzika2009survey,
  title={A survey on multiple objective minimum spanning tree problems},
  author={Ruzika, Stefan and Hamacher, Horst W},
  booktitle={Algorithmics of Large and Complex Networks: Design, Analysis, and Simulation},
  pages={104--116},
  year={2009},
  publisher={Springer}
}

@book{ehrgott2005multicriteria,
  title={Multicriteria optimization},
  author={Ehrgott, Matthias},
  volume={491},
  year={2005},
  publisher={Springer Science \& Business Media}
}

@article{hamacher1994spanning,
  title={On spanning tree problems with multiple objectives},
  author={Hamacher, Horst W and Ruhe, G{\"u}nter},
  journal={Annals of Operations Research},
  volume={52},
  number={4},
  pages={209--230},
  year={1994},
  publisher={Springer}
}

@article{kasperski2016robust,
  title={Robust discrete optimization under discrete and interval uncertainty: A survey},
  author={Kasperski, Adam and Zieli{\'n}ski, Pawe{\l}},
  journal={Robustness analysis in decision aiding, optimization, and analytics},
  pages={113--143},
  year={2016},
  publisher={Springer}
}

@inproceedings{li2023polylogarithmic,
  title={Polylogarithmic Approximation for Robust $s$-$t$ Path},
  author={Li, Shi and Xu, Chenyang and Zhang, Ruilong},
  booktitle={Proceedings of the International Colloquium on Automata, Languages, and Programming (to appear, also available as preprint arXiv:2305.16439)},
  year={2024}
}

@article{breugem2017analysis,
  title={Analysis of FPTASes for the multi-objective shortest path problem},
  author={Breugem, Thomas and Dollevoet, Twan and van den Heuvel, Wilco},
  journal={Computers \& Operations Research},
  volume={78},
  pages={44--58},
  year={2017},
  publisher={Elsevier}
}

@inproceedings{hansen1980bicriterion,
  title={Bicriterion path problems},
  author={Hansen, Pierre},
  booktitle={Multiple Criteria Decision Making Theory and Application: Proceedings of the Third Conference Hagen/K{\"o}nigswinter, West Germany, August 20--24, 1979},
  pages={109--127},
  year={1980},
  organization={Springer}
}

@inproceedings{ravi1993many,
  title={Many birds with one stone: Multi-objective approximation algorithms},
  author={Ravi, Ramamoorthi and Marathe, Madhav V and Ravi, Sekharipuram S and Rosenkrantz, Daniel J and Hunt III, Harry B},
  booktitle={Proceedings of the Symposium on Theory of Computing},
  pages={438--447},
  year={1993}
}

@article{martins1984multicriteria,
  title={On a multicriteria shortest path problem},
  author={Martins, Ernesto Queiros Vieira},
  journal={European Journal of Operational Research},
  volume={16},
  number={2},
  pages={236--245},
  year={1984},
  publisher={Elsevier}
}

@article{kasperski2018approximating,
  title={Approximating some network problems with scenarios},
  author={Kasperski, Adam and Zielinski, Pawel},
  journal={arXiv preprint arXiv:1806.08936},
  year={2018}
}

@inproceedings{chuzhoy2006hardness,
  title={Hardness of directed routing with congestion},
  author={Chuzhoy, Julia and Khanna, Sanjeev},
  booktitle={Electronic Colloquium on Computational Complexity (ECCC)},
  volume={13},
  number={109},
  year={2006}
}

@inproceedings{reich1989beyond,
  title={Beyond Steiner's problem: A VLSI oriented generalization},
  author={Reich, Gabriele and Widmayer, Peter},
  booktitle={International Workshop on Graph-theoretic Concepts in Computer Science},
  pages={196--210},
  year={1989},
  organization={Springer}
}

@article{garg2000polylogarithmic,
  title={A polylogarithmic approximation algorithm for the group Steiner tree problem},
  author={Garg, Naveen and Konjevod, Goran and Ravi, Ramamoorthi},
  journal={Journal of Algorithms},
  volume={37},
  number={1},
  pages={66--84},
  year={2000},
  publisher={Elsevier}
}

@inproceedings{charikar1998rounding,
  title={Rounding via trees: deterministic approximation algorithms for group Steiner trees and k-median},
  author={Charikar, Moses and Chekuri, Chandra and Goel, Ashish and Guha, Sudipto},
  booktitle={Proceedings of the Symposium on Theory of Computing},
  pages={114--123},
  year={1998}
}

@inproceedings{fakcharoenphol2003tight,
  title={A tight bound on approximating arbitrary metrics by tree metrics},
  author={Fakcharoenphol, Jittat and Rao, Satish and Talwar, Kunal},
  booktitle={Proceedings of the Symposium on Theory of Computing},
  pages={448--455},
  year={2003}
}

@article{de2011set,
  title={Set partitions and moments of random variables},
  author={de la Cal, Jes{\'u}s and C{\'a}rcamo, Javier},
  journal={Journal of mathematical analysis and applications},
  volume={378},
  number={1},
  pages={16--22},
  year={2011},
  publisher={Elsevier}
}

@book{stanley2023enumerative,
  title={Enumerative Combinatorics: Volume 2},
  author={Stanley, Richard},
  year={2023},
  publisher={Cambridge University Press}
}

@article{FKP19,
  title={Semialgebraic proofs and efficient algorithm design},
  author={Noah Fleming and Pravesh Kothari and Toniann Pitassi},
  journal={Foundations and Trends in Theoretical Computer Science},
  volume={14},
  number={1-2},
  pages={1--221},
  year={2019},
  issn = {1551-305X},
}

@article{M1902,
  title={Some methods applicable to identities and inequalities of symmetric algebraic functions of $n$ letters},
  author={Muirhead, Robert Franklin},
  journal={Proceedings of the Edinburgh Mathematical Society},
  volume={21},
  pages={144--162},
  year={1902},
  publisher={Cambridge University Press}
}

@article{K1932,
  title={Sur une in{\'e}galit{\'e} relative aux fonctions convexes},
  author={Karamata, Jovan},
  journal={Publications de l'Institut mathematique},
  volume={1},
  number={1},
  pages={145--147},
  year={1932},
  publisher={Matemati{\v{c}}ki institut SANU}
}

@book{micciancio2002complexity,
  title={Complexity of lattice problems: a cryptographic perspective},
  author={Micciancio, Daniele and Goldwasser, Shafi},
  volume={671},
  year={2002},
  publisher={Springer Science \& Business Media}
}

@article{dinur2003approximating,
  title={Approximating CVP to Within Almost-Polynomial Factors is NP-Hard},
  author={Dinur, Irit and Kindler, Guy and Safra, Shmuel},
  journal={Combinatorica},
  volume={23},
  number={2},
  pages={205--243},
  year={2003},
  publisher={Springer-Verlag GmbH}
}

@article{aissi2007approximation,
  title={Approximation of min--max and min--max regret versions of some combinatorial optimization problems},
  author={Aissi, Hassene and Bazgan, Cristina and Vanderpooten, Daniel},
  journal={European Journal of Operational Research},
  volume={179},
  number={2},
  pages={281--290},
  year={2007},
  publisher={Elsevier}
}

@inproceedings{RW17,
  title={On the Bit Complexity of Sum-of-Squares Proofs},
  author={Raghavendra, Prasad and Weitz, Benjamin},
  booktitle={Proceedings of the International Colloquium on Automata, Languages, and Programming},
  volume={80},
  pages={80},
  year={2017}
}

@inproceedings{BKS14,
    author={Boaz Barak and Jonathan Kelner and David Steurer},
    title = {Rounding sum-of-squares relaxations},
    booktitle = {Proceedings of the Symposium on Theory of Computing (also available as preprint arXiv:1312.6652)},
    pages = {31–40},
    title={Rounding Sum-of-Squares Relaxations},      
    year = {2014},

}

@article{charikar1999approximation,
  title={Approximation algorithms for directed Steiner problems},
  author={Charikar, Moses and Chekuri, Chandra and Cheung, To-Yat and Dai, Zuo and Goel, Ashish and Guha, Sudipto and Li, Ming},
  journal={Journal of Algorithms},
  volume={33},
  number={1},
  pages={73--91},
  year={1999},
  publisher={Elsevier}
}

@article{savitch1970,
  title={Relationships between nondeterministic and deterministic tape complexities},
  author={Savitch, Walter J},
  journal={Journal of computer and system sciences},
  volume={4},
  number={2},
  pages={177--192},
  year={1970},
  publisher={Elsevier}
}

@inproceedings{bilo2017simple,
  title={Simple Greedy Algorithms for Fundamental Multidimensional Graph Problems},
  author={Bil{\`o}, Vittorio and Caragiannis, Ioannis and Fanelli, Angelo and Flammini, Michele and Monaco, Gianpiero},
  booktitle={Proceedings of the International Colloquium on Automata, Languages, and Programming},
  year={2017},
  organization={Schloss Dagstuhl-Leibniz-Zentrum fuer Informatik}
}

@inproceedings{chekuri2005recursive,
  title={A Recursive Greedy Algorithm for Walks in Directed Graphs},
  author={Chekuri, Chandra and P\'al, Martin},
  booktitle={Proceedings of the Symposium on Foundations of Computer Science},
  pages={245--253},
  year={2005}
}

@inproceedings{chekuri2010dependent,
  title={Dependent randomized rounding via exchange properties of combinatorial structures},
  author={Chekuri, Chandra and Vondr{\'a}k, Jan and Zenklusen, Rico},
  booktitle={Proceedings of the Symposium on Foundations of Computer Science},
  pages={575--584},
  year={2010},
}

@article{kasperski2011approximability,
  title={On the approximability of robust spanning tree problems},
  author={Kasperski, Adam and Zieli{\'n}ski, Pawe{\l}},
  journal={Theoretical Computer Science},
  volume={412},
  number={4-5},
  pages={365--374},
  year={2011},
  publisher={Elsevier}
}

@article{dinur2002approximating,
  title={Approximating {SVP}$_\infty$ to within almost-polynomial factors is NP-hard},
  author={Dinur, Irit},
  journal={Theoretical Computer Science},
  volume={285},
  number={1},
  pages={55--71},
  year={2002},
  publisher={Elsevier}
}
\newpage

\appendix
\section{Properties of Multidimensional Bell Numbers}\label{sec:missing-proofs}
In this section, we prove a recurrence formula for $d$-dimensional Bell numbers (Lemma~\ref{claim:bell-recurrence}) and an upper bound on Bell numbers stated in \Cref{claim:bellNumberAsymptotics}.

\bellRecurrence*
\begin{proof}
To get a $d$ dimensional partition of $p$, we can first get a labelled partition $P_1$ of $p$ and then get a $(d-1)$-dimensional partition for each set $S\in P_i$. Now for every unlabeled partition $\lambda \prt p$, there are $\binom{p}{\lambda} / \left/ \prod_{j=1}^p \cnt(j, \lambda)!\right.$ ways to choose partition $P_1$ with parts of sizes $\lambda_i$ (in some order). The claim follows.
\end{proof}

\bellNumberAsymptotics*
\begin{proof}
From \eqref{eq:bell:recurrence}, we get $\bell_d(p) \leq \frac{p!}{x^p} (f_d(x) - 1)$ for every $x > 0$. By Stirling's approximation, $(p!)^{1/p} = \Theta(p)$. Thus, $\abell_d(p) = \bell_d(p)^{1/p} \leq O(p)\frac{(f_d(x)-1)^{1/p}}{x}$. We show that for $x = \log(1 + \frac{1}{2d+1})$, $f_i(x)\leq 1/(2d+1 -i)$ for $i\leq d$ by induction on $i$.
For $i=0$, $f_i(x) = f_0(x) = \exp(x) = 1 + \frac{1}{2d+1}$. Using the induction hypothesis and inequality $e^x \leq 1+ x+x^2$ for $x\in[0,1]$, we prove the statement for $i+1$,
$$f_{i+1}(x) = \exp(f_{i}(x) - 1) \leq \exp\left(\frac{1}{2d+1-i}\right) \leq 
\frac{1}{2d+1-i} \cdot \p{1+ \frac{1}{2d+1-i}} \leq \frac{1}{2d-i}.
$$
We get $\abell_d(p) = O(p\cdot d^{-1/p}/x) = O(pd^{1-1/p})$. 

Plugging in $x= \log^{j} p$ into the upper bound $\abell_d(p) \leq O(p) f_d(x)^{1/p}/x$, we get $f_d(x) \leq \exp(p)$ and thus $\abell_d(p) \leq O(p)\frac{O(1)}{\log^{(d)} p}$, as required.
\end{proof}

\section{Discussion on a Dijkstra-style algorithm for \texorpdfstring{$\ell_p$}{l-p} Shortest Path}\label{sec:dijkstra}
In this section, we discuss the Dijkstra-style algorithm for the $\ell_p$-Shortest Path problem given by Bilò, Caragiannis, Fanelli, Flammini, and Monaco~\cite{bilo2017simple}. Theorem 14 in their paper claims that the algorithm yields an $O(\min\{p,\log \ell \})$-approximation. However, we show that Theorem 14 is incorrect, and, in fact, the approximation factor obtained by this algorithm is at least $\Omega(n^{1-1/p})$.

We begin by recalling the algorithm in question, which we refer to as $\ell_p$-Dijkstra:\footnote{We remark that the original version of this algorithm was intended to be applicable in a broader setting and hence had an additional input parameter $E'$. However, when the algorithm is used to compute the $\ell_p$-Shortest Path, $E'=\varnothing$.}

\begin{algorithm}[H]
\caption{$\ell_p$-Dijkstra$(G,d,s,t,p)$~\cite{bilo2017simple}}\label{alg:dijkstra}
\begin{algorithmic}[1]
\For{each $v\in V$}
    \State $\mathsf{PATH}[v] \gets \emptyset$
    \State $\mathsf{DISTANCE}[v] \gets +\infty$
\EndFor
\State $\mathsf{DISTANCE}[s] \gets 0$
\State $Q \gets V$
\While{$Q$ is not empty}
    \State $v \gets \operatorname{argmin}_{u\in Q}\{\mathsf{DISTANCE}[u]\}$
    \For{each $(v,u)\in E$}
    \If{$\mathsf{DISTANCE}[u] > \norm{\boldsymbol{\ell} (\mathsf{PATH}[v]\cup (v,u))}_p$}
    \State $\mathsf{PATH}[u]\gets \mathsf{PATH}[v] \cup (v,u)$
    \State $\mathsf{DISTANCE}[u]\gets\norm{\boldsymbol{\ell} (\mathsf{PATH}[v] 
    \cup (v,u) }_p$
    \EndIf
    \EndFor
    \State $Q \gets Q\setminus \{v\}$
\EndWhile
\State \Return $R$
\end{algorithmic}
\end{algorithm}

Here, for every subset of edges $S$, $\boldsymbol{\ell}(S) = \sum_{e\in S} c_e$.
We now give an example instance illustrating that the result (Theorem 14 in \cite{bilo2017simple}) is false in general. For this example we set $\ell = n + 1$. 

\begin{proposition}\label{prop:unbounded-approx-ration}
    There exists a family $\{G_n\}_{n\in \mathbb{N}}$ of instances of the vector shortest path problem, where for all \(p \ge 1\) the approximation ratio of \Cref{alg:dijkstra} is at least \(\Omega(n^{1-1/p})\).
\end{proposition}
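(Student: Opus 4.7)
The plan is to exhibit an explicit family $\{G_n\}$ in which the path produced by Algorithm~\ref{alg:dijkstra} has $\ell_p$-cost $\Omega(n)$ while an obvious alternative $s$--$t$ path has $\ell_p$-cost $O(n^{1/p})$. The gap $\Omega(n^{1-1/p})$ is then immediate. The graphs $G_n$ will use $\ell = n+1$ coordinates and contain two natural routes from $s$ to $t$: a ``concentrated'' route whose cumulative cost vector lies (essentially) along $\mathbf e_1$ with magnitude $\Theta(n)$, and a ``spread'' route whose cumulative cost vector is $\sum_{i=2}^{n+1}\mathbf e_i$, giving $\ell_p$-norm $n^{1/p}$.

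The first step is to pin down $G_n$. I plan to use a layered structure with vertices $s=v_0,v_1,\dots,v_n=t$ where at each layer there are parallel edges (or very short parallel gadgets) between $v_{i-1}$ and $v_i$. One edge of each pair has cost vector aimed at coordinate~$1$, and the other introduces a fresh coordinate $\mathbf e_{i+1}$. The edge costs will be calibrated so that Dijkstra, using the strict $\ell_p$-norm inequality of Algorithm~\ref{alg:dijkstra}, consistently relaxes and commits to the concentrated edge first, either because its per-step $\ell_p$-norm is strictly smaller or because a fixed tie-breaking rule (e.g.\ edge-listing order) never overwrites an equal-norm incumbent.

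The second step is the analysis of Dijkstra on $G_n$. I plan to prove inductively that immediately after the algorithm extracts $v_i$, the field $\mathsf{PATH}[v_i]$ is the concentrated prefix whose cost vector is $i\cdot\mathbf e_1$ (up to the small perturbation coming from the design of the gadget). The inductive step consists in checking that, when we extract $v_{i-1}$ and relax its outgoing edges, the candidate extension in coordinate~$1$ has strictly smaller (or equal, with unfavourable tie-break) $\ell_p$-norm than every candidate that would introduce coordinate~$i+1$; this is a short calculation using $(i-1+c)^p$ versus $(i-1)^p + c'^p$ with the parameters we have chosen. Having the invariant, the algorithm's output has $\ell_p$-cost $\Omega(n)$, while exhibiting the spread $s$--$t$ path by construction gives optimum $O(n^{1/p})$, yielding the claimed $\Omega(n^{1-1/p})$ approximation ratio.

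The main obstacle is exactly the calibration step: once coordinate~$1$ has accumulated enough weight, a single edge on a fresh coordinate will beat an additional unit on coordinate~$1$ in $\ell_p$-norm, so a naive two-parallel-edge construction stops being fooled after $O(1)$ steps. To prevent this I expect to need either (i) edge weights that depend on the layer index so that the ``spread'' alternative remains just barely more expensive per step, (ii) a topology where the spread alternative is only accessible through auxiliary vertices whose $\ell_p$-distance from $s$ keeps them from being extracted in time to relax $v_i$, or (iii) a careful use of the strict inequality in the relaxation step so that deterministic ties are always broken in favour of the concentrated edge. Getting one of these refinements to hold simultaneously at \emph{every} layer $i=1,\dots,n$ is where the real work in the proof lies.
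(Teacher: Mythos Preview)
Your high-level target is correct: make the algorithm output a path of $\ell_p$-cost $\Theta(n)$ while a ``spread'' path of cost $n^{1/p}$ is available. But the construction you sketch does not work, and you have correctly diagnosed why. With parallel edges at each layer and the concentrated path living on a single coordinate $\mathbf e_1$, the relaxation at step $i$ compares $\|\,(i-1)\mathbf e_1+\mathbf e_1\,\|_p=i$ against $\|\,(i-1)\mathbf e_1+\mathbf e_{i+1}\,\|_p=((i-1)^p+1)^{1/p}$; the latter is strictly smaller for every $i\ge 2$, so Dijkstra switches to the fresh coordinate immediately and never accumulates $\Theta(n)$ on coordinate~$1$. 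None of your proposed patches repairs this: option~(iii) forces the two candidates to have equal norm, which kills the gap; option~(i) would require the concentrated edge at layer $i$ to have weight roughly $i^{1-p}$ so that the two increments match, but then the concentrated path has total cost $\sum_i i^{1-p}=O(1)$ for $p>2$, again destroying the gap; option~(ii) is too vague to evaluate.

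The paper's construction sidesteps the obstacle with two changes that your proposal is missing. First, the topology is not layer-by-layer parallel edges: instead there are \emph{direct} edges from $s=0$ to every vertex $i$, with cost $(1-\varepsilon)i\,\mathbf e_{i+1}$, together with the spread path $(i-1,i)$ of cost $\mathbf e_i$. Second, and this is the key idea, the direct edge to vertex $i$ is placed on the \emph{same} coordinate $\mathbf e_{i+1}$ as the spread edge $(i,i+1)$ leaving $i$. After the first extraction, $\mathsf{DISTANCE}[i]=(1-\varepsilon)i$ via the direct edge; when vertex $j$ is later extracted and we try to relax along $(j,j+1)$, the candidate cost vector is $((1-\varepsilon)j+1)\,\mathbf e_{j+1}$, whose norm $(1-\varepsilon)j+1$ exceeds the existing $\mathsf{DISTANCE}[j+1]=(1-\varepsilon)(j+1)$, so no update ever happens. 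The algorithm therefore returns the single edge $(0,n)$ of cost $(1-\varepsilon)n$, while the $E_1$ path costs $n^{1/p}$. The essential trick you are missing is this coordinate alignment, which turns the apparently-helpful spread step into an addition onto an already-large coordinate.
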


\begin{figure}[H]
    \centering
    \tikzset{every picture/.style={line width=0.75pt}} 

\begin{tikzpicture}[x=0.75pt,y=0.75pt,yscale=-1,xscale=1]

\draw  [fill={rgb, 255:red, 0; green, 0; blue, 0 }  ,fill opacity=1 ] (96.08,112.17) .. controls (96.08,109.56) and (98.2,107.43) .. (100.82,107.43) .. controls (103.44,107.43) and (105.56,109.56) .. (105.56,112.17) .. controls (105.56,114.79) and (103.44,116.92) .. (100.82,116.92) .. controls (98.2,116.92) and (96.08,114.79) .. (96.08,112.17) -- cycle ;
\draw  [fill={rgb, 255:red, 0; green, 0; blue, 0 }  ,fill opacity=1 ] (180.49,112.17) .. controls (180.49,109.56) and (182.62,107.43) .. (185.24,107.43) .. controls (187.86,107.43) and (189.98,109.56) .. (189.98,112.17) .. controls (189.98,114.79) and (187.86,116.92) .. (185.24,116.92) .. controls (182.62,116.92) and (180.49,114.79) .. (180.49,112.17) -- cycle ;
\draw  [fill={rgb, 255:red, 0; green, 0; blue, 0 }  ,fill opacity=1 ] (266.22,112.17) .. controls (266.22,109.56) and (268.35,107.43) .. (270.96,107.43) .. controls (273.58,107.43) and (275.71,109.56) .. (275.71,112.17) .. controls (275.71,114.79) and (273.58,116.92) .. (270.96,116.92) .. controls (268.35,116.92) and (266.22,114.79) .. (266.22,112.17) -- cycle ;
\draw  [fill={rgb, 255:red, 0; green, 0; blue, 0 }  ,fill opacity=1 ] (390.08,111.17) .. controls (390.08,108.56) and (392.2,106.43) .. (394.82,106.43) .. controls (397.44,106.43) and (399.56,108.56) .. (399.56,111.17) .. controls (399.56,113.79) and (397.44,115.92) .. (394.82,115.92) .. controls (392.2,115.92) and (390.08,113.79) .. (390.08,111.17) -- cycle ;
\draw  [fill={rgb, 255:red, 0; green, 0; blue, 0 }  ,fill opacity=1 ] (464.49,111.17) .. controls (464.49,108.56) and (466.62,106.43) .. (469.24,106.43) .. controls (471.86,106.43) and (473.98,108.56) .. (473.98,111.17) .. controls (473.98,113.79) and (471.86,115.92) .. (469.24,115.92) .. controls (466.62,115.92) and (464.49,113.79) .. (464.49,111.17) -- cycle ;
\draw    (100.82,112.17) -- (177.49,112.17) ;
\draw [shift={(180.49,112.17)}, rotate = 180] [fill={rgb, 255:red, 0; green, 0; blue, 0 }  ][line width=0.08]  [draw opacity=0] (8.93,-4.29) -- (0,0) -- (8.93,4.29) -- cycle    ;
\draw    (189.98,112.17) -- (263.22,112.17) ;
\draw [shift={(266.22,112.17)}, rotate = 180] [fill={rgb, 255:red, 0; green, 0; blue, 0 }  ][line width=0.08]  [draw opacity=0] (8.93,-4.29) -- (0,0) -- (8.93,4.29) -- cycle    ;
\draw    (394.82,112.17) -- (461.49,112.17) ;
\draw [shift={(464.49,111.17)}, rotate = 180] [fill={rgb, 255:red, 0; green, 0; blue, 0 }  ][line width=0.08]  [draw opacity=0] (8.93,-4.29) -- (0,0) -- (8.93,4.29) -- cycle    ;
\draw    (100.82,112.17) .. controls (118.22,101.69) and (154.9,102.01) .. (172.38,106.3) ;
\draw [shift={(175.23,107.08)}, rotate = 197.35] [fill={rgb, 255:red, 0; green, 0; blue, 0 }  ][line width=0.08]  [draw opacity=0] (8.93,-4.29) -- (0,0) -- (8.93,4.29) -- cycle    ;
\draw    (100.82,112.17) .. controls (117.97,71.7) and (228.84,61.39) .. (261.78,104.1) ;
\draw [shift={(263.23,106.08)}, rotate = 235.44] [fill={rgb, 255:red, 0; green, 0; blue, 0 }  ][line width=0.08]  [draw opacity=0] (8.93,-4.29) -- (0,0) -- (8.93,4.29) -- cycle    ;
\draw    (100.82,112.17) .. controls (106.18,31.89) and (381.64,29.13) .. (394.51,104.14) ;
\draw [shift={(394.82,106.43)}, rotate = 264.4] [fill={rgb, 255:red, 0; green, 0; blue, 0 }  ][line width=0.08]  [draw opacity=0] (8.93,-4.29) -- (0,0) -- (8.93,4.29) -- cycle    ;
\draw    (100.82,112.17) .. controls (98.26,-7.71) and (431.47,11.68) .. (468.21,103.62) ;
\draw [shift={(469.24,106.43)}, rotate = 251.81] [fill={rgb, 255:red, 0; green, 0; blue, 0 }  ][line width=0.08]  [draw opacity=0] (8.93,-4.29) -- (0,0) -- (8.93,4.29) -- cycle    ;
\draw    (350.08,112.17) -- (387.08,112.17) ; 
\draw [shift={(390.08,112.17)}, rotate = 180.12] [fill={rgb, 255:red, 0; green, 0; blue, 0 }  ][line width=0.08]  [draw opacity=0] (8.93,-4.29) -- (0,0) -- (8.93,4.29) -- cycle    ;
\draw    (275.71,112.17) -- (305.71,112.17) ; 

\draw (93.96,120.89) node [anchor=north west][inner sep=0.75pt]    {$0$};
\draw (180.96,120.89) node [anchor=north west][inner sep=0.75pt]    {$1$};
\draw (263.96,120.89) node [anchor=north west][inner sep=0.75pt]    {$2$};
\draw (313,108) node [anchor=north west][inner sep=0.75pt]   [align=center] {. . .};
\draw (379.96,120.89) node [anchor=north west][inner sep=0.75pt]    {$n-1$};
\draw (462.96,120.89) node [anchor=north west][inner sep=0.75pt]    {$n$};
\draw (131,116.4) node [anchor=north west][inner sep=0.75pt]  [font=\small]  {$\mathbf{e}_{1}$};
\draw (213,116.4) node [anchor=north west][inner sep=0.75pt]  [font=\small]  {$\mathbf{e}_{2}$};
\draw (425,116.4) node [anchor=north west][inner sep=0.75pt]  [font=\small]  {$\mathbf{e}_{n}$};
\draw (127,88.4) node [anchor=north west][inner sep=0.75pt]  [font=\scriptsize]  {$( 1-\varepsilon ) \, \mathbf{e}_{2}$};
\draw (185,62.4) node [anchor=north west][inner sep=0.75pt]  [font=\scriptsize]  {$2( 1-\varepsilon )\, \mathbf{e}_{3}$};
\draw (234,35.4) node [anchor=north west][inner sep=0.75pt]  [font=\scriptsize]  {$( 1-\varepsilon )(n-1)\, \mathbf{e}_{n}$};
\draw (287,15.4) node [anchor=north west][inner sep=0.75pt]  [font=\scriptsize]  {$( 1-\varepsilon ) n\, \mathbf{e}_{n+1}$};

\end{tikzpicture}
    \caption{The counterexample instance for the algorithm of Bilò, Caragiannis, Fanelli, Flammini, and Monaco.}
    \label{fig:dijkstras-counterexample}
\end{figure}
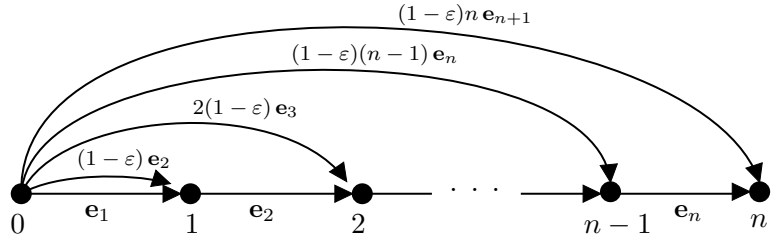
Before proving this proposition, we note that it is in clear contradiction with Theorem 14 in \cite{bilo2017simple}, which claims a constant factor approximation for $p=O(1)$.
\begin{proof}[Proof of Proposition~\ref{prop:unbounded-approx-ration}]
Fix some \(\eps > 0\) which will be chosen later.
We construct the following instance graph $G_n = G=(V,E)$ with $V=\{0, 1, ... , n \}$, $E = E_1 \dot{\cup} E_2$, and terminals $s = 0$ and $t= n$ with (see Figure~\ref{fig:dijkstras-counterexample})
\[
    E_1 := \{ (i-1,i) \mid i\in [n]\},
\]
and
\[
    E_2:= \{ (0,i) \mid i \in [n]\},
\]
Set the vector cost of each edge $(i-1,i) \in E_1$ to the basis vector $\mathbf{e}_i \in \R^{n+1}$, and the cost of each edge $(0,i) \in E_2$ to $(1-\varepsilon)\cdot i\cdot \mathbf{e}_{i+1}\in \R^{n+1}$. 

At the end of the first iteration of the \texttt{while} loop of the algorithm, we have
\[
   \mathsf{DISTANCE}[i] = (1-\varepsilon) i,
\]
and each $\mathsf{PATH}[i]$ (for $i\geq 1$) is simply edge $(0, i)$ from $E_2$. By construction of the weights, the algorithm will never update the values of $\mathsf{DISTANCE}[i]$ and $\mathsf{PATH}[i]$, and thus will return the path from $s=0$ to $t=n$ consisting of the single edge $(0,n)\in E_2$, whose cost is $\norm{(1-\varepsilon) n\, \mathbf{e}_{n+1}}_p = (1-\varepsilon) n$. However, the instance contains the path from $s$ to $t$ formed by the edges of $E_1$; the $\ell_p$ cost of this path is
\[
    \bigg\|\sum_{i\in [n]} \mathbf{e}_i\bigg\|_p  = n^{1/p}.
\]
Hence, the approximation factor of the algorithm on this instance is $(1-\varepsilon)n^{1-1/p}$, where $\varepsilon$ can be chosen arbitrarily small. This concludes the proof.
\end{proof}
\end{document}